\documentclass[12pt]{article}

\usepackage[a4paper,left=20mm,top=20mm,right=20mm,bottom=30mm]{geometry}
\usepackage{lastpage}
\usepackage{color}
\usepackage{amssymb, amsthm}
\usepackage{mathtools}
\usepackage{algorithmic,algorithm}

\usepackage{authblk}
\usepackage[numbers, sort&compress]{natbib}

\usepackage[colorlinks]{hyperref}
\hypersetup{
	citecolor=blue,
   linkcolor=red,
}
\usepackage[font=footnotesize,width=.85\textwidth,labelfont=bf]{caption}
\usepackage[mathcal]{euscript}
\usepackage{mathptmx}

%\definecolor{orange}{rgb}{1,0.5,0}
%\newcommand{\TODO}[1]{\begingroup\color{red}#1\endgroup}

%\newcommand{\OLD}[1]{\begingroup\tiny\color{yellow}#1\endgroup}
%% \newcommand{\mg}[1]{\begingroup\color{green}#1\endgroup}
%\newcommand{\PFS}[1]{\begingroup\color{cyan}#1\endgroup}
%%\newcommand{\NEW}[1]{\begingroup\color{blue}#1\endgroup}

\makeatletter	
\providecommand*{\cupdot}{%
  \mathbin{%
    \mathpalette\@cupdot{}%
  }%
}
\newcommand*{\@cupdot}[2]{%
  \ooalign{%
    $\m@th#1\cup$\cr
    \hidewidth$\m@th#1\cdot$\hidewidth
  }%
}
\providecommand*{\bigcupdot}{%
  \mathop{%
    \vphantom{\bigcup}%
    \mathpalette\@bigcupdot{}%
  }%
}
\newcommand*{\@bigcupdot}[2]{%
  \ooalign{%
    $\m@th#1\bigcup$\cr
    \sbox0{$#1\bigcup$}%
    \dimen@=\ht0 %
    \advance\dimen@ by -\dp0 %
    \sbox0{\scalebox{2}{$\m@th#1\cdot$}}%
    \advance\dimen@ by -\ht0 %
    \dimen@=.5\dimen@
    \hidewidth\raise\dimen@\box0\hidewidth
  }%
}
\newcommand{\raisemath}[1]{\mathpalette{\raisem@th{#1}}}
\newcommand{\raisem@th}[3]{\raisebox{#1}{$#2#3$}}
\providecommand*{\sqplus}{%
  \mathbin{%
    \mathpalette\@sqplus{}%
  }%
}
\newcommand*{\@sqplus}[2]{%
  \ooalign{%
    $\m@th#1\sqcup$\cr
    \hidewidth$\m@th#1\raisemath{1.4pt}{\scriptscriptstyle{+}}$\hidewidth
  }%
}
\makeatother

\DeclareMathOperator{\lca}{lca}
\newcommand{\DELTA}{\bigtriangleup}
\renewcommand{\P}{\ensuremath{\operatorname{\mathbb{P}}}}
\newcommand{\Pmax}{\ensuremath{\operatorname{\mathbb{P_{\max}}}}}
\DeclareMathOperator{\M}{MD}
\DeclareMathOperator{\MD}{MDs}
\newcommand{\MDT}{T_{\textrm{MDs}}}
\newcommand{\Fs}{\ensuremath{F^{\star}}}
\newcommand{\Gs}{\ensuremath{G^{\star}}}
\newcommand{\Ms}{\ensuremath{M^{\star}}}
\newcommand{\Ns}{\ensuremath{N^{\star}}}
\newcommand{\out}[1]{out$_{#1}$}
\newcommand{\merge}{\ensuremath{\sqplus}}
\newcommand{\mc}{\ensuremath{\mathcal}}
\newcommand{\rt}[1]{\ensuremath{\mathrm{#1}}}

\newtheorem{theorem}{Theorem}[section]
\newtheorem{lemma}[theorem]{Lemma}
\newtheorem{proposition}[theorem]{Proposition}
\newtheorem{definition}{Definition}[section]

\newtheorem{problem}{Problem} %[theorem]

\newtheorem{Obs}{Observation}

\providecommand{\keywords}[1]{\textbf{\textit{Keywords: }} #1}

\makeindex             % used for the subject index
                       % please use the style svind.ist with
                       % your makeindex program

%%%%%%%%%%%%%%%%%%%%%%%%%%%%%%%%%%%%%%%%%%%%%%%%%%%%%%%%%%%%%%%%%%%%%%%%%%%%%%%%%%%%%%%%%

\title{Cograph Editing: Merging Modules is equivalent to Editing $P_4$s}
%\title{Characterization of Gene Trees containing HGT and reconcilable Species Trees}
%\title{A Note on Reconilation Maps covering Horizontal Gene Transfers and 
%		Informative Triples to Derive Species Trees.}

\author[1]{Adrian Fritz}
\author[2]{Marc Hellmuth}
\author[3]{Peter F. Stadler} 
\author[4]{Nicolas Wieseke}

\affil[1]{Computational Biology of Infection Research, 
  Helmholtz Centre for Infection Research, Inhoffenstra{\ss}e 7, 
  D-38124 Braunschweig, Germany\\
	Email: \texttt{adrian.fritz@helmholtz-hzi.de}}

\affil[2]{Institute	 of Mathematics and Computer Science, University of Greifswald, Walther-
  Rathenau-Strasse 47, D-17487 Greifswald, Germany  \\ 	
	Email: \texttt{mhellmuth@mailbox.org}}

\affil[3]{Swarm Intelligence and Complex Systems Group, Department of
  Computer Science, Leipzig University, Augustusplatz 10, D-04109
  Leipzig, Germany  \\ 	
	Email: \texttt{wieseke@informatik.uni-leipzig.de}}

\affil[4]{Bioinformatics Group, Department of Computer Science, 
 Universit{\"a}t Leipzig, H{\"a}rtelstrasse 16-18, D-04107 Leipzig,
 Germany \\ 	
	Email: \texttt{studla@bioinf.uni-leipzig.de}}

\date{}

\setcounter{Maxaffil}{0}

\begin{document}

\maketitle

\abstract{ 
  The modular decomposition of a graph $G=(V,E)$ does not contain prime
  modules if and only if $G$ is a cograph, that is, if no quadruple of
  vertices induces a simple connected path $P_4$. The cograph editing
  problem consists in inserting into and deleting from $G$ a set $F$ of
  edges so that $H=(V,E\DELTA F)$ is a cograph and $|F|$ is minimum. This
  NP-hard combinatorial optimization problem has recently found
  applications, e.g., in the context of phylogenetics. Efficient heuristics
  are hence of practical importance. The simple characterization of
  cographs in terms of their modular decomposition suggests that instead of
  editing $G$ one could operate directly on the modular decomposition. We
  show here that editing the induced $P_4$s is equivalent to resolving
  prime modules by means of a suitable defined merge operation on the
  submodules. Moreover, we characterize so-called module-preserving edit
  sets and demonstrate that optimal pairwise sequences of module-preserving
  edit sets exist for every non-cograph. This eventually leads to an exact
  algorithm for the cograph editing problem as well as  
	fixed-parameter tractable (FPT) results when cograph editing is 
	parameterized by the so-called modular-width. In addition, we provide two
  heuristics with time complexity $O(|V|^3)$, resp., $O(|V|^2)$.
}

\smallskip
\noindent
\keywords{Cograph Editing, Modular Decomposition, Module Merge,
  Prime Modules,  $P_4$}

\sloppy
\section{Introduction}

Cographs are of particular interest in computer science because many
combinatorial optimization problems that are NP-complete for arbitrary
graphs become polynomial-time solvable on cographs
\cite{Corneil:85,BLS:99,Gao:13}. This makes them an attractive starting
point for constructing heuristics that are exact on cographs and yield
approximate solutions on other graphs. In this context it is of
considerable practical interest to determine ``how close'' an input graph
is to a cograph.

An independent motivation recently arose in biology, more precisely in
molecular phylogenetics
\cite{HWL+15,lafond2015orthology,DEML:16,LDEM:16,NMMWH:17,GAS+:17}. In
particular, \emph{orthology}, a key concept in evolutionary biology in
phylogenetics, is intimately tied to cographs \cite{HWL+15}. Two genes in a
pair of related species are said to be orthologous if their last common
ancestor was a speciation event.  The orthology relation on a set of genes
forms a cograph \cite{HHH+13}, see \cite{HW:16} for a detailed discussion
and \cite{HSW:16,NMMWH:17,GAS+:17,Geiss:19a,Geiss:19x} for generalizations of these concepts.
This relation can be estimated directly from biological sequence data,
albeit in a necessarily noisy form. Correcting such an initial estimate to
the nearest cograph thus has recently become a computational problem of
considerable practical interest in computational biology
\cite{HWL+15}. However, the (decision version of the) problem to edit a
given graph with a minimum number of edits into a cograph is NP-complete
\cite{Liu:11,Liu:12,HW:15,HW:16a}.

As noted already in \cite{Corneil:81}, the input for several combinatorial
optimization problems, such as exam scheduling or several variants of
clustering problems, is naturally expected to have few induced paths on
four vertices ($P_4$s). Since graphs without an induced $P_4$ are exactly
the cographs, available cograph editing algorithms focus on efficiently
removing $P_4$s, see e.g.\
\cite{Liu:11,Liu:12,GPP:10,GHPP:13,DMZ:17,TWLS:17}.  The FPT-algorithm
introduced in \cite{Liu:11,Liu:12} takes as input a graph that is first
edited to a so-called $P_4$-sparse graph and then to a cograph. The basic
strategy is to destroy the $P_4$s in the subgraphs by branching into six
cases that eventually leads to an $O (4.612^k |V|^{9/2})$-time algorithm,
where $k$ is the number of required edits.  Algorithms that compute the
kernel of the (parameterized) cograph editing problem \cite{GPP:10,GHPP:13}
as well as the exact $O(3^{|V|}|V|)$-time algorithm \cite{TWLS:17} use the
modular-decomposition tree as a guide to locate the forbidden $P_4$s using
the fact that these are associated with prime modules. Nevertheless, the
basic operation in all of these algorithms is still the direct destruction
of the $P_4$s.

Cographs are recursively defined as follows: $K_1$ is a cograph, the
  disjoint union of cographs is a cograph, and the join of cographs is a
  cograph. This recursive definition associates a vertex labeled tree, the
  cotree, with each cograph, where a vertex label ``0'' denotes a disjoint
  union, while ``1'' indicates the join of the children is formed. It has
  been shown in \cite{Corneil:81} that each cograph has a unique cotree and
  conversely, every tree whose interior vertices are labeled alternatingly
  defined a unique cograph. A simple recognition algorithm starts with an
  input graph $G$. If $G$ is connected, then a node labeled ``1'' is
  inserted into the tree, the complement graph $\overline{G}$ is formed and
  the algorithm proceeds recursively on the connected components of
  $\overline{G}$. If $G$ is not connected, the tree-node is labeled ``0'',
  and the algorithm recurses on the components of $G$. If both $G$ and
  $\overline{G}$ are connected, then $G$ is not a cograph, and the
  algorithm terminates with a negative answer. A natural heuristic for
  cograph editing proceeds by finding a minimal cut in $G$ or
  $\overline{G}$, removes the cut-edges and proceeds with the modified
  graph. This idea is pursued in \cite{DEML:16,Dondi:17}.

A very different heuristic for cograph modification was recently
  proposed by Crespelle \cite{Crespelle:19x}. It corrects the neighborhood
  of each vertex separately. More precisely, an inclusion-minimal cograph
  editing $H_k$ of the induced subgraph $G_k:=G[\{x_1,\dots x_k\}]$ is
  computed from the correction $H_{i-1}$ of $G_{i-1}$ in such a way that
  only edges involving $x_i$ are inserted or deleted. It has the useful
  property that in each step the number of inserted or deleted edges is
  minimum, and it inserts or deletes no more than $|E(G)|$ edges in
  total. It is based on a general property of single-vertex augmentations
  in hereditary graph classes that are stable under the addition of
  universal vertices and isolated vertices, see e.g.\ \cite{Ohtsuki:81}. A
  key advantage is that it has linear time complexity, i.e., $O(|V|+|E|)$.

Cotrees are a special case of the much more general modular
  decomposition tree, which is well-defined for every graph and conveys
detailed information about its structure in a hierarchical manner
\cite{gallai-67}.  A subset $M\subseteq V$ is called a module of a graph
$G=(V,E)$, if all members of $M$ share the same neighbors in
$V\setminus M$.  A prime module is a module that is characterized by the
property that both, the induced subgraph $G[M]$ and its complement
$\overline{G[M]}$, are connected subgraphs of $G$. Cographs play a
particular role in this context as their modular decompositions are of a
special form: they are characterized by the absence of prime modules. In
particular, the cotree of a cograph coincides with its modular
decomposition tree \cite{gallai-67}. It is natural to ask, therefore,
  whether the modular decomposition tree can be manipulated in a such a way
  that all prime modules of a given graph are converted into ``series'' or
  ``parallel'' modules for which either $G[M]$ and or $\overline{G[M]}$ is
  disconnected. This is equivalent to converting $G$ into a cograph $G^*$.
  Every minimum edit set clearly is inclusion-minimal. However, not
    every minimum edit set -- and in particular not every inclusion-minimal
    edit set -- respects the module structure of $G$. Fig.\
    \ref{fig:example} below shows a pertinent example. In contrast to the
    editing approach of \cite{Crespelle:19x}, we pursue an approach that
  is modul-preserving in the sense that each module of $G$ is also a module
  of the edited graph $G^*$. We argue that this property is desirable in
  the context of orthology detection, because the corrected modular
  decomposition tree, i.e., the cotree of $G^*$ has a direct interpretation
  as event-labeled gene tree \cite{HHH+13,HWL+15}.

An alternative way of looking at the connection between cographs and
  their modular decomposition trees is to interpret the destruction of all
  $P_4$s in a cograph editing algorithm as the resolution of all prime
  modules in the edited graph $G^*$.  This simple observation suggests to
  edit the modules of $G$. The min-cut approach of \cite{DEML:16} is one
  possibility to achieve this. Here, we consider the merging of modules
  instead.  Every union $\bigcup_{i\in I} M_i$ of the connected components
$M_1$, \dots, $M_k$ of the edited graph $G^*[M]$ or $\overline{G^*[M]}$
forms a module $G^*$, while $\bigcup_{i\in I} M_i$ was not a module in the
graph $G$ before editing.  In this situation, we say that ``\emph{the
  modules $M_i$, $i\in I$ of $G$ are merged w.r.t.\ $G^*$}''.  Vertices
within a module $\bigcup_{i\in I} M_i$ share the same neighbors in
$V\setminus (\bigcup_{i\in I} M_i)$. It is sufficient therefore to adjust
the neighbors of certain submodules $M_i$ of $M$ to merge the $M_i$ in a
way that resolves the prime module $M$ to obtain $G^*$. In this setting, it
seems natural to edit the modular decomposition tree of a graph directly
with the aim of converting it step-by-step into the closest modular
decomposition tree of a cograph. To this end, one would like to break up
individual prime modules by means of the module merge operation.

The key results of this contribution are that (1) every prime node $M$ can
be resolved by a sequence of \emph{pairwise} merges of modules that are
children of $M$ in the modular decomposition tree, and (2) optimal cograph
editing can be expressed as optimal \emph{pairwise} module merging. To
prove these statements, we start with an overview of important properties
on cographs and the modular decomposition (Section \ref{sec:basic} and
\ref{sec:cograph}).  In Section \ref{sec:MM}, we then show that so-called
module-preserving edit sets are characterized by resolving any prime node
by module-merges. In particular, we show that any graph has an optimal edit
set that can be entirely expressed by merging modules that are children of
prime modules in the modular decomposition tree.  Finally in Section
\ref{sec:algo}, we summarize the results and show how they can
be used for establishing efficient heuristics for the cograph editing
problem.  We provide an exact algorithm that allows to optimally edit a
cograph via pairwise module-merges. As by-product, we obtain an 
FPT algorithm for the case that cograph editing is 
parameterized by the so-called modular-width \cite{GLO:13,ALM+17}.
We finish this paper with a short discussion on
how the latter method can be used to obtain a simple $O(|V|^2)$-time
heuristic.

\section{Basic Definitions}
\label{sec:basic}

We consider simple finite undirected graphs $G=(V,E)$ without loops.  The
complement $\overline G$ of a graph $G=(V,E)$ has vertex set $V$ and edge
set $E(\overline G)=\{xy\mid x,y\in V, x\neq y, xy\notin E\}$.  The
notation $G\DELTA F$ is used to denote the graph $(V,E\DELTA F)$, where
$\DELTA$ denotes the symmetric difference.
The disjoint union $G\cupdot H$ of two distinct graphs $G=(V,E)$ and
$H=(W,F)$ is simply the graph $(V\cupdot W, E\cupdot F)$.  The join $G
\oplus H$ of $G$ and $H$ is defined as the graph $(V\cupdot W, E\cupdot
F\cupdot \{xy\mid x\in V, y\in W\})$.  A graph $H=(W,F)$ is a
\emph{subgraph} of a graph $G=(V,E)$, in symbols $H\subseteq G$, if
$W\subseteq V$ and $F\subseteq E$.  If $H \subseteq G$ and $xy \in F$ if
and only if $xy\in E$ for all $x,y\in W$, then $H$ is called an
\emph{induced} subgraph.  We will often denote an induced subgraph
$H=(W,F)$ by $G[W]$.  A \emph{connected component} of $G$ is a connected
induced subgraph that is maximal w.r.t.\ inclusion.  We write $G\simeq H$
for two isomorphic graphs $G$ and $H$.

Let $G=(V,E)$ be a graph. The \emph{neighborhood} $N(v)$ of $v\in V$ is
defined as $N(v)=\{x\mid vx\in E\}$.  If there is a risk of confusion we
will write $N_G(v)$ to indicate that the respective neighborhood is taken
w.r.t.\ $G$. The \emph{degree} $\deg(v)$ of a vertex is defined as $\deg(v)
= |N(v)|$.
   
A \emph{tree} is a connected graph that does not contain cycles. A
\emph{path} is a tree where every vertex has degree $1$ or $2$. A
\emph{rooted} tree $T=(V,E)$ is a tree with one distinguished vertex
$\rho\in V$. We distinguish two further types of vertices in a tree: the
\emph{leaves} which are distinct from the root and are contained in only
one edge and the \emph{inner} vertices which are contained in at least two
edges.  The first inner vertex $\lca(x,y)$ that lies on both unique paths
from two vertices $x$, resp., $y$ to the root, is called \emph{lowest
  common ancestor} of $x$ and $y$.  We say that a rooted tree $T$
\emph{displays} the \emph{triple} $\rt{xy|z}$ if $x,y,$ and $z$ are leaves
of $T$ and the path from $x$ to $y$ does not intersect the path from $z$ to
the root of $T$.

It is well-known that there is a one-to-one correspondence between
(isomorphism classes of) rooted trees on $V$ and so-called hierarchies on
$V$.  For a finite set $V$, a \emph{hierarchy on $V$} is a subset $\mathcal
C$ of the power set $\mathcal P(V)$ such that $(i)$ $V\in \mathcal{C}$,
$(ii)$ $\{x\}\in \mathcal{C}$ for all $x\in V$ and $(iii)$ $p\cap q\in \{p,
q, \emptyset\}$ for all $p, q\in \mathcal{C}$.
\begin{theorem}[\cite{sem-ste-03a}]
  Let $\mc C$ be a collection of non-empty subsets of $V$.  Then, there is
  a rooted tree $T=(W,E)$ on $V$ with $\mc C = \{L(v)\mid v\in W\}$ if
  and only if $\mc C$ is a hierarchy on $V$.
  \label{A:thm:hierarchy}
\end{theorem}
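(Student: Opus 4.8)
The plan is to prove the two directions separately, since one is essentially a routine verification and the other is the construction that carries the content. For the forward direction, suppose $T=(W,E)$ is a rooted tree on $V$ (so the leaves of $T$ are exactly the elements of $V$) and set $\mc C = \{L(v)\mid v\in W\}$, where $L(v)$ denotes the set of leaves below $v$. I would check the three hierarchy axioms directly: (i) $V = L(\rho)$ for the root $\rho$, so $V\in\mc C$; (ii) each leaf $x$ satisfies $L(x)=\{x\}$, so the singletons are in $\mc C$; (iii) for inner-or-leaf vertices $u,v$, if neither is an ancestor of the other in $T$ then the subtrees rooted at $u$ and $v$ are vertex-disjoint, hence $L(u)\cap L(v)=\emptyset$; and if $u$ is an ancestor of $v$ (or vice versa) then $L(v)\subseteq L(u)$, so $L(u)\cap L(v)=L(v)\in\{L(u),L(v)\}$. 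This gives $p\cap q\in\{p,q,\emptyset\}$ in all cases.

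For the converse, suppose $\mc C\subseteq\mc P(V)$ is a hierarchy. I would build the tree by letting the vertex set $W$ be $\mc C$ itself (optionally collapsing duplicated sets, or simply arguing on the distinct members), with $V$ as the root, and defining the parent of a non-maximal member $q\in\mc C$ to be the $\subseteq$-minimal member $p\in\mc C$ with $q\subsetneq p$. The key point to establish is that this parent is well defined and unique: if $q\subsetneq p_1$ and $q\subsetneq p_2$ with $p_1,p_2$ both minimal such, then $p_1\cap p_2\supseteq q\neq\emptyset$, so by axiom (iii) either $p_1\subseteq p_2$ or $p_2\subseteq p_1$; minimality then forces $p_1=p_2$. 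One then checks that the resulting parent function induces a rooted tree (each non-root vertex has a unique parent and iterating the parent map reaches $V$ in finitely many steps, since $\mc C$ is finite and the sets strictly grow), that the leaves are exactly the singletons $\{x\}$, hence correspond to $V$ via $\{x\}\mapsto x$, and finally that for each $v\in W=\mc C$ one recovers $L(v)=v$, i.e.\ the set of leaves below the node $v$ is precisely the subset of $V$ that $v$ names. This last identity follows by induction on the height of $v$: a child of $v$ in the constructed tree is a maximal proper subset of $v$ lying in $\mc C$, and axiom (iii) guarantees that the children of $v$ are pairwise disjoint and (together with the singletons of any elements of $v$ not covered) partition $v$.

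The main obstacle is the converse direction, and within it the verification that $L(v)=v$ for every node $v$ — in particular, ensuring that every element $x\in v$ actually appears in some descendant leaf. This uses axiom (ii) (the singleton $\{x\}$ is available in $\mc C$) together with axiom (iii) (so $\{x\}$ nests consistently beneath $v$), and one must argue that the chain $\{x\}\subsetneq\cdots\subsetneq v$ of members of $\mc C$ realizes $\{x\}$ as a genuine descendant of $v$ in the parent-tree. A secondary technical nuisance is the possibility that $\mc C$ does not literally contain every singleton or that sets repeat; these are handled by the hypotheses as stated (non-empty subsets, plus (i) and (ii)) and by identifying nodes with equal leaf sets. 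Since this is a known result from \cite{sem-ste-03a}, I would keep the write-up brief, citing it and sketching only the parent-well-definedness argument and the $L(v)=v$ induction, which are the two non-formal steps.
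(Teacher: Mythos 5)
The paper does not prove this statement at all: it is quoted as a known result from the cited reference (Semple and Steel), so there is no in-paper proof to compare against. Your proposal is the standard argument for the tree--hierarchy correspondence and it is correct: the forward direction is the routine laminarity check, and in the converse the two points you flag as the real content (uniqueness of the inclusion-minimal parent via axiom (iii), and the induction showing $L(v)=v$, using that every singleton lies in $\mc C$ so each element of $v$ is captured by some child) are exactly the steps that need care. One cosmetic remark: since a hierarchy is a set of subsets, duplicates cannot occur, and because all singletons are present no node of your constructed tree can have exactly one child, so the result really is a tree in the paper's sense; your worry about repeated sets only matters in the forward direction, where $\{L(v)\mid v\in W\}$ is read as a set.
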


\section{Cographs  and the Modular  Decomposition}
\label{sec:cograph}

\subsection{Introduction to Cographs}

\emph{Cographs} are defined as the class of graphs formed from a single
vertex under the closure of the operations of union and complementation,
namely: (i) a single-vertex graph $K_1$ is a cograph; (ii) the disjoint
union $G=(V_1\cupdot V_2,E_1\cupdot E_2)$ of cographs $G_1=(V_1,E_1)$ and
$G_2=(V_2,E_2)$ is a cograph; (iii) the complement $\overline{G}$ of a
cograph $G$ is a cograph.  
Condition (ii) can be replaced by the equivalent condition that the join 
$G_1 \oplus G_2$ is a cograph, since $G_1 \oplus G_2$ is the complement of 
$\overline{G}_1 \cupdot \overline{G}_2$.

The name cograph originates from
\emph{complement reducible graphs}, as by definition, cographs can be
``reduced'' by stepwise complementation of connected components to totally
disconnected graphs \cite{Seinsche1974}.

It is well-known that for each induced subgraph $H$ of a cograph $G$ either
$H$ is disconnected or its complement $\overline H$ is disconnected
\cite{BLS:99}.  This, in particular, allows representing the structure of a
cograph $G=(V,E)$ in an unambiguous way as a rooted tree $T=(W,F)$, called
\emph{cotree}: If the considered cograph is the single vertex graph $K_1$,
then output the tree $(\{u\}, \emptyset)$. Else if the given cograph $G$ is
connected, create an inner vertex $u$ in the cotree with label ``series'',
build the complement $\overline G$ and add the connected components of
$\overline G$ as children of $u$. If $G$ is not connected, then create an
inner vertex $u$ in the cotree with label ``parallel'' and add the
connected components of $G$ as children of $u$.  Proceed recursively on the
respective connected components that consists of more than one
vertex. Eventually, this cotree will have leaf-set $V\subseteq W$ and the inner
vertices $u\in W\setminus V$ are labeled with either ``parallel'' or
``series'' such that $xy\in E$ if and only if $u=\lca_T(x,y)$ is labeled
``series''. 

The complement of a path on four vertices $P_4$ is again a $P_4$ and hence,
such graphs are not cographs.  Intriguingly, cographs have indeed a quite
simple characterization as \emph{$P_4$-free} graphs, that is, no four
vertices induce a $P_4$.  A number of further equivalent characterizations
are given in \cite{BLS:99} and Theorem \ref{thm:cograph-characterize}.
Determining whether a graph is a cograph can be done in linear time
\cite{Corneil:85,BCHP:08}.

\subsection{Modules and the Modular Decomposition}

The concept of \emph{modular decompositions (MD)} is defined for arbitrary
graphs $G$ and allows us to present the structure of $G$ in the form of a
tree that generalizes the idea of cotrees. However, in general much more
information needs to be stored at the inner vertices of this tree if the
original graph has to be recovered.

The MD is based on the notion of modules. These are also known as
autonomous sets \cite{MR-84, Moh:85}, closed sets \cite{gallai-67}, clans
\cite{EGMS:94}, stable sets, clumps \cite{Blass:78} or externally related
sets \cite{HM-79}.  A \emph{module} of a given graph $G = (V,E)$ is a
subset $M\subseteq V$ with the property that for all vertices in $x,y\in M$
it holds that $N(y)\setminus M = N(x)\setminus M$. Therefore, the vertices
within a given module $M$ are not distinguishable by the part of their
neighborhoods that lie ``outside'' $M$.  We denote with $\M(G)$ the set of
all modules of $G=(V,E)$. Clearly, the vertex set $V$ and the singletons
$\{v\}$, $v\in V$ are modules, called \emph{trivial} modules. A graph $G$
is called \emph{prime} if it only contains trivial modules.  For a module
$M$ of $G$ and a vertex $v\in M$, we define the \out{M}-neighborhood of $v$
as $N(v)\setminus M$.  Since for any two vertices contained in $M$ the
\out{M}-neighborhoods are identical, we can equivalently define
$N(v)\setminus M$ as the \out{M}-neighborhood of the module $M$, where
$v\in M$.

We say that a module $M$ of $G$ is \emph{parallel}, resp.,
  \emph{series} if the induced subgraph $G[M]$, resp., the complement
  $\overline{G[M]}$ is disconnected.  If both $G[M]$ and $\overline{G[M]}$
  are connected, then $M$ is called \emph{prime}.

For a graph $G=(V,E)$ let $M$ and $M'$ be disjoint subsets of $V$. We say
that $M$ and $M'$ are adjacent (in $G$) if each vertex of $M$ is adjacent
to all vertices of $M'$; the sets are non-adjacent if none of the vertices
of $M$ is adjacent to a vertex of $M'$. Two disjoint modules are either
adjacent or non-adjacent \cite{Moh:85}. One can therefore define the
\emph{quotient graph} $G/P$ for an arbitrary subset $P\subseteq \M(G)$
of pairwise disjoint modules: $G/P$ has $P$ as its vertex set and
$M_iM_j\in E(G/P)$ if and only if $M_i$ and $M_j$ are adjacent in $G$.

A module $M$ is called \emph{strong} if for any module $M'\ne M$ either $M
\cap M' = \emptyset$, or $M \subseteq M'$, or $M' \subseteq M$, i.e., a
strong module does not \emph{overlap} any other module. The set of all strong
modules $\MD(G)\subseteq \M(G)$ thus forms a hierarchy, the so-called
\emph{modular decomposition} of $G$. While arbitrary modules of a graph
form a potentially exponential-sized family, the sub-family of
strong modules has size $O(|V(G)|)$ \cite{habib2004simple}.

Let $\P=\{M_1, \dots, M_k\}$ be a partition of the vertex set of a graph
$G=(V,E)$. If every $M_i\in \P$ is a module of $G$, then $\P$ is a
\emph{modular partition} of $G$. A non-trivial modular partition $\P=\{M_1,
\dots, M_k\}$ that contains only maximal (w.r.t\ inclusion) strong modules
is a \emph{maximal modular partition}.  We denote the (unique) maximal
modular partition of $G$ by $\Pmax(G)$. We will refer to the elements of
$\Pmax(G[M])$ as the \emph{the children of $M$}. This terminology is
motivated by the following considerations:

The hierarchical structure of $\MD(G)$ gives rise to a canonical tree
representation of $G$, which is usually called the \emph{modular
  decomposition tree} $\MDT(G)$ \cite{MR-84,HP:10}. The root of this tree
is the trivial module $V$ and its $|V|$ leaves are the trivial modules
$\{v\}$, $v \in V$. The set of leaves $L_v$ associated with the subtree
rooted at an inner vertex $v$ induces a strong module of $G$.  In
  other words, each inner vertex $v$ of $\MDT(G)$ \emph{represents} the
  strong module $L_v$.  An inner vertex $v$ is then labeled ``parallel'',
  ``series'', resp., ``prime'' if $L_v$ is a parallel, series, resp., prime
  module.  The strong module $L_v$ of the induced subgraph $G[L_v]$
associated to a vertex $v$ labeled ``prime'' is called prime module. Note,
the latter does not imply that the graph $G[L_v]$ is prime, however,
in all cases the quotient graph $G[L_v]/\Pmax(G[L_v])$ is prime \cite{HP:10}.  Similar to
cotrees it holds that $xy\in E$ if $u=\lca_{\MDT(G)}(xy)$ is labeled
``series'', and $xy\notin E$ if $u=\lca_{\MDT(G)}(xy)$ is labeled
``parallel''. However, to trace back the full structure of a given graph
$G$ from $\MDT(G)$ one has to store additionally the information of the
subgraph $G[L_v]/\Pmax(G[L_v])$ in the vertices $v$ labeled ``prime''.
Although, $\MD(G)\subseteq \M(G)$ does not represent all modules, we state
the following remarkable fact \cite{Moh:85,DGC:1997}: Any subset
$M\subseteq V$ is a module if and only if $M\in \MD(G)$ or $M$ is the union
of children of non-prime modules.  Thus, $\MDT(G)$ represents at least
implicitly all modules of $G$.

A simple polynomial time recursive algorithm to compute $\MDT(G)$ is as
follows \cite{HP:10}: (1) compute the maximal modular partition $\Pmax(G)$;
(2) label the root node according to the parallel, series or prime type of
$G$; (3) for each strong module $M$ of $\Pmax(G)$, compute $\MDT(G[M])$ and
attach it to the root node and proceed with $\Pmax(G[M])$.  The first
polynomial time algorithm to compute the modular decomposition is due to
Cowan \emph{et al.}\ \cite{CJS:72}, and it runs in $O(|V|^4)$.
Improvements are due to Habib and Maurer \cite{HM-79}, who proposed a cubic
time algorithm, and to M{\"u}ller and Spinrad \cite{MS:89}, who designed a
quadratic time algorithm. The first two linear time algorithms appeared
independently in 1994 \cite{CH:94, CS94}.  Since then a series of
simplified algorithms has been published, some running in linear time
\cite{DGC:01,CS:99,TCHP:08}, and others in almost linear time
\cite{DGC:01,CS:00,HPV:00, habib2004simple}.

For later reference we give the following lemma.

\begin{lemma}
  Let $M$ be a module of a graph $G=(V,E)$ and $M'\subseteq M$.  Then, $M'$
  is a module of $G[M]$ if and only if $M'$ is a module of $G$.  If $M$ is
  a strong module of $G$, then $M'$ is a strong module of $G[M]$ if and
  only if $M'$ is a strong module of $G$.  Moreover, if $M_1$ and $M_2$ are
  overlapping modules in $G$, then $M_1\setminus M_2$, $M_1\cap M_2$ and
  $M_1\cup M_2$ are also modules in $G$.
\label{lem:module-subg}
\end{lemma}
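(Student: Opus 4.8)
The plan is to prove each of the three assertions in Lemma~\ref{lem:module-subg} essentially by unwinding the definition of a module, i.e.\ the condition that all vertices of the set have the same neighborhood outside the set, and by using the overlap-freeness characterization of strong modules together with the fact that disjoint modules are either adjacent or non-adjacent.

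First I would prove the equivalence ``$M'\subseteq M$ is a module of $G[M]$ iff $M'$ is a module of $G$''. Let $x,y\in M'$. Since $M$ is a module of $G$, we have $N_G(x)\setminus M = N_G(y)\setminus M$, so $x$ and $y$ are indistinguishable by vertices outside $M$. Hence the only vertices that could distinguish $x$ and $y$ as a module of $G$ lie in $M\setminus M'$; but these are precisely the vertices that could distinguish them as a module of $G[M]$, since $N_{G[M]}(v) = N_G(v)\cap M$ for $v\in M$. Concretely, $N_G(x)\setminus M' = (N_G(x)\setminus M)\,\cupdot\,(N_{G[M]}(x)\setminus M')$, and the first term is the same for all $x\in M'$, so $N_G(x)\setminus M' = N_G(y)\setminus M'$ holds iff $N_{G[M]}(x)\setminus M' = N_{G[M]}(y)\setminus M'$. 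This gives the equivalence.

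Next, for the strong-module statement, suppose $M$ is a strong module of $G$ and $M'\subseteq M$. By the first part, being a module of $G[M]$ and being a module of $G$ coincide for subsets of $M$, so it remains to match up the overlap-freeness. If $M'$ is strong in $G$ it is certainly strong in $G[M]$, since the modules of $G[M]$ contained in $M$ form a subfamily of the modules of $G$ (again by the first part), so any overlap in $G[M]$ would be an overlap in $G$. Conversely, assume $M'$ is strong in $G[M]$ and suppose for contradiction that some module $N$ of $G$ overlaps $M'$. Since $M$ is strong and $M'\subseteq M$, and $N$ meets $M'\subseteq M$, by overlap-freeness of $M$ either $N\subseteq M$ or $M\subseteq N$ or $N\cap M = \emptyset$; the last is impossible as $N$ meets $M'$, and $M\subseteq N$ together with $N\setminus M'\neq\emptyset$ would still force $N\supseteq M\supseteq M'$, contradicting that $N$ overlaps $M'$ (overlap requires $M'\setminus N\neq\emptyset$). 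Hence $N\subseteq M$, so $N$ is a module of $G[M]$ by the first part, and $N$ overlaps $M'$ in $G[M]$ — contradicting that $M'$ is strong in $G[M]$.

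Finally, for the closure under Boolean operations, let $M_1,M_2$ be overlapping modules of $G$. The standard argument: pick $x,y\in M_1\cap M_2$ and a vertex $z$ outside the relevant set; using that each $M_i$ is a module one transports adjacency information across the nonempty ``private'' parts $M_1\setminus M_2$ and $M_2\setminus M_1$ and the common part $M_1\cap M_2$. For $M_1\cup M_2$: take $x,y\in M_1\cup M_2$ and $z\notin M_1\cup M_2$; if both $x,y$ lie in the same $M_i$ we are done, and otherwise say $x\in M_1$, $y\in M_2$; pick any $w\in M_1\cap M_2$ (nonempty by overlap); then $xz\in E \Leftrightarrow wz\in E$ (as $M_1$ is a module and $z\notin M_1$) $\Leftrightarrow yz\in E$ (as $M_2$ is a module and $z\notin M_2$). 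For $M_1\cap M_2$: take $x,y\in M_1\cap M_2$ and $z\notin M_1\cap M_2$; if $z\notin M_1$ then $xz\in E\Leftrightarrow yz\in E$ since $M_1$ is a module, and symmetrically if $z\notin M_2$; as $z$ is outside the intersection one of these holds. For $M_1\setminus M_2$: take $x,y\in M_1\setminus M_2$ and $z\notin M_1\setminus M_2$; if $z\notin M_1$ use that $M_1$ is a module; if $z\in M_1\cap M_2$, pick $w\in M_2\setminus M_1$ (nonempty by overlap) and note $xz\in E\Leftrightarrow xw\in E$ ($M_2$ module, $x\notin M_2$) and $xw\in E\Leftrightarrow yw\in E\Leftrightarrow yz\in E$, so in either case $x,y$ agree on $z$.

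I expect the only mildly delicate point to be the converse direction of the strong-module equivalence — making precise why a hypothetical overlapping module of $G$ can be pushed inside $M$ and thus contradicts strongness of $M'$ in $G[M]$; the remaining steps are routine applications of the module definition and of the dichotomy that disjoint modules are adjacent or non-adjacent. Everything here uses only Lemma-free facts already recalled in Section~\ref{sec:basic} and the preceding text, so no external input is needed.
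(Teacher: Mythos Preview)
Your proof is correct and, for the strong-module equivalence (the only part the paper actually proves), follows essentially the same contradiction argument as the paper: push a hypothetical overlapping module into $M$ using that $M$ is strong, then invoke the first part to get a contradiction in $G[M]$. The only difference is that the paper simply cites \cite{Moh:85} for the first and third assertions, whereas you spell out the (standard) direct arguments; your versions of these are fine.
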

\begin{proof}
  The first and the last statement were shown in \cite{Moh:85}. We prove
  the second statement.

  Let $M\in \MD(G)$. Assume that $M'\subseteq M$ is a strong module of
  $G[M]$. Assume for contradiction that $M'$ is not a strong module of
  $G$.  Hence $M'$ must overlap some module $M''$ in $G$. This module $M''$
  cannot be entirely contained in $M$ as otherwise, $M''$ and $M'$ overlap
  in $G[M]$ implying that $M'$ is not a strong module of $G[M]$, a
  contradiction.  But then $M$ and $M''$ must overlap, contradicting that
  $M$ is strong in $G$.
  
  If $M'\subseteq M$ is a strong module of $G$ then it does not
    overlap \emph{any} module of $G$. Since every module of $G[M]$ is also
    a module of $G$, there cannot be a module of $G[M]$ that overlaps $M'$
    and thus, $M'$ is a strong module of $G[M]$.
\end{proof}

\subsection{Useful Properties of Modular Partitions}

First, we briefly summarize the relationship between cographs $G$ and the
modular decomposition $\MD(G)$. While the first three items are from 
\cite{BLS:99,Corneil:81}, the proof of the %fifth
fourth item can be found in 
\cite{Boeckner:98,HHH+13}.

\begin{theorem}[\cite{BLS:99,Corneil:81,HHH+13}]
Let $G=(V,E)$ be an arbitrary graph. Then the following statements are
equivalent. 
\begin{enumerate}
\item $G$ is a cograph.  
\item $G$ does not contain induced paths on four vertices.% $P_4$.
\item $\MDT(G)$ is the cotree of $G$ and hence, has no inner vertices
  labeled with ``prime''.
\item Define a set $\mc R(G)$ of triples as follows: For any three vertices
  $x,y,z\in V$ we add the triple $\rt{xy|z}$ to $\mc R(G)$ if either
  $xz,yz\in E$ and $xy\notin E$ or $xz,yz\notin E$ and $xy\in E$.  There is
  a tree $T$ that displays all triples in $\mc R(G)$.
\end{enumerate}
\label{thm:cograph-characterize}
\end{theorem}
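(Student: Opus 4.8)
The plan is as follows. The equivalences $(1)\Leftrightarrow(2)\Leftrightarrow(3)$ are classical and may be cited: $(1)\Leftrightarrow(2)$ is Seinsche's $P_4$-freeness characterization, and $(1)\Leftrightarrow(3)$ is the statement that the modular decomposition of a cograph coincides with its cotree, i.e.\ that $\MDT(G)$ has no ``prime'' node precisely when $G$ is a cograph; both are in \cite{BLS:99,Corneil:81}. Hence the only genuinely new work is to fold statement $(4)$ into the cycle, which I would do by proving $(3)\Rightarrow(4)$ and $(4)\Rightarrow(2)$. The sole combinatorial device needed is an elementary fact about a rooted tree $T$: for any three leaves $x,y,z$, at most two of the lowest common ancestors $\lca_T(x,y),\lca_T(x,z),\lca_T(y,z)$ are distinct, and whenever exactly two are distinct, the one attained twice is a proper ancestor of the one attained once; equivalently, $T$ displays $\rt{xy|z}$ if and only if $\lca_T(x,z)=\lca_T(y,z)$ is a proper ancestor of $\lca_T(x,y)$.

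For $(3)\Rightarrow(4)$ assume $G$ is a cograph and let $T$ be its cotree, so by $(3)$ it coincides with $\MDT(G)$, carries only ``series''/``parallel'' labels, and satisfies $xy\in E$ iff $\lca_T(x,y)$ is labelled ``series''. I claim $T$ displays $\mc R(G)$. Take $\rt{xy|z}\in\mc R(G)$. If $xz,yz\in E$ and $xy\notin E$, then $\lca_T(x,z)$ and $\lca_T(y,z)$ are labelled ``series'' while $\lca_T(x,y)$ is labelled ``parallel''; in particular $\lca_T(x,y)$ differs from both of the former two, so by the tree fact above those two must coincide and form a proper ancestor of $\lca_T(x,y)$, i.e.\ $T$ displays $\rt{xy|z}$. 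The remaining case $xz,yz\notin E$, $xy\in E$ is handled identically after exchanging the labels ``series'' and ``parallel''. Hence the single tree $T$ displays every triple of $\mc R(G)$.

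For $(4)\Rightarrow(2)$ I would argue by contraposition. Suppose $G$ contains an induced path on the four vertices $a,b,c,d$ with edges $ab,bc,cd$ and non-edges $ac,bd,ad$. Checking the four vertex-triples one finds that $\mc R(G)$ then contains $\rt{ac|b}$, $\rt{ab|d}$, $\rt{bd|c}$ and $\rt{cd|a}$. Assume for contradiction that some tree $T$ displays all of $\mc R(G)$, in particular these four triples. From $\rt{ac|b}$ we get $\lca_T(a,b)=\lca_T(b,c)$; from $\rt{ab|d}$ we get that $\lca_T(a,b)$ is a proper descendant of $\lca_T(a,d)=\lca_T(b,d)$, hence $\lca_T(a,b)$ is a proper descendant of $\lca_T(b,d)$. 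But $\rt{bd|c}$ forces $\lca_T(b,d)$ to be a proper descendant of $\lca_T(b,c)=\lca_T(c,d)$, and since $\lca_T(b,c)=\lca_T(a,b)$ this says $\lca_T(b,d)$ is a proper descendant of $\lca_T(a,b)$ — contradicting the previous sentence. Hence no tree displays $\mc R(G)$, so $\neg(2)$ implies $\neg(4)$.

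I expect no deep obstacle here; the care lies in (i) the edge/non-edge bookkeeping in $(3)\Rightarrow(4)$ and (ii) verifying that an induced $P_4$ forces exactly the four triples above into $\mc R(G)$ and that these four are mutually incompatible. Everything else reduces to the single rooted-tree fact together with the cited modular-decomposition results, using Theorem~\ref{A:thm:hierarchy} to ensure the cotree employed above is well defined. If a direct argument bypassing item $(3)$ is preferred, the same two computations yield $(1)\Rightarrow(4)$ and $(4)\Rightarrow(1)$ verbatim, since a cograph has a cotree by construction.
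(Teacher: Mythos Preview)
Your argument is correct. The implications $(3)\Rightarrow(4)$ and $(4)\Rightarrow(2)$ are both sound: in the first you correctly exploit the ``at most two distinct lcas'' fact for three leaves together with the series/parallel labelling of the cotree, and in the second the four triples $\rt{ac|b},\rt{ab|d},\rt{bd|c},\rt{cd|a}$ are exactly those generated by an induced $P_4$ on $a\text{--}b\text{--}c\text{--}d$, and your chain of ancestor/descendant relations yields the contradiction as stated.

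There is, however, nothing to compare against: the paper does not supply its own proof of this theorem. It is stated as a known result, with items (1)--(3) attributed to \cite{BLS:99,Corneil:81} and item (4) to \cite{Boeckner:98,HHH+13}. Your write-up therefore goes beyond what the paper does, giving a short self-contained argument for the only non-classical equivalence. If anything, you might streamline by noting that only three of the four $P_4$-triples are needed for the contradiction (you never use $\rt{cd|a}$), but this is cosmetic.
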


For later explicit reference, we summarize in the next theorem several
results that we already implicitly referred to in the discussion above.

\begin{theorem}[\cite{GPP:10,HP:10,Moh:85}]
\label{thm:all}
The following statements are true for an arbitrary graph $G=(V,E)$:
\begin{itemize}
\item[(T1)]%HP:10
  The maximal modular partition $\Pmax(G)$ and the modular decomposition
  $\MD(G)$ of $G$ are unique.
\item[(T2)] %Former T5
  Let $\Pmax(G[M])$ be the maximal modular partition of $G[M]$, where $M$
  denotes a prime module of $G$ and $\P' \subsetneq \Pmax(G[M])$ be a
  proper subset of $\Pmax(G[M])$ with $|\P'|>1$.  Then,
  $\bigcup_{M'\in \P'}{M'} \notin \M(G)$.
\item[(T3)] %Former T6
  Any subset $M\subseteq V$ is a module of $G$ if and only if $M$ is either
  a strong module of $G$ or $M$ is the union of children of a non-prime
  module of $G$.
\end{itemize}
\end{theorem}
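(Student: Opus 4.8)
The three items are classical facts about modular decomposition; the plan is to deduce all of them from Lemma~\ref{lem:module-subg} together with the structural properties of $\MDT(G)$ recalled above, in particular that the quotient $G[L_v]/\Pmax(G[L_v])$ is prime when $v$ is a ``prime'' node and is edgeless, resp., complete, when $v$ is ``parallel'', resp., ``series''. For (T1) I would first note that $\MD(G)$ is by definition the family of \emph{all} strong modules of $G$, hence determined by $G$ alone, so there is nothing to prove beyond well-definedness. For $\Pmax(G)$ I would show that the inclusion-maximal strong modules that are proper subsets of $V$ partition $V$: any two distinct such modules are disjoint or nested by strongness, and nestedness contradicts maximality, so they are pairwise disjoint; and each $v\in V$ lies in one of them, since the strong modules containing $\{v\}$ and properly contained in $V$ form a finite non-empty chain (it contains $\{v\}$) and hence have a maximum. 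Uniqueness is then immediate because each $v$ lies in exactly one block.

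For (T2), let $M$ be a prime module, $\P'\subsetneq\Pmax(G[M])$ with $|\P'|>1$, and $N=\bigcup_{M'\in\P'}M'$. By Lemma~\ref{lem:module-subg} it suffices to show $N\notin\M(G[M])$. I would argue by contradiction: if $N$ were a module of $G[M]$, then, since two blocks of $\Pmax(G[M])$ are adjacent or non-adjacent as blocks and all vertices of $N$ share the same neighbours outside $N$, for any $B_1,B_2\in\P'$ and any $B_3\in\Pmax(G[M])\setminus\P'$ the pairs $B_1,B_3$ and $B_2,B_3$ would be adjacent simultaneously in the quotient $Q:=G[M]/\Pmax(G[M])$; hence $\P'$, read as a vertex subset of $Q$, would be a module of $Q$. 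But $Q$ is prime (because $M$ is a prime node), so its only modules are the singletons and $V(Q)$, which contradicts $1<|\P'|$ and $\P'\subsetneq\Pmax(G[M])$.

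For (T3), the ``if'' direction is easy: strong modules are modules by definition, and for a non-prime module $N$ the quotient $Q=G[N]/\Pmax(G[N])$ is edgeless or complete, so every vertex subset of $Q$ is a module of $Q$; translating back (adjacency being all-or-nothing between blocks) shows that every union of children of $N$ is a module of $G[N]$, hence of $G$ by Lemma~\ref{lem:module-subg}. For the ``only if'' direction I would take a module $M$ that is not strong, let $N$ be the \emph{smallest} strong module containing $M$ (well-defined, as the strong modules containing the non-empty set $M$ form a finite chain that contains $V$), and note $M\subsetneq N$ since $M$ is not strong. Then $M$ is a module of $G[N]$ by Lemma~\ref{lem:module-subg}; if some child $B\in\Pmax(G[N])$ met $M$ in a proper, non-empty subset of $B$, then $M\not\subseteq B$ (otherwise $B$ would be a strong module of $G$ with $M\subseteq B\subsetneq N$, contradicting minimality of $N$), so $B$ and $M$ would overlap in $G$, contradicting that the child $B$ is a strong module of $G$. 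Hence $M=\bigcup_{M'\in\P'}M'$ for some non-empty $\P'\subsetneq\Pmax(G[N])$, the properness coming from $M\subsetneq N$. Finally, (T2) forces $N$ to be non-prime: if it were prime, then $|\P'|\le 1$, i.e.\ $M$ would be a single child of $N$ and thus a strong module of $G$ — excluded. This exhibits $M$ as a union of children of the non-prime module $N$.

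The genuinely structural facts — prime quotients for ``prime'' nodes, edgeless/complete quotients for ``parallel''/``series'' nodes, and the dictionary between block-unions being modules of $G$ and subsets being modules of $G/\Pmax$ — I would simply quote from \cite{Moh:85,HP:10}; the only place where some care is required is the overlap bookkeeping in the ``only if'' part of (T3), namely checking that the minimal strong module $N\supseteq M$ is well-defined and that $M$ cannot partially overlap a child of $N$. That step is the main (rather mild) obstacle.
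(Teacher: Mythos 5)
Your argument is correct. Note, however, that the paper does not prove Theorem~\ref{thm:all} at all: it is imported wholesale from \cite{GPP:10,HP:10,Moh:85}, with only the remark that (T1) and (T3) are ``clear''. So there is no in-paper proof to compare against; what you have done is supply an actual derivation, reducing all three items to Lemma~\ref{lem:module-subg} plus the standard Gallai-type quotient facts (the quotient $G[M]/\Pmax(G[M])$ is prime for a prime node and complete/edgeless for a series/parallel node, and adjacency between disjoint modules is all-or-nothing). Each step checks out: the chain argument for (T1) (two strong modules containing a common vertex are nested, so the maximal proper ones partition $V$ — implicitly assuming $|V|\ge 2$ so that $\Pmax(G)$ exists); the reduction of (T2) to the primality of the quotient via the observation that $\P'$ would be a nontrivial module of $Q$ (your phrase ``adjacent simultaneously'' should read ``have the same adjacency status'', but the intent is clear); and the only-if direction of (T3), where the overlap bookkeeping with the children of the minimal strong module $N\supseteq M$, followed by an appeal to (T2) to rule out $N$ prime, is exactly the right way to close the argument. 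One small remark: your derivation makes explicit that the $N$ produced in (T3) is a strong non-prime module, which is the reading the paper needs later (e.g.\ in Lemma~\ref{lem:strong-prime-modules}); it also shows that (T3) genuinely depends on (T2), so the order in which you establish the items matters and you have it right.
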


Statements (T1) and (T3) are clear.  Statement (T2) explains that none of
the unions of elements of a maximal modular partition of $G[M]$ are modules
of $G$, whenever $M$ is a prime module of $G$.  Moreover, Statement (T3)
can be used to show that all prime modules are strong.

\begin{lemma}
  Let $G=(V,E)$ be an arbitrary graph. Then, every prime module $M$ of $G$
  is strong.
\label{lem:strong-prime-modules}
\end{lemma}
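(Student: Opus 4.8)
The statement to prove is: every prime module $M$ of $G$ is strong. The plan is to argue by contradiction, using statement (T3) of Theorem~\ref{thm:all}, which characterizes the modules of $G$ as exactly the strong modules together with the unions of children of non-prime modules. So suppose $M$ is a prime module of $G$ that is \emph{not} strong. Then by (T3), $M$ must be a union of children of some non-prime module $N$ of $G$; that is, there is a non-prime node $N$ in $\MDT(G)$ with $M = \bigcup_{M' \in \P'} M'$ for some $\P' \subseteq \Pmax(G[N])$. Since $M$ is non-trivial (a prime module has at least four vertices, hence is neither a singleton nor — at least generically — all of $V$; one should handle $M=V$ separately, but then $V$ is strong by definition, so there is nothing to prove), we have $|\P'| \ge 2$.

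The next step is to derive a contradiction from the fact that $M$ is prime while simultaneously being a union of at least two children of the non-prime module $N$. The key point is that $M \subseteq N$, so by Lemma~\ref{lem:module-subg} (first statement), $M$ is a module of $G[N]$ as well, and in fact $M$ is prime as a module of $G[N]$ (primeness of $G[M]$ and $\overline{G[M]}$ being connected is an internal property of the induced subgraph, independent of the ambient graph). Now $N$ is non-prime, so $N$ is either ``parallel'' or ``series''. If $N$ is parallel, then $G[N]$ is disconnected with the children $\Pmax(G[N])$ refining its connected components; a union $M$ of $\ge 2$ of these children either spans more than one connected component of $G[N]$ — in which case $G[M]$ is disconnected, contradicting that $M$ is prime — or $M$ is the union of more than one child lying inside a single connected component, but the children are exactly the connected components when $N$ is parallel, so this case cannot arise with $|\P'|\ge 2$ unless $M$ spans multiple components. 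Either way $G[M]$ is disconnected. Dually, if $N$ is series, then $\overline{G[N]}$ is disconnected and the same argument on the complement shows $\overline{G[M]}$ is disconnected. In both cases $M$ is not prime, a contradiction.

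The main obstacle — really the only delicate point — is getting the case analysis on the non-prime node $N$ exactly right: one must use the precise fact that for a parallel (resp.\ series) node the maximal modular partition \emph{is} the partition into connected components of $G[N]$ (resp.\ of $\overline{G[N]}$), so that any union of $\ge 2$ children is a disconnected induced subgraph (resp.\ has disconnected complement). This is precisely where statement (T2)'s spirit is invoked in reverse: unions of children of a prime node are not modules, whereas unions of children of a non-prime node \emph{are} modules but are always ``reducible'' (disconnected on one side). Once this is pinned down, the contradiction with the primeness of $M$ — which requires both $G[M]$ and $\overline{G[M]}$ connected — is immediate.

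Alternatively, and perhaps more cleanly, one can avoid the component bookkeeping entirely: if $M=\bigcup_{M'\in\P'}M'$ with $\P'\subseteq\Pmax(G[N])$, $|\P'|\ge 2$, and $N$ non-prime, then $\P'$ is itself a modular partition of $G[M]$ into $\ge 2$ proper modules, and moreover these modules are pairwise adjacent or pairwise non-adjacent in a consistent ``parallel'' or ``series'' fashion inherited from $N$ (since $N$ is not prime, $G[N]/\Pmax(G[N])$ is a complete or an edgeless graph, and $G[M]/\P'$ is an induced subgraph of it, hence also complete or edgeless). A complete quotient forces $\overline{G[M]}$ disconnected; an edgeless quotient forces $G[M]$ disconnected. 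This contradicts the primeness of $M$, completing the proof.
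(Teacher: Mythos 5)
Your proof is correct and follows essentially the same route as the paper's: assume $M$ is prime but not strong, invoke (T3) to write $M$ as a union of at least two children of a non-prime (parallel or series) module $N$, and conclude that $G[M]$ or $\overline{G[M]}$ is disconnected, contradicting primeness. One small repair: the reason the union must involve at least two children is not the non-triviality of $M$, but that a union of a single child would itself be an element of $\Pmax(G[N])$ and hence a strong module, contradicting the assumption that $M$ is not strong --- which is exactly how the paper justifies this step.
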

\begin{proof}	
  Let $M$ be a prime module of $G$.
  Assume for contradiction that  $M$ is not strong in $G$.
  Thm.\ \ref{thm:all}(T3) implies that $M$ is the union of children of
  some non-prime module $M'$.
  Hence, there is a subset $\mathcal M \subsetneq \Pmax(G[M'])$ such that
  $M = \bigcup_{M'_i \in \mathcal M} M'_i$.
	  Note that $1<|\mc{M}|<|\Pmax(G[M'])|$, 
  since all $M'_i \in \Pmax(G[M'])$ are strong and
  $\bigcup_{M'_i \in \Pmax(G[M'])} M'_i = M'$ is non-prime. 
  As $M'$ is non-prime, it is either parallel or series. Since
  $M$ is a non-trivial union of elements in $\Pmax(G[M'])$,
  $G[M]$ is either disconnected (if $M'$ is parallel) or its complement
  $\overline{G[M]}$ is disconnected (if $M'$ is series).
  But then $M$ is non-prime; a contradiction.
  Thus, $M$ is a strong module of $G$.
\end{proof}

In what follows, whenever the term ``prime module'' is used it refers therefore
always to a strong module. 

\subsection{Cograph Editing}
\label{sec:CE}

Given an arbitrary graph we are interested in understanding how the graph can
be edited into a cograph. 
A well-studied problem is the following optimization problem. 

\begin{problem}[Optimal Cograph Editing]
  Given a graph $G=(V,E)$. Find a set $F\subseteq {{V}\choose{2}}$ of
  minimum cardinality such that $H=(V,E\DELTA F)$ is a cograph.
\end{problem} 

We will simply call an edit set of minimum cardinality an \emph{optimal
  (cograph) edit set}. For later reference we recall Lemma 9 of
  \cite{HWL+15}. It shows that it suffices to solve the cograph editing
	problem separately for each connected component of $G$.
\begin{lemma}[\cite{HWL+15}]
  Let $G=(V,E)$ be a graph with optimal edit set $F$.  Then $\{x,y\}\in
  F\setminus E$ implies that $x$ and $y$ are located in the same connected
  component of $G$.
\label{lem:CC}
\end{lemma}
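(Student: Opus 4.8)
The plan is to prove the statement by a local exchange (swapping) argument: if an optimal edit set $F$ contained an inserted edge $\{x,y\}\in F\setminus E$ joining two different connected components of $G$, we would produce another edit set $F'$ with $|F'|<|F|$, contradicting optimality. First I would fix notation: write $H=(V,E\DELTA F)$ for the resulting cograph, and let $C_1,\dots,C_p$ be the connected components of $G$. Partition $F$ accordingly: let $F_{\mathrm{in}}$ be the pairs of $F$ that lie inside a single $C_i$, and $F_{\mathrm{cross}}=F\setminus F_{\mathrm{in}}$ the pairs joining two distinct components. Note $F_{\mathrm{cross}}\subseteq F\setminus E$ since there are no edges of $G$ between components, so every cross pair is an \emph{insertion}. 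The candidate replacement is $F'=F_{\mathrm{in}}$, i.e.\ simply discard all cross insertions; then $H'=(V,E\DELTA F')$ is exactly the disjoint union of the graphs $G[C_i]\DELTA (F\cap\binom{C_i}{2})$.

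The key step is to show $H'$ is again a cograph. Here I would use the characterization from Theorem~\ref{thm:cograph-characterize}: a graph is a cograph iff it is $P_4$-free. A disjoint union of cographs is a cograph (this is closure property (ii) in the definition of cographs), so it suffices to show each $G[C_i]\DELTA(F\cap\binom{C_i}{2})$ is a cograph. For this I would observe that $G[C_i]\DELTA(F\cap\binom{C_i}{2}) = H[C_i]$ is \emph{not} quite right in general --- editing cross edges can change adjacencies only between components, never within a component --- so in fact $H[C_i]$ as an induced subgraph of the cograph $H$ differs from $G[C_i]\DELTA(F\cap\binom{C_i}{2})$ only in the cross pairs, which do not touch $\binom{C_i}{2}$; hence $H[C_i]=G[C_i]\DELTA(F\cap\binom{C_i}{2})$ exactly. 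Since induced subgraphs of cographs are cographs (any induced subgraph of a $P_4$-free graph is $P_4$-free), each $H[C_i]$ is a cograph, and therefore $H'=\bigcupdot_i H[C_i]$ is a cograph.

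Finally, $|F'|=|F_{\mathrm{in}}|=|F|-|F_{\mathrm{cross}}|$. If $F\setminus E$ contained a pair $\{x,y\}$ in distinct components then $F_{\mathrm{cross}}\neq\emptyset$, so $|F'|<|F|$, contradicting that $F$ is optimal. Hence no such pair exists, which is exactly the claim.

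I do not expect a serious obstacle here; the one point deserving care is the claim that editing a cross pair does not alter any within-component adjacency, i.e.\ that $F\cap\binom{C_i}{2}$ captures all of $F$'s effect on $G[C_i]$ --- this is immediate since $\binom{C_i}{2}$ and the set of cross pairs are disjoint subsets of $\binom{V}{2}$, so $E(H)\cap\binom{C_i}{2}=(E\DELTA F)\cap\binom{C_i}{2}=(E\cap\binom{C_i}{2})\DELTA(F\cap\binom{C_i}{2})=E(G[C_i])\DELTA(F\cap\binom{C_i}{2})$. The only other thing to keep in mind is that the lemma asserts nothing about \emph{deleted} edges $\{x,y\}\in F\cap E$ (which automatically lie within a component anyway), so the statement is genuinely only about insertions, and the proof above addresses exactly that case.
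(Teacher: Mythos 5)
Your proof is correct, and the one point that needed care (that $H[C_i]=G[C_i]\DELTA(F\cap\binom{C_i}{2})$, so dropping all cross-component pairs leaves a disjoint union of induced subgraphs of the cograph $H$, hence a cograph, with strictly fewer edits) is handled properly. Note that the paper itself gives no proof of this lemma — it is quoted verbatim as Lemma~9 of \cite{HWL+15} — but your exchange argument is the standard one for this statement and matches what that reference does; there is no gap.
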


Let $G=(V,E)$ be a graph and $F$ be an arbitrary edit set that transforms
$G$ to the cograph $H=(V, E\DELTA F)$.  If any module of $G$ is a module of
$H$, then $F$ is called \emph{module-preserving}.

\begin{proposition}[\cite{GPP:10}]
  Every graph has an optimal module-preserving cograph edit set.
\label{prop:FM}
\end{proposition}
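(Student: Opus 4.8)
The plan is to argue by induction on the number of prime nodes in $\MDT(G)$, or equivalently to take an arbitrary optimal edit set $F$ and show it can be modified, without increasing $|F|$, so that it becomes module-preserving. The starting observation is that, by Lemma~\ref{lem:CC}, we may assume $G$ is connected, and by Theorem~\ref{thm:all}(T3) the modules of $G$ that fail to survive editing must involve prime nodes: if $\MDT(G)$ has no prime node then $G$ is already a cograph and $F=\emptyset$ works. So suppose $M$ is a prime module of $G$ with children $\Pmax(G[M])=\{M_1,\dots,M_k\}$. The key structural fact we want to exploit is that $M$ is strong (Lemma~\ref{lem:strong-prime-modules}), and that the \out{M}-neighborhood of $M$ is shared by all vertices of $M$; hence editing \emph{between} $M$ and $V\setminus M$ is ``wasteful'' in a sense we will make precise.

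First I would show that an optimal $F$ can be chosen so that it never edits a pair $\{x,y\}$ with $x\in M$, $y\notin M$ for any prime module $M$ that we process; intuitively, if $F$ adds or removes some edges across the boundary of $M$ but not others, then the vertices of $M$ no longer share a common \out{M}-neighborhood in $H$, so $M$ splits — but one can ``symmetrize'' by instead editing all or none of the boundary pairs for each child $M_i$, and a counting/exchange argument shows this does not increase $|F|$ while it does preserve $M$ (and all strong modules above $M$) as a module. The second step is to handle the interior: restricted to $M$, the edit set $F$ must turn $G[M]$ into a cograph, and we want the children $M_1,\dots,M_k$ to survive. Here one uses that within the quotient $G[M]/\Pmax(G[M])$, which is prime, resolving $M$ amounts to editing the quotient-level graph into a cograph — and one can always push the edits to be ``uniform'' on each pair of children $(M_i,M_j)$, i.e.\ either connect all of $M_i\times M_j$ or none of it, again by an exchange argument comparing the cost of a non-uniform pattern with the cheaper of the two uniform alternatives. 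Uniform editing between children automatically keeps each $M_i$ a module of $H$, and then recurse into each $G[M_i]$.

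The main obstacle I anticipate is the exchange/symmetrization argument: one has to verify that replacing a non-uniform edit pattern on a pair of blocks (either the boundary $M$ vs.\ $V\setminus M$, or a pair $M_i,M_j$ of children) by the cheaper uniform pattern (i) does not increase total edit cost and (ii) cannot \emph{re-introduce} a $P_4$ elsewhere or destroy cographness. Point (ii) is the delicate one: after making block-pairs uniform, the resulting graph $H'$ has the property that each $M_i$ is a module, so any induced $P_4$ in $H'$ either lies inside a single $G[M_i]$ (handled by recursion/minimality) or projects to an induced $P_4$ in the quotient $H'[M]/\{M_1,\dots,M_k\}$; so one reduces to showing the quotient is a cograph, which follows because the uniform pattern on the quotient is exactly the projection of some graph on $|M|$ vertices that $F$ already made $P_4$-free, and cographness of a graph whose every vertex is a module is inherited from the quotient (Theorem~\ref{thm:cograph-characterize}, using that cographs are closed under the module-substitution / $\oplus,\cupdot$ operations). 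Assembling these pieces — boundary symmetrization, then child-pair symmetrization, then induction into the children — yields an optimal edit set all of whose non-survivors are unions of children of non-prime nodes of $H$, i.e.\ a module-preserving optimal edit set.
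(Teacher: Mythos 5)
The paper does not actually prove this proposition; it is imported verbatim from \cite{GPP:10}. Your proposal, however, has a genuine gap at its central exchange step. For two distinct children $M_i,M_j$ of a prime module, $G$ itself is already uniform between them (two disjoint modules are either adjacent or non-adjacent), so of the two uniform alternatives one coincides with $G$ and costs $0$ edits while the other costs $|M_i|\,|M_j|$; any non-uniform pattern costs strictly between these two values. Hence ``the cheaper of the two uniform alternatives'' is \emph{always} ``undo every crossing edit''. Applied to all pairs, your exchange leaves the quotient $G[M]/\Pmax(G[M])$ untouched, and since that quotient is prime (and a prime graph on at least four vertices contains an induced $P_4$), the uniformized graph cannot be a cograph --- so point (i) of your exchange is vacuously true and point (ii) fails. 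Your fallback for (ii), namely that the uniformized quotient is ``the projection of some graph on $|M|$ vertices that $F$ already made $P_4$-free'', does not hold either: independent per-pair choices of a uniform pattern are in general not realized by restricting $H$ to one representative vertex per child, so the resulting quotient need not embed into $H$ as an induced subgraph and need not be $P_4$-free.

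The argument that actually works (and is the one in \cite{GPP:10}) uniformizes one module $M$ at a time rather than one pair at a time: choose $v\in M$ minimizing the number of edits of $F$ incident to $v$ that cross the boundary of $M$, and replace the outside-neighbourhood in $H$ of every $u\in M$ by that of $v$, keeping $H[M]$ and $H[V\setminus M]$ unchanged. Cographness of the modified graph follows because it is obtained by substituting the cograph $H[M]$ for the vertex $v$ of the cograph $H[(V\setminus M)\cup\{v\}]$, and cographs are closed under this substitution (any induced $P_4$ would either lie inside $M$ or meet $M$ in at most one vertex). The cost bound follows from averaging: $|M|\cdot\min_{u\in M}(\text{crossing cost of } u)\le\sum_{u\in M}(\text{crossing cost of } u)$. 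This single, globally consistent choice of representative is exactly what your per-pair scheme lacks. One must then still iterate over the modules of $G$ in a suitable order and verify that each substitution preserves the modules already secured, which is an additional step your sketch does not address.
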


The importance of module-preserving edit sets lies in the fact that
they update either all or none of the edges between
any two disjoint modules. It is worth noting that module preserving
edit sets do not necessarily preserve the property of modules being strong, 
i.e., although $M$ might be a strong module in $G$ it needs not to be
strong in $H$.

\begin{definition}
  Let $G=(V,E)$ be a graph, $F$ a cograph edit set for $G$ and $M$ be a
  non-trivial module of $G$.  The induced edit set in $G[M]$ is
	\[F[M]\coloneqq \{\{x,y\}\in F\mid x,y\in M\}.\]
\end{definition}

The next result shows that any optimal edit set $F$ can entirely expressed
by the union of edits within prime modules and that $F[M]$ is an optimal
edit set of $G[M]$ for any module $M$ of $G$.  Hence, if $F[M]$ is not
optimal for some module $M$ of $G$, then $F$ cannot be an optimal edit set
for $G$.

\begin{lemma}[\cite{GPP:10}]
  Let $G = (V,E)$ be an arbitrary graph and let $M$ be a non-trivial module
  of $G$.  If $F'$ is an optimal edit set of the induced subgraph $G[M]$
  and $F$ is an optimal edit set of $G$, then $(F\setminus F[M])\cup F'$
  is an optimal edit set of $G$. Thus, $|F[M]|=|F'|$. 

  Moreover, the optimal cograph editing problem can be solved independently
  on the prime modules of $G$.
\label{lem:edit-in-prime-modules}
\end{lemma}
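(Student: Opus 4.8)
The plan is to prove the two assertions in turn, relying on Proposition~\ref{prop:FM} and the structure of the modular decomposition. First I would fix an optimal module-preserving edit set $F$ of $G$ (which exists by Proposition~\ref{prop:FM}) and an optimal edit set $F'$ of $G[M]$; the goal is to show $\widetilde F \coloneqq (F\setminus F[M])\cup F'$ is again an optimal edit set of $G$. The key observation is that $F$ being module-preserving means the edited graph $H=G\DELTA F$ has $M$ as a module, so in $H$ all edges between $M$ and $V\setminus M$ are ``uniform'' (present or absent according to the \out{M}-neighborhood of $M$ in $H$); consequently $H[M] = G[M]\DELTA F[M]$ interacts with the rest of $H$ only through this uniform attachment. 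Replacing $F[M]$ by $F'$ changes only edges inside $M$, so $\widetilde H \coloneqq G\DELTA \widetilde F$ agrees with $H$ outside $M$ and on the $M$-to-outside edges, and $\widetilde H[M] = G[M]\DELTA F' = H'$ is a cograph since $F'$ edits $G[M]$ to a cograph.

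The main work is to argue that $\widetilde H$ is itself a cograph. Here I would use the $P_4$-free characterization (Theorem~\ref{thm:cograph-characterize}(2)): suppose $\widetilde H$ contains an induced $P_4$ on vertices $\{a,b,c,d\}$. If all four lie in $M$, this contradicts $\widetilde H[M]$ being a cograph. If none lie in $M$, then these vertices and all edges among them are the same in $\widetilde H$ as in $H$, contradicting $H$ being a cograph. The remaining case is a $P_4$ with some but not all vertices in $M$; because $M$ is a module of both $H$ and $\widetilde H$ and the attachment is uniform, any two vertices of $M$ appearing in the $P_4$ are indistinguishable from the outside, so one can swap one of them for the other (or for any vertex of $M$) — a standard module argument shows a $P_4$ meeting a module $M$ in $\ge 2$ vertices forces either a $P_4$ inside $M$ or a $P_4$ avoiding $M$ (replace the $\ge 2$ vertices of $M$ by a single representative and one obtains an induced $P_3$ or $P_2$ on the outside vertices plus the module, which lifts back to a forbidden configuration already present in $H$ or $G[M]$). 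Either way we reach a contradiction, so $\widetilde H$ is a cograph, hence $\widetilde F$ is an edit set.

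Optimality then follows by a counting/exchange argument. Since $\widetilde F$ differs from $F$ only on pairs inside $M$, we have $|\widetilde F| = |F| - |F[M]| + |F'|$. Because $F$ restricted to $M$ yields a cograph $H[M]$ (as $M$ is a module of the cograph $H$, and induced subgraphs of cographs are cographs), $F[M]$ is some edit set of $G[M]$, so $|F[M]|\ge |F'|$ by optimality of $F'$; hence $|\widetilde F|\le |F|$, and since $F$ is optimal, $|\widetilde F|=|F|$ and $\widetilde F$ is optimal. This simultaneously forces $|F[M]| = |F'|$, i.e.\ $F[M]$ is an optimal edit set of $G[M]$. Iterating this statement down the modular decomposition tree: any non-trivial module $M$ is a union of children of some node, and at a parallel or series node the edits split with no cost across the uniform cuts between children, while at a prime node the previous paragraph applies; a straightforward induction on $|M|$ then shows every optimal edit set of $G$ decomposes as a disjoint union of optimal edit sets of the prime modules (the $P_4$s living entirely within strong prime modules), which is the ``moreover'' clause. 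The main obstacle I anticipate is the mixed-$P_4$ case above — making precise that a $P_4$ straddling a module reduces to a forbidden subgraph already present before the swap — but this is exactly the classical fact that modules cannot be ``split'' by induced $P_4$s, and it is clean once one uses that the $M$-to-outside adjacency is uniform.
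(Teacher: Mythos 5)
First, a point of reference: the paper does not prove this lemma at all --- it is imported verbatim from \cite{GPP:10} --- so there is no in-paper argument to compare yours against, and your proof has to stand on its own. On its own terms it has one genuine gap. You begin by invoking Proposition~\ref{prop:FM} to assume that the optimal edit set $F$ is \emph{module-preserving}. But the lemma quantifies over \emph{every} optimal edit set $F$: it asserts that $(F\setminus F[M])\cup F'$ is optimal and that $|F[M]|=|F'|$ for an arbitrary optimal $F$, and optimal edit sets need not be module-preserving --- the paper's own Figure~\ref{fig:example} (graph $H_1$) exhibits exactly such an $F$. Module preservation is not a cosmetic convenience in your argument: the only step that excludes an induced $P_4$ of $\widetilde H=G\DELTA\widetilde F$ meeting $M$ in two or three vertices is the classical fact that an induced $P_4$ intersects a \emph{module} in $0$, $1$ or $4$ vertices, and to apply it you need $M$ to be a module of $\widetilde H$. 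Since the cross edges of $\widetilde H$ coincide with those of $H$, this holds if and only if $M$ is a module of $H$, i.e., if and only if $F$ preserves $M$. (As a side remark, your phrasing of that fact is slightly off: a $P_4$ simply cannot meet a module in $2$ or $3$ vertices, and when it meets it in exactly $1$ vertex no intra-$M$ pair is involved, so the same four vertices already induce a $P_4$ in $H$; there is no need to ``swap representatives''.)

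Without module preservation the straddling case genuinely resists your argument: if a $4$-set $S$ meets $M$ in exactly two vertices, then $H[S]$ and $\widetilde H[S]$ differ in at most the single intra-$M$ pair, and $P_4\DELTA\{\text{one pair}\}$ is in every case a cograph ($2K_2$, $P_3\cupdot K_1$, the paw, or $C_4$), so the hypothesis that $H[S]$ is $P_4$-free yields no contradiction from those four vertices alone. Hence as written you have established the lemma only for module-preserving optimal $F$ --- which, to be fair, is the only case the surrounding paper ever invokes, but it is not the stated claim. To obtain even $|F[M]|=|F'|$ for an arbitrary optimal $F$ one needs a different exchange, e.g.\ re-attach all of $M$ to $V\setminus M$ according to the vertex of $M$ incident to the fewest cross edits of $F$ (this costs at most the total number of cross edits and makes $M$ a module of the new graph, after which your $P_4$ analysis applies); but that construction alters the cross edits and therefore does not produce the specific set $(F\setminus F[M])\cup F'$, so the replacement assertion for arbitrary $F$ requires yet a further argument. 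The remaining ingredients of your write-up --- the counting step $|\widetilde F|=|F|-|F[M]|+|F'|\le|F|$ and the induction down the modular decomposition tree for the ``moreover'' clause --- are fine.
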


\section{Module Merge is the Key to Cograph Editing} 
%Module Merge Deletes All $P_4$s} %as the key for cograph editing
\label{sec:MM}

Since cographs are characterized by the absence of induced $P_4$s, we can
interpret every optimal cograph-editing method as the removal of all
$P_4$s in the input graph with a minimum number of edits. A natural
strategy is therefore to detect $P_4$s and then to decide which %ones
edges must
be edited.  Optimal edit sets are not necessarily unique, see Figure
\ref{fig:example}.  The computational difficulty arises from the fact that
editing an edge of a $P_4$ can produce new $P_4$s in the updated
graph. Hence, we cannot expect \emph{a priori} that local properties of $G$
alone will allow us to identify optimal edits.

By Lemma \ref{lem:edit-in-prime-modules}, on the other hand, it is
sufficient to edit within the prime modules. Moreover, as shown in Figure
\ref{fig:example}, there are strong modules $\Ms$ in an optimal edited
cograph $H$ that are not modules in $G$.  Hence, instead of editing $P_4$s
in $G$, it might suffice to edit the \out{M_i}-neighborhoods for some
$M_i\in \Pmax(G[M])$ in such a way that they result in the new module $\Ms$
in $H$.  The following definitions are important for the concepts of the
``module merge process'' that we will extensively use in our approach.

\begin{definition}[Module Merge]
  Let $G$ and $H$ be arbitrary graphs with $V(H) \subseteq V(G)$ and
    let $\M(G)$ and $\M(H)$ denote
  their corresponding sets of all modules.  Consider a set
  $\mathcal{M}\coloneqq\{M_1,M_2,\dots,M_k\}\subseteq \M(G)$. We say that
  the modules in $\mathcal{M}$ are \emph{merged (w.r.t.\ $H$)}
  if 
  \begin{itemize} \setlength{\itemsep}{0pt}
  \item[{(i)}] $M_1,\dots,M_k \in \M(H)$,
  \item[{(ii)}] $M \coloneqq \bigcup_{i=1}^k M_i \in \M(H)$, and 
  \item[{(iii)}] $M\notin \M(G)$.
  \end{itemize}
  We use the symbols $\merge$ and $\to$ as operations that allows us
    to illustrate the merge process, that is, we write
    $M_1\merge\ldots\merge M_k = \merge_{i=1}^k M_i \to M$, whenever the
    modules $M_1,M_2,\dots,M_k$ are merged w.r.t.\ $H$ resulting in the
    module $M=\bigcup_{i=1}^k M_i$ of $H$. 
  \label{def:module-merge}
\end{definition}
The intuition is that the modules $M_1$ through $M_k$ of $G$ are merged
into a single new module $M$, their union, that is present in $H$ but not
in $G$. This, in particular, already defines all required edits to
  adjust the neighbors of the vertices in $\bigcup_{i=1}^k M_i$ in $G$
  resulting in the module $M=\bigcup_{i=1}^k M_i$ of $H$. It is easy to
verify that $\merge$ is commutative in the sense that if
$M_1\merge M_2 \to M$, then $M_2\merge M_1 \to M$. However, $\merge$ is
not necessarily associative.  To see this, consider the example in
Fig.\ \ref{fig:pp}. Although the module $\Ms_3$ in $H$ is obtained by
merging the modules $\{3\}$, $\{4\}$ and $\{5\}$, the set $\{3\}\cup\{4\}$
does not form a module in $H$. Hence, although
$\{3\}\merge\{4\}\merge\{5\}\to \Ms_3$, it does not hold that
$\{3\}\merge\{4\}\to \Ms$ for any module $\Ms$ in $H$.  Thus, we cannot
write $(\{3\}\merge\{4\})\merge \{5\} \to \Ms_3$.

It follows directly from Def.\ \ref{def:module-merge} that every new module
$M$ of $H$ that is not a module of $G$ can be obtained by merging trivial
modules: simply set $M = \bigcup_{x\in M} \{x\}$ and
$\merge_{x\in M} \{x\} \to M$ follows immediately. In what follows we will
show, however, that each strong module of $H$ that is not a module of $G$
can be obtained by merging the modules that are contained in $\Pmax(G[M])$
of some prime module $M$ of $G$.

\begin{figure}[tbp]
  \begin{center}
    \includegraphics[viewport=1 618 437 798, clip,scale=.8]{./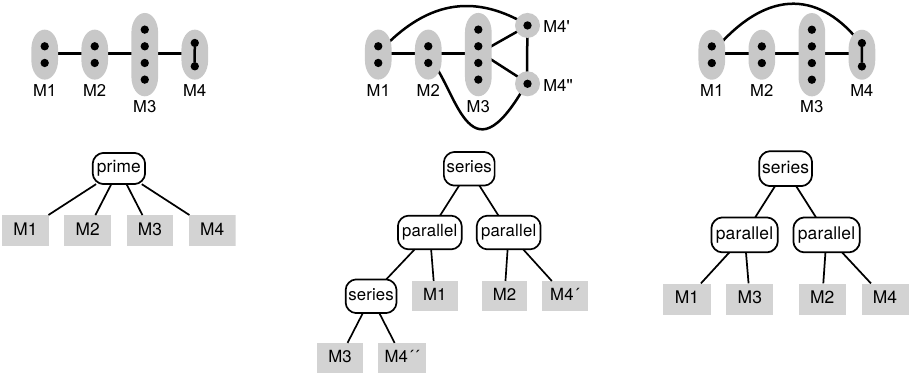}
    \caption{Shown are three graphs $G, H_1, H_2$ (from left to
      right). Maximal non-trivial strong modules are indicated by gray
      ovals in each graph and edges are used to show whether two modules
      are adjacent or not. The dots/lines within the modules are used to
      depict the vertices/edges within the modules. The modular
      decomposition trees up to a certain level are depicted below the
      respective graphs. This tree differs from the modular decomposition
      tree of the original graph $G, H_1$, and $H_2$, respectively, only
      from the unresolved leaf-nodes (gray boxes).  \newline \emph{Left:} A
      non-cograph $G$ is shown. The optimal edit set $F$ has cardinality
      $4$.  \emph{Center:} An optimal edited cograph $H_1=G\DELTA F$ is
      shown, where $F$ is not module-preserving.  None of the new strong
      modules of $H_1$ that are not modules of $G$ can be expressed as the
      union of the sets $M_1,\dots,M_4$.  Hence, none of these modules are
      the result of a module merge process.  \emph{Right:} An optimal
      edited cograph $H_2 = G\DELTA F$ is shown, where $F$ is
      module-preserving.  The new strong modules $M_1^{\star}, M_2^{\star}$
      of $H_2$ that are not modules of $G$ are two parallel
      modules. They can be written as $M_1^{\star} =M_1\cup M_3$ and
      $M_2^{\star}=M_2\cup M_4$. Hence, they are obtained by merging
      modules of $G$, in symbols: $M_1\protect\merge M_3\to M_1^{\star}$
      and $M_2\protect\merge M_4\to M_2^{\star}$. Here we have
      $F_{H_2}(M_1\protect\merge M_3\to M_1^{\star}) =
        F_{H_2}(M_2\protect\merge M_4\to M_2^{\star}) = F =\{\{x,y\}\mid
        x\in M_1, y\in M_4\}$.}
  \label{fig:example}
\end{center}
\end{figure}

When modules $M_1,\dots,M_k$ of $G$ are merged w.r.t.\ $H$ then all
vertices in $M = \bigcup_{h=1}^k M_h$ must have the same \out{M}-neighbors in
$H$, while at least two vertices $x\in M_i$, $y\in M_j$, $1\le i\neq j\le
k$ must have different \out{M}-neighbors in $G$.  Hence, in order to merge
these modules it is necessary to change the \out{M}-neighbors in $G$. 
However,  edit operations between vertices \emph{within} $M$ are dispensable for
obtaining the module $M$.
	
\begin{definition}[Module Merge Edit]
  Let $G=(V,E)$ be an arbitrary graph and $F$ be an arbitrary edit set
  resulting in the graph $H=(V,E\DELTA F)$. Let $H'\subseteq H$ be an
  induced subgraph of $H$ and suppose $M_1,\dots,M_k\in \M(G)$ are modules
  that have been merged w.r.t.\ $H'$ resulting in the module
  $M = \bigcup_{i=1}^k M_i \in \M(H')$. We then call
  \begin{equation}
    F_{H'}(\merge_{i=1}^k M_i \to M) \coloneqq \{\{x,v\}\in F\ \mid\ x\in M,	
    v\in V(H')\setminus M\} 
  \end{equation} 
  the \emph{module merge edits} associated with $\merge_{i=1}^k M_i \to M$ 
  w.r.t.\ $H'$. 
\end{definition}

By construction, the edit set $F_{H'}(\merge_{i=1}^k M_i \to M)$  comprises
exactly those (non)edges of $F$ that have been edited so that all vertices
in $M$ have the same \out{M}-neighborhood in $H'=(V',E')$. In particular, it
contains only (non)edges of $F$ that are not entirely contained in $G[M]$,
but entirely contained in $H'$. Moreover, (non)edges of $F$ that contain a
vertex in $V(H')$ and a vertex in $V\setminus V(H')$ are not considered as well.

Let $G$ be an arbitrary graph and $F$ be an optimal edit set that applied
to $G$ results in the cograph $H$. We will show that every optimal
module-preserving edit set $F$ can be expressed completely by means of
module merge edits. To this end, we will consider the prime modules
$M$ of the given graph $G$ (in particular certain children of $M$ that do
not share the same out-neighborhood) and adjust their out-neighbors to
obtain new modules. Illustrative examples
are given in Figure \ref{fig:example} and \ref{fig:pp}.

\begin{figure}[tbp]
  \begin{center}
   \includegraphics[viewport= 48 648 537 790, clip,scale=.7]{./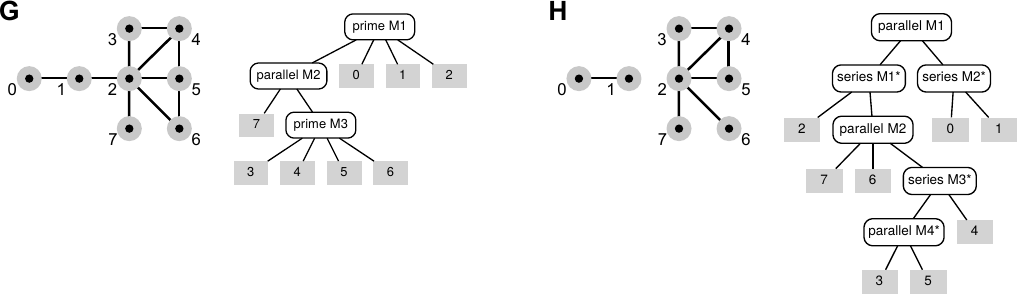}
   \caption{Illustration of the main results. Consider the non-cograph $G$,
     the cograph $H=G\DELTA F$ and the module-preserving edit set
     $F=\{\{1,2\}, \{5,6\}\}$.  The modular decomposition trees are
     depicted right to the respective graphs. \newline According to Theorem
     \ref{thm:Edit=Merge0}, both strong modules $M_1$ and $M_2$ of $H$ that
     are modules of $G$ are also strong modules of $G$ and correspond to
     the prime module $M_1$ and the parallel module $M_2$ in $G$,
     respectively.  Moreover, each of the new strong modules
     $\Ms_1,\dots, \Ms_4$ of $H$ are obtained by merging children of a
     prime module of $G$.  To be more precise, $\Ms_1$ and $\Ms_2$ are
     obtained by merging children of the prime module $M_1$ of $G$:
     $M_2\protect\merge\{2\}\to \Ms_1$ and
     $\{0\}\protect\merge\{1\}\to \Ms_2$ with
     $F_{H[M_1]}(M_2\protect\merge\{2\}\to \Ms_1) =
       F_{H[M_1]}(\{0\}\protect\merge\{1\}\to \Ms_2)=\{\{1,2\}\}$.  The
     new strong modules $\Ms_3$ and $\Ms_4$ are obtained by merging
     children of the prime module $M_3$ of $G$:
     $\{3\}\protect\merge\{5\}\to \Ms_4$ and
     $\{3\}\protect\merge\{4\}\protect\merge\{5\}\to \Ms_3$ with
     $F_{H[M_3]}(\{3\}\protect\merge\{5\}\to \Ms_4) =
       F_{H[M_3]}(\{3\}\protect\merge\{4\}\protect\merge\{5\}\to
       \Ms_3)=\{\{5,6\}\}$.  According to Cor.\ \ref{thm:Fmerge}, the set
     $F$ can be written as the union of the edit sets used to obtain the
     new merged modules of $H$.  \newline It is worth noting that not all
     strong modules of $G$ remain strong in $H$ (e.g.\ the prime module
     $M_3$) and that there are (non-strong) modules in $H$ (e.g.\ the
     module $\{6,7\}$) that are not obtained by merging children of prime
     modules of $G$.  }
  \label{fig:pp}
\end{center}
\end{figure}

We are now in the position to derive the main results,  Theorems
\ref{thm:Edit=Merge0}-\ref{thm:Fmerge}. We begin with showing that each
strong module of $H$ that is not a module of $G$ can be obtained
by merging some children of a particular chosen prime module of $G$. 
Moreover, we prove that any strong module of $H$ that is a module 
of $G$ must also be strong in $G$.

\begin{theorem}
  Let $G=(V,E)$ be an arbitrary graph, $F$ an optimal module-preserving
  cograph edit set, and $H=(V,E \DELTA F)$ the resulting cograph.  Then,
  each strong module $\Ms$ of $H$ is either a module in $G$ or there
    exists a prime module $P_{\Ms}$ of $G$ that contains $\Ms$ and is
    minimal w.r.t. inclusion, i.e., there is no prime module $P'_{\Ms}$ of
    $G$ with $\Ms \subseteq P'_{\Ms} \subsetneq P_{\Ms}$. In the latter
    case $\Ms$ is obtained by merging some modules in $\Pmax(G[P_{\Ms}])$.

Furthermore, if a strong module $\Ms$ of $H$ is a module in $G$, then $\Ms$
is a strong module of $G$.
\label{thm:Edit=Merge0}
\end{theorem}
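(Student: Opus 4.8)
The plan is to split the statement into its two assertions and handle them in the order they are stated, reducing everything to the structure of the modular decomposition tree and the module-preservation hypothesis. Let $\Ms$ be a strong module of $H$. If $\Ms$ is already a module of $G$, we address the second assertion; otherwise we must produce the prime module $P_{\Ms}$ and exhibit $\Ms$ as a merge of children of $P_{\Ms}$.

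For the first assertion, suppose $\Ms \in \M(H)\setminus \M(G)$. I would first argue that the family of prime modules of $G$ that contain $\Ms$ is nonempty: the whole vertex set $V$ is a module of both $G$ and $H$, and if $V$ were non-prime in $G$ (i.e.\ parallel or series) then, using Lemma~\ref{lem:CC} together with the fact that $F$ is module-preserving, one shows that the connected components of $G$ (or of $\overline G$) are still modules of $H$ and contain $\Ms$ — so by descending along the tree $\MDT(G)$ we reach a smallest module $P$ of $G$ containing $\Ms$ that is either prime or equals $\Ms$ itself; since $\Ms\notin\M(G)$, this smallest module cannot be $\Ms$, and moreover it cannot be non-prime because, by (T3), a non-prime module containing $\Ms$ strictly would have $\Ms$ a union of some but not all of its children (else $\Ms$ equals a child, contradicting minimality or $\Ms\notin\M(G)$), making $\Ms$ a module of $G$ by (T3) — contradiction. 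Hence this smallest containing module $P_{\Ms}$ is prime, which also gives minimality. Now apply Lemma~\ref{lem:edit-in-prime-modules}: $F[P_{\Ms}]$ is an optimal (and still module-preserving) edit set for $G[P_{\Ms}]$, so I may work inside $G[P_{\Ms}]$ with $H[P_{\Ms}] = G[P_{\Ms}]\DELTA F[P_{\Ms}]$ a cograph, and $\Ms$ is a strong module of $H[P_{\Ms}]$ by Lemma~\ref{lem:module-subg}. Writing $\Pmax(G[P_{\Ms}]) = \{M_1,\dots,M_r\}$, each $M_i$ is a module of $G$, hence (module-preservation) a module of $H$, hence a module of $H[P_{\Ms}]$; I then want $\Ms = \bigcup_{i\in I} M_i$ for some $I$. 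The key point is that $\Ms$, being a module of $H[P_{\Ms}]$, cannot properly overlap any $M_i$: if it did, $\Ms\cap M_i$ would be a module of $G[M_i]\subseteq G$ (Lemma~\ref{lem:module-subg}), and one can derive a contradiction with either strongness of $\Ms$ in $H$ or with $M_i$ being a child in $\Pmax$. Thus each $M_i$ is either inside $\Ms$ or disjoint from $\Ms$; since the $M_i$ partition $P_{\Ms}$ and $\Ms\subseteq P_{\Ms}$, we get $\Ms=\bigcup_{i\in I}M_i$. Because $\Ms\notin\M(G)$ we have $|I|>1$ (and $I\neq\{1,\dots,r\}$, since that union is $P_{\Ms}\neq\Ms$ unless $\Ms\notin\M(G)$ fails — actually $P_{\Ms}\in\M(G)$, so $\Ms\neq P_{\Ms}$ forces $I$ proper). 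All three conditions of Definition~\ref{def:module-merge} hold, so $\merge_{i\in I}M_i\to\Ms$.

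For the second assertion, suppose $\Ms$ is a strong module of $H$ that is also a module of $G$; I must show $\Ms$ is strong in $G$. Assume not: then by (T3), $\Ms$ is a union of some children $\{M_i : i\in I\}$ of a non-prime module $M'$ of $G$ with $1<|I|<|\Pmax(G[M'])|$. Each child $M_i$ is a module of $G$, hence of $H$, and moreover the full set $\Pmax(G[M'])$ gives that $M'$ itself is a module of $G$, hence of $H$. Since $M'$ is parallel or series in $G$, I would exhibit a proper subfamily of its children whose union is a module of $H$ but overlaps $\Ms$ — concretely, pick $j\in I$ and $k\notin I$ (both exist since $1<|I|<|\Pmax(G[M'])|$) and consider $N := M' \setminus M_k$ if $M'$ is parallel (so all unions of children are modules of $G$, hence of $H$); then $N$ is a module of $H$, $N$ contains $M_j\subseteq\Ms$, and $N$ misses $M_k$ while $\Ms$ also misses $M_k$ but $N\supsetneq\Ms$ is not guaranteed — so instead I take the complementary route inside $G[M']$ and use that in $\overline{G[M']}$ (series case) or $G[M']$ (parallel case) the module $\Ms$ is a union of connected components, so $\Ms$ is not connected-and-co-connected there, i.e.\ $\Ms$ is non-prime as an induced subgraph; combined with $F$ being module-preserving and $M'$ a module of $H$, the edits inside $M'$ respect the component structure, forcing $\Ms$ to still overlap some module of $H$ — contradicting that $\Ms$ is strong in $H$. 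The cleanest version of this argument is: show that $M'$ is non-prime in $H$ as well (its children in $G$ are modules of $H$ and $F$ module-preserving keeps them pairwise all-adjacent or all-non-adjacent consistently), and then $\Ms$, being a proper nontrivial union of children of the non-prime module $M'$ of $H$, is non-strong in $H$ by (T3) applied to $H$ — the desired contradiction.

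The main obstacle I anticipate is the overlap argument in the first assertion — precisely ruling out that the strong module $\Ms$ of $H[P_{\Ms}]$ properly overlaps one of the children $M_i\in\Pmax(G[P_{\Ms}])$. The subtlety is that $M_i$ is a module of $H$ by module-preservation, but $\Ms$ need not be a module of $G$, so one cannot directly invoke the "two modules of the same graph don't overlap a strong one" principle in $G$; one has to pass to $H$, use that $\Ms$ is strong there, and carefully track which intersections/unions remain modules in which graph using Lemma~\ref{lem:module-subg}. A secondary delicate point is the very existence of $P_{\Ms}$: one must be sure that the descent through $\MDT(G)$ cannot "overshoot" $\Ms$ into a module that fails to contain it, which is exactly where module-preservation of $F$ and Lemma~\ref{lem:CC} are needed to guarantee that the relevant components of $G$ along the way are still modules of $H$ and still contain $\Ms$.
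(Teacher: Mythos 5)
Your treatment of the first assertion is correct but takes a genuinely different route from the paper, and it is arguably cleaner. The paper locates $P_{\Ms}$ by picking an edit pair $\{x,y\}$ with $x\in\Ms$, $y\notin\Ms$, taking the minimal strong module of $G$ containing both, and then needs (a) optimality of $F$ together with Lemma~\ref{lem:CC} to show that this module is prime, and (b) a rather long descending argument to exclude the case $\Ms\subsetneq M_i$ for a child $M_i\in\Pmax(G[P_{\Ms}])$. You instead define $P_{\Ms}$ directly as the inclusion-minimal strong module of $G$ containing $\Ms$ (well defined because strong modules form a hierarchy); then $\Ms\subsetneq M_i$ is excluded for free by minimality, the no-overlap step follows from module-preservation plus strongness of $\Ms$ in $H$, and primality of $P_{\Ms}$ drops out of Theorem~\ref{thm:all}(T3): were $P_{\Ms}$ non-prime, $\Ms$ would be a proper union of its children and hence a module of $G$. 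Notably, your argument for this part uses only module-preservation, not optimality. Two repairs to the write-up: the overlap step must not go through Lemma~\ref{lem:module-subg} applied to $\Ms\cap M_i$ (that lemma needs both sets to be modules of the same graph, and $\Ms\notin\M(G)$) --- the correct one-liner, which you do state in your closing paragraph, is that $M_i\in\M(H)$ by preservation and $\Ms$ is strong in $H$; and your sentence ``each $M_i$ is either inside $\Ms$ or disjoint from $\Ms$'' silently omits the third non-overlapping configuration $\Ms\subsetneq M_i$, which your choice of $P_{\Ms}$ rules out but which you should say explicitly. Lemma~\ref{lem:CC} plays no role in your descent and should be dropped from it.

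The second assertion is where you have a genuine gap. The paper's proof is two lines: if $\Ms\in\M(G)$ is not strong in $G$, some module $M$ of $G$ overlaps $\Ms$; by module-preservation $M\in\M(H)$, and a module of $H$ overlapping the strong module $\Ms$ of $H$ is a contradiction. Your ``cleanest version'' instead tries to show that $M'$ (the non-prime module of $G$ whose children union to $\Ms$) is non-prime in $H$ with $\Ms$ a union of its children in $H$, and then to apply (T3) in $H$. This fails for two reasons. First, module-preservation guarantees that each child of $M'$ in $G$ remains a module of $H$, but not that $\Pmax(H[M'])$ is compatible with $\Pmax(G[M'])$: children of a parallel $M'$ may merge into larger connected components of $H[M']$, in which case $\Ms$ need not be a union of children of $M'$ in $H$ at all. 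Second, (T3) characterizes which sets are modules; it does not certify non-strongness --- to conclude that $\Ms$ is not strong in $H$ you still must exhibit a module of $H$ that overlaps it, which is exactly the step you are trying to bypass. The fix is easy and is already implicit in the attempt you abandoned: with $j\in I$ and $k\notin I$, the set $M_j\cup M_k$ is a module of $G$ by (T3), hence of $H$ by preservation, and it overlaps $\Ms$ (it meets $\Ms$ in $M_j$, contains $M_k\not\subseteq\Ms$, and since $|I|\ge 2$ some child of $\Ms$ lies outside $M_j\cup M_k$). Better still, skip (T3) entirely and take any module of $G$ overlapping $\Ms$, as the paper does.
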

\begin{proof}
Let $\Ms$ be an arbitrary strong module of $H$ that is not a module of $G$. We
show first that for the module $\Ms$ there is a prime module $P_{\Ms}$ of $G$
with $\Ms \subseteq P_{\Ms}$ such that there is no other prime module $P'_{\Ms}$
of $G$ with $\Ms \subseteq P'_{\Ms} \subsetneq P_{\Ms}$.

Since ${\Ms}$ is a module of $H$ but not of $G$ there are vertices $x\in \Ms$
and $y \in V \setminus \Ms$ with $\{x,y\} \in F$. Now, let $P_{\Ms}$ be the strong
module of $G$ containing $x$ and $y$ that is minimal w.r.t.\ inclusion, that is,
there is no other strong module of $G$ that is properly contained in $P_{\Ms}$
and that contains $x$ and $y$. Thus $\{x,y\} \in F[P_{\Ms}]$. Lemma
\ref{lem:edit-in-prime-modules} implies that $F[P_{\Ms}]$ is an optimal edit set
of $G[P_{\Ms}]$. Since $P_{\Ms}$ is minimal w.r.t. inclusion it holds that $x$
and $y$ are from distinct children $M_x,M_y \in \Pmax(G[P_{\Ms}])$. We continue
to show that this strong module $P_{\Ms}$ is indeed prime. Assume for
contradiction, that $P_{\Ms}$ is a non-prime module of $G$. If $P_{\Ms}$ is
parallel, then editing $\{x,y\}$ would connect the two connected components
$M_x,M_y$ of $G[P_{\Ms}]$. Then, it follows by Lemma \ref{lem:CC} that
$F[P_{\Ms}]$ is not optimal; a contradiction. By similar arguments for the
complement $\overline{G[P_{\Ms}]}$ it can be shown that $P_{\Ms}$ cannot be a
series module. Thus $P_{\Ms}$ must be prime. Since $F$ is module-preserving,
$P_{\Ms}$ is module in $H$. Hence, $P_{\Ms}$ and $\Ms$ cannot overlap, since
$\Ms$ is strong in $H$. However, since $x \in P_{\Ms} \cap \Ms$ and $y \in
P_{\Ms}$ but $y \notin \Ms$ we have $\Ms \subseteq P_{\Ms}$. Finally, since
$P_{\Ms}$ is chosen to be minimal w.r.t.\ inclusion, there exists in particular
no prime module $P'_{\Ms}$ of $G$ with $\Ms \subseteq P'_{\Ms} \subsetneq
P_{\Ms}$.

We continue to show that $\Ms$ is obtained by merging some child
  modules of $P_{\Ms}$ in $G$, say  $M_1,\dots,M_k\in\Pmax(G[P_{\Ms}])$.
Note that we just formally prove the existence of such a subset
  $\{M_1,\dots,M_k\} \subset \Pmax(G[P_{\Ms}])$ without explicitly
  constructing it.  To this end, we need to verify the three conditions of
Definition \ref{def:module-merge}, i.e., (i) $M_1,\dots,M_k \in \M(H)$,
(ii) $\Ms \coloneqq \bigcup_{i=1}^k M_i \in \M(H)$, and (iii)
$\Ms \notin \M(G)$.  Since each $M_i \in \Pmax(G[P_{\Ms}])$ is module of
$G$ and $F$ is module-preserving, Condition (i) is always
satisfied. Moreover, by assumption $\Ms \notin \M(G)$ and thus Condition
(iii) is satisfied.

It remains to show that Condition (ii) is satisfied.  To this
  end, we show that there are modules $M_1,\dots,M_k$ of $G$ (without
  explicitly constructing them) such that $\Ms = \bigcup_{i=1}^k M_i$. We
  prove this by showing that each module from $P_{\Ms}$ is either
  completely contained in, or disjoint from $\Ms$.  First, note that
$\Ms \neq P_{\Ms}$, since $\Ms$ is not a module of $G$. Second, $\Ms$
cannot overlap any $M_i \in \Pmax(G[P_{\Ms}])$, since $M_i$ is a module of
$H$ and $\Ms$ is strong in $H$. We continue to show that there is no
$M_i \in \Pmax(G[P_{\Ms}])$ such that $\Ms \subseteq M_i$. Assume for
contradiction that there is a module $M_i \in \Pmax(G[P_{\Ms}])$ with
$\Ms \subseteq M_i$. Note that $M_i$ cannot be prime in $G$, as otherwise
$\Ms \subseteq M_i=P'_{\Ms} \subsetneq P_{\Ms}$, contradicting the
minimality of $P_{\Ms}$. Moreover, $\Ms$ cannot overlap any
$M^i_j \in \Pmax(G[M_i])$, since $\Ms$ is strong in $H$ and any $M^i_j$ is
a module of $H$, since $F$ is module-preserving. Furthermore, since $M_i$
is non-prime in $G$ for any subset
$\{M^i_1,\ldots,M^i_l\} \subsetneq \Pmax(G[M_i])$ it holds that the set
$M' = \bigcup_{j=1}^l M^i_j$ is a module of $G$ (cf. Theorem
\ref{thm:all}(T3)). Since $\Ms$ is no module of $G$ it cannot be a union of
elements in $\Pmax(G[M_i])$.  Note that this especially implies that
$\Ms \neq M_i$ and $\Ms \neq M^i_j$ for all $M^i_j \in \Pmax(G[M_i])$. Now
it follows, that $\Ms \subset M^i_j$ for some $M^i_j \in \Pmax(G[M_i])$.
Repeating the latter arguments and since $G$ is finite, there must be a
minimal set $M^a_b$ with
$\Ms \subset M^a_b\subset \dots \subset M^i_j\subset M_i$. Now we apply the
latter arguments again and obtain that $\Ms \subset M'\in \Pmax(G[M^a_b])$
which is not possible, since $M^a_b$ is chosen to be the minimal module
that contains $\Ms$.  Thus, there is no $M_i \in \Pmax(G[P_{\Ms}])$ such
that $\Ms \subseteq M_i$.

Now, since $\Ms \neq P_{\Ms}$, and $\Ms$ does not overlap any $M_i \in
\Pmax(G[P_{\Ms}])$, and there is no $M_i \in \Pmax(G[P_{\Ms}])$ such that $\Ms
\subseteq M_i$, there must be a set $\{M_1,\ldots,M_k\} \subsetneq
\Pmax(G[P_{\Ms}])$ such that $\Ms = \bigcup_{i=1}^k M_i$. Thus, Condition (ii) is
satisfied and therefore $\Ms$ is obtained by merging modules in
$\Pmax(G[P_{\Ms}])$.

Hence, any strong module of $H$ is either a module of $G$ or obtained by merging
the children of a prime module of $G$.

Finally, assume that there is a strong module $\Ms$ in $H$ that is a
module of $G$. Assume that $\Ms$ is not strong in $G$. Then there is a
module $M$ in $G$ that overlaps $\Ms$. Since $F$ is module-preserving,
$M$ is a module in $H$ and thus, $M$ overlaps $\Ms$ in $H$; a contradiction.
Thus, any strong module $\Ms$ of $H$ that is also a module of $G$
must be strong in $G$.
\end{proof}

Theorem \ref{thm:Edit=Merge0} allows us to give the 
following definitions that we will use in the subsequent part. 
\begin{definition}
  Let $G=(V,E)$ be an arbitrary graph, $F$ an optimal module-preserving
  cograph edit set, and $H=(V,E \DELTA F)$ the resulting cograph. Let
  $\Ms$ be a strong module of $H$ but no module of $G$.

  We denote by $P_{\Ms}$ the prime module of $G$ that contains $\Ms$
  and is minimal w.r.t. inclusion, i.e., there is no prime module
  $P'_{\Ms}$ of $G$ with $\Ms \subseteq P'_{\Ms} \subsetneq
  P_{\Ms}$. Furthermore, we denote by
  $\mc{C}(\Ms) \subset \Pmax(G[P_{\Ms}])$ the set of children of $P_{\Ms}$
  that satisfies $\bigcup_{M_i\in \mc{C}(\Ms)} M_i = \Ms$.
\label{def:merge-by-primes}
\end{definition}

The next result provides a characterization of module-preserving edit
sets by means of module merge of the children of prime modules. 

\begin{theorem}
  Let $G=(V,E)$ be an arbitrary graph,
  $F$ an optimal cograph edit set, and
  $H=(V,E \DELTA F)$ the resulting cograph.
  Then $F$ is module-preserving for $G$ if and
  only if each new strong module $\Ms$ of $H$ that is not a module of $G$ is
  obtained by  merging the modules in $\mc{C}(\Ms)\subset \Pmax(G[P_{\Ms}])$, 
  in symbols $\merge_{M_i\in \mc{C}(\Ms) } M_i \to \Ms$.
\label{thm:Edit=Merge}
 \end{theorem}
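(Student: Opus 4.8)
The plan is to prove the biconditional of Theorem~\ref{thm:Edit=Merge} by using Theorem~\ref{thm:Edit=Merge0} for one direction and a direct overlap argument for the converse.

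\medskip

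\textbf{Forward direction ($\Rightarrow$).} Suppose $F$ is module-preserving for $G$. Let $\Ms$ be a new strong module of $H$, i.e., a strong module of $H$ that is not a module of $G$. By Theorem~\ref{thm:Edit=Merge0}, $\Ms$ is obtained by merging some modules in $\Pmax(G[P_{\Ms}])$, where $P_{\Ms}$ is the (inclusion-)minimal prime module of $G$ containing $\Ms$. By Definition~\ref{def:merge-by-primes}, $\mc{C}(\Ms)$ is precisely the subset of $\Pmax(G[P_{\Ms}])$ whose union is $\Ms$; hence, by the uniqueness of this decomposition (the children of $P_{\Ms}$ are pairwise disjoint and $\Ms$ does not overlap any of them, being strong in $H$ and each child being a module of $H$ since $F$ is module-preserving), the set of modules that get merged is exactly $\mc{C}(\Ms)$. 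Checking the three conditions of Definition~\ref{def:module-merge}: (i) each $M_i \in \mc{C}(\Ms) \subseteq \M(G)$ and $F$ module-preserving gives $M_i \in \M(H)$; (ii) $\bigcup_{M_i \in \mc{C}(\Ms)} M_i = \Ms \in \M(H)$; (iii) $\Ms \notin \M(G)$ by assumption. Thus $\merge_{M_i \in \mc{C}(\Ms)} M_i \to \Ms$.

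\medskip

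\textbf{Converse direction ($\Leftarrow$).} Assume every new strong module $\Ms$ of $H$ is obtained by merging the modules in $\mc{C}(\Ms) \subset \Pmax(G[P_{\Ms}])$. We must show $F$ is module-preserving, i.e., every module $M$ of $G$ is a module of $H$. Let $M \in \M(G)$. By Theorem~\ref{thm:all}(T3), either $M$ is a strong module of $G$ or $M$ is a union of children of a non-prime module of $G$. The plan is to handle strong modules of $G$ first and then reduce the general case to them. For a strong module $M$ of $G$, consider the minimal strong module structure: I would walk down $\MDT(G)$ and use the fact that edits inside prime modules are exactly the module-merge edits producing the new strong modules of $H$, together with Lemma~\ref{lem:module-subg}. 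The crucial observation is that if $M$ were not a module of $H$, there would be $x \in M$, $y \in V\setminus M$ with distinct out-$M$-neighborhoods in $H$, hence a pair $\{x,y\}\in F$ crossing the boundary of $M$; but such an edit lies in $F[P]$ for the minimal prime module $P$ of $G$ containing both $x$ and $y$, and by hypothesis every such edit is a module-merge edit assembling children of $P$ into a new strong module $\Ms$ of $H$. One then argues that $M$ must be a union of whole children-blocks of every prime module it meets, so the boundary edits respect $M$, contradicting that $x,y$ have different out-neighborhoods. For the non-strong case, write $M = \bigcup_{j} M_j$ with $M_j$ children of a non-prime (parallel or series) module $N$ of $G$; since $N$ is non-prime, no edit of $F$ lies strictly between the $M_j$ inside $G[N]$ (all edits inside $G[N]$ lie in deeper prime modules, which are contained in individual children $M_j$ by Lemma~\ref{lem:module-subg}), so the adjacency pattern of $M$ to $V\setminus M$ is unchanged and $M$ remains a module of $H$.

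\medskip

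\textbf{Main obstacle.} The delicate part is the converse direction: translating the hypothesis ``every new strong module is a merge of children of its minimal enclosing prime module'' into the statement ``no edit crosses the boundary of an arbitrary (possibly non-strong) module of $G$.'' The subtlety is that $F$ is only assumed to be an optimal cograph edit set, not a priori module-preserving, so one cannot freely invoke that children of primes stay modules in $H$; instead one must bootstrap this from the merge hypothesis itself, carefully using the minimality of $P_{\Ms}$, the disjointness of children within $\Pmax$, Lemma~\ref{lem:edit-in-prime-modules} (so that $F[P]$ is optimal on each prime $P$), and Theorem~\ref{thm:all}(T2) (unions of children of a \emph{prime} module are never modules of $G$, forcing any merged block in $H$ to be flagged as a new strong module). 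I expect the write-up to proceed by induction on $|V|$ or on the height of $\MDT(G)$, peeling off the top-level prime/non-prime decomposition and applying the inductive hypothesis to each child $G[M_i]$ with the induced edit set $F[M_i]$.
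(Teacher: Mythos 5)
Your forward direction is fine and coincides with the paper's: it is an immediate application of Theorem~\ref{thm:Edit=Merge0} together with Definition~\ref{def:merge-by-primes} and a routine check of the three conditions of Definition~\ref{def:module-merge}.

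The converse direction has a genuine gap. The pivotal step in your sketch is the claim that, under the hypothesis, ``every such edit is a module-merge edit assembling children of $P$ into a new strong module $\Ms$ of $H$.'' That is not the hypothesis. The hypothesis is a statement about the new strong modules of $H$ (each is a union of children of its minimal enclosing prime module); it says nothing directly about which pairs lie in $F$. Passing from the merge condition to ``$F$ consists exactly of module-merge edits'' is essentially the content of Theorem~\ref{thm:Fmerge}, which is proved later and whose proof \emph{assumes} that $F$ is module-preserving---precisely what you are trying to establish---so as written the argument is circular. Consequently the follow-up step (``the boundary edits respect $M$, contradicting that $x,y$ have different out-neighborhoods'') has nothing to stand on; the remainder of your converse is a plan (``I would walk down $\MDT(G)$,'' ``I expect the write-up to proceed by induction'') rather than an argument, and neither the base case nor the inductive step is exhibited.

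What is needed, and what the paper does, is a contrapositive with an explicit witness. If some module $M_i$ of $G$ is not a module of $H$, pick $x,y\in M_i$ and $z\notin M_i$ with $xz\in E(H)$ and $yz\notin E(H)$; the triple characterization of cographs (Theorem~\ref{thm:cograph-characterize}, item~4) forces $\rt{xz|y}\in\mc R(H)$ or $\rt{yz|x}\in\mc R(H)$, and the strong module of $H$ at the corresponding $\lca$ in the cotree of $H$ is a set $\Ms$ containing $z$ and exactly one of $x,y$. Since the minimal prime module $P$ of $G$ containing the offending edit has $x$ and $y$ inside one and the same child $M'$ (because $M_i\subseteq M'$ by Theorem~\ref{thm:all}(T2)), no module of $G$ and no union of children of $P$ can separate $x$ from $y$; hence $\Ms$ is a new strong module of $H$ that is not obtained by merging children of a prime module, contradicting the hypothesis. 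Your sketch does contain the ingredient ``$M$ sits inside a single child of every prime module properly containing it,'' but without the cotree/triple step you never exhibit the strong module of $H$ that violates the merge condition, so no contradiction is actually reached.
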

\begin{proof}	
  If $F$ is an optimal and module-preserving edit-set for $G$, we can apply
  Theorem \ref{thm:Edit=Merge0}.
	 
  For the converse, assume for contraposition that $F$ is
  not module-preserving. Then, there is a module $M_i$ in $G$ that is not a
  module in $H$. Hence, there is a vertex $z\in V\setminus M_i$ and two
  vertices $x,y\in M_i$ such that $xz\in E(H)$ and $yz\notin E(H)$ and
  thus, either $\{x,z\}\in F$ or $\{y,z\}\in F$. There are two cases,
  either $xy\in E(H)$ or $xy\notin E(H)$. Since $H$ is a cograph we can
  apply Theorem \ref{thm:cograph-characterize} and conclude that either
  $\rt{yz|x}\in \mc R(H)$ or $\rt{xz|y}\in \mc R(H)$. Assume that
  $\rt{xz|y}\in \mc R(H)$ and let $T$ be the cotree of $H$. Since $T$
  displays $\rt{xz|y}$, the strong module $\Ms$ of $H$ located at the
  $\lca_T(x,z)$ contains the vertices $x$ and $z$ but not $y$. Moreover,
  since there is an edit $\{x,z\}$ or $\{y,z\}$ in $F$ there is a strong
  prime module $P_{\Ms}$ in $G$ that contains $x,y,z$ and is minimal
  w.r.t. inclusion. Note, $M_i\neq P_{\Ms}$ since $x,y\in M_i$ and $z\not\in
  M_i$.  Moreover, since $M_i$ is a module in $G$, but none of the unions
  of the children of $P_{\Ms}$ is a module of $G$ 
  (cf.\ Theorem \ref{thm:all}(T3)),
  we can conclude that $M_i\subseteq M'$, where $M'$ is a child of $P_{\Ms}$ in
  $G$.  Since $P_{\Ms}$ is the minimal prime module that contains $x,y,z$ and
  there is an edit $\{x,z\}$ or $\{y,z\}$ in $F$, the vertex $z$ must be
  located in a module different from the module $M'$ that contains both $x$
  and $y$.  Thus, $z\notin M'$. Therefore, there is no module in $G$ that
  contains $x$ and $z$ but not $y$.  Thus, $\Ms$ is no module of $G$.
  Since there is no module in $G$ that contains $x$ and $z$ but not $y$,
  the set $\Ms$ cannot be written as the union of children of any strong
  prime module $P_{\Ms}$ and thus, $\Ms$ is not obtained by merging modules of
  $\Pmax(G[P_{\Ms}])$. The case $\rt{yz|x}\in \mc R(H)$ is shown analogously.
\end{proof}

Combining the latter results,  it can be shown that for every graph $G$
there is always an optimal edit set such that the resulting cograph $H$
contains all modules of $G$ and any newly created strong module $\Ms$ of $H$
is obtained by merging the respective modules in $\mc{C}(\Ms)$.

\begin{theorem}
  Any graph $G=(V,E)$ has an optimal edit-set $F$ such that each strong
  module $M^{\star}$ in $H=(V,E\DELTA F)$ that is not a module of $G$ is
  obtained by merging modules in $\Pmax(G[P_{\Ms}])$, where $P_{\Ms}$ is a
  prime module of $G$.
\label{thm:Edit=Merge2}
 \end{theorem}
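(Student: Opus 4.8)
The plan is to derive Theorem~\ref{thm:Edit=Merge2} by combining Proposition~\ref{prop:FM} with Theorem~\ref{thm:Edit=Merge0} (or equivalently with Theorem~\ref{thm:Edit=Merge}). Proposition~\ref{prop:FM} guarantees that every graph $G$ admits \emph{some} optimal cograph edit set $F$ that is module-preserving, and $H=(V,E\DELTA F)$ is the resulting cograph. The idea is simply to feed this particular $F$ into Theorem~\ref{thm:Edit=Merge0}: that theorem asserts that for this module-preserving optimal $F$, every strong module $\Ms$ of $H$ is either already a module of $G$ or is obtained by merging modules in $\Pmax(G[P_{\Ms}])$ for the minimal prime module $P_{\Ms}$ of $G$ containing $\Ms$. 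Restricting attention to those strong modules $\Ms$ of $H$ that are \emph{not} modules of $G$, only the second alternative remains, which is exactly the conclusion claimed.

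First I would invoke Proposition~\ref{prop:FM} to fix an optimal module-preserving edit set $F$ for $G$ and set $H=(V,E\DELTA F)$. Second, I would let $M^{\star}$ be an arbitrary strong module of $H$ with $M^{\star}\notin\M(G)$; I want to show $M^{\star}$ is obtained by merging children of some prime module of $G$. Third, I would apply Theorem~\ref{thm:Edit=Merge0} to $G$, $F$, and $H$: since $M^{\star}$ is a strong module of $H$ and $M^{\star}\notin\M(G)$, the first alternative of that theorem (``$M^{\star}$ is a module of $G$'') is excluded, so the second alternative holds, namely $M^{\star}$ is obtained by merging some modules in $\Pmax(G[P_{\Ms}])$, where $P_{\Ms}$ is the inclusion-minimal prime module of $G$ containing $M^{\star}$. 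Since $M^{\star}$ was arbitrary, this establishes the theorem. Optionally I would remark that, by Definition~\ref{def:merge-by-primes}, the merged modules are precisely those in $\mc{C}(\Ms)$, so one may write $\merge_{M_i\in\mc{C}(\Ms)}M_i\to\Ms$.

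I do not expect a genuine obstacle here: this statement is essentially a corollary packaging Proposition~\ref{prop:FM} and Theorem~\ref{thm:Edit=Merge0} together, and all the real work has already been done in the proof of Theorem~\ref{thm:Edit=Merge0}. The only point requiring a sentence of care is the observation that a strong module of $H$ which is \emph{not} a module of $G$ cannot fall into the first case of Theorem~\ref{thm:Edit=Merge0}, so that the merge conclusion applies to every newly created strong module; this is immediate but worth stating explicitly. If one instead wished to phrase the argument through Theorem~\ref{thm:Edit=Merge}, the module-preserving $F$ supplied by Proposition~\ref{prop:FM} satisfies the hypothesis of the ``only if'' direction, yielding the same conclusion; either route is a two-line deduction.
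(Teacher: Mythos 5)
Your proposal is correct and matches the paper's argument: the paper likewise fixes a module-preserving optimal edit set via Proposition~\ref{prop:FM} and then invokes Theorem~\ref{thm:Edit=Merge} (whose relevant direction is itself just Theorem~\ref{thm:Edit=Merge0}) to conclude. Your choice to cite Theorem~\ref{thm:Edit=Merge0} directly rather than through Theorem~\ref{thm:Edit=Merge} is an immaterial difference.
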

\begin{proof}
  Proposition \ref{prop:FM} implies that any graph has a module-preserving
  optimal edit set. Hence, we can apply Theorem \ref{thm:Edit=Merge} to
  derive the statement.
\end{proof}

Finally, the following result shows that each module-preserving edit set can indeed
be derived by considering the module merge edits only.

\begin{theorem}
  Let $G=(V,E)$ be an arbitrary graph,
  $F$ an optimal module-preserving cograph edit set,
  $H=(V,E \DELTA F)$ the resulting cograph, and
  $\mc{M}$ the set of all strong modules of $H$ that are no modules of $G$.
  Then,
  \[ F = \bigcup_{\Ms\in \mc{M}}
     \left(F_{H[P_{\Ms}]}(\merge_{M_i\in \mc{C}(\Ms)} M_i \to \Ms)\right). 
  \] 
\label{thm:Fmerge}
\end{theorem}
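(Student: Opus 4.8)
The plan is to prove the two inclusions $F \subseteq \bigcup_{\Ms\in \mc{M}} F_{H[P_{\Ms}]}(\merge_{M_i\in \mc{C}(\Ms)} M_i \to \Ms)$ and $\supseteq$ separately, with the first being the substantial direction. For the easy inclusion $\supseteq$, observe that each set $F_{H[P_{\Ms}]}(\merge_{M_i\in \mc{C}(\Ms)} M_i \to \Ms)$ is by definition a subset of $F$: it consists precisely of those pairs $\{x,v\}\in F$ with $x\in\Ms$ and $v\in V(H[P_{\Ms}])\setminus\Ms = P_{\Ms}\setminus\Ms$. Taking the union over $\Ms\in\mc{M}$ therefore stays inside $F$. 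Here I would invoke Theorem \ref{thm:Edit=Merge} (which applies since $F$ is optimal and module-preserving) to guarantee that each $\Ms\in\mc{M}$ really is obtained by merging $\mc{C}(\Ms)\subset\Pmax(G[P_{\Ms}])$, so that the module merge edits $F_{H[P_{\Ms}]}(\merge_{M_i\in \mc{C}(\Ms)} M_i \to \Ms)$ are well-defined.

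For the hard inclusion $F \subseteq \bigcup_{\Ms\in\mc{M}} F_{H[P_{\Ms}]}(\cdots)$, fix an arbitrary edited pair $\{x,y\}\in F$. By Lemma \ref{lem:edit-in-prime-modules} all edits lie within prime modules, so there is a (minimal, w.r.t.\ inclusion) prime module $P$ of $G$ with $x,y\in P$ and $\{x,y\}\in F[P]$; minimality forces $x$ and $y$ into two distinct children $M_x\neq M_y$ of $P$ in $\Pmax(G[P])$. Since $H$ is a cograph, by Theorem \ref{thm:cograph-characterize} one of the triples $\rt{xy|z}$-type statements holds among $x,y$ and any third vertex; more directly, consider $u=\lca_T(x,y)$ in the cotree $T$ of $H$ and let $\Ms$ be the strong module of $H$ located at the child of $u$ on the $x$-side — this is the smallest strong module of $H$ containing $x$ but not $y$ (and symmetrically one containing $y$ but not $x$). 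The key claim is that this $\Ms$ (or its symmetric partner) belongs to $\mc{M}$ — i.e.\ it is a strong module of $H$ that is \emph{not} a module of $G$ — that $P_{\Ms}=P$, and that $y\in P_{\Ms}\setminus\Ms$, so that $\{x,y\}\in F_{H[P_{\Ms}]}(\merge_{M_i\in\mc{C}(\Ms)}M_i\to\Ms)$. To see $\Ms$ is not a module of $G$: if it were, then since it contains $x$ but not $y$ and $x,y$ lie in distinct children of $P$, it would have to be contained in $M_x$ (it cannot overlap a child of $P$, being strong in $H$ hence not overlapping the module $M_x\in\M(H)$, and it cannot be a union of children of $P$ by Theorem \ref{thm:all}(T2)); but then $\{x,y\}$ with $x\in\Ms\subseteq M_x$ and $y\notin M_x$ contradicts $\Ms$ being a module of $G$ if $\{x,y\}\in F$ — one argues that a $G$-module at $\lca$ level inside $M_x$ separating $x$ from $y$ would not be edited, so $\{x,y\}\notin F$, contradiction. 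Then $P_{\Ms}=P$ follows because $P$ is a prime module of $G$ containing $\Ms$ (indeed $\Ms\subseteq M_x\subsetneq P$, wait — one must check $\Ms\subseteq P$: since $x\in\Ms$ and $\Ms$ is not a module of $G$ there is some edit incident to $\Ms$, and the minimal prime module of $G$ carrying that edit both contains $\Ms$ and, by the argument in Theorem \ref{thm:Edit=Merge0}, equals the minimal prime $P_{\Ms}$; minimality of $P$ among primes carrying $\{x,y\}$ combined with $y\notin\Ms$, $y\in P$ gives $P_{\Ms}=P$). Finally $y\in P\setminus\Ms = P_{\Ms}\setminus\Ms$ and $\{x,y\}\in F$ with $x\in\Ms$, so $\{x,y\}$ lies in the module merge edit set for $\Ms$.

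I expect the main obstacle to be pinning down, for a given $\{x,y\}\in F$, the \emph{correct} strong module $\Ms\in\mc{M}$ to which this edit gets attributed, and proving that $\Ms$ is genuinely not a module of $G$ while $P_{\Ms}$ coincides with the minimal prime $P$ containing the edit. The subtlety is that $\lca_T(x,y)$-based modules in $H$ need not correspond to anything in $G$, and one must rule out the case that $\Ms$ happens to already be a $G$-module (which would wrongly exclude it from $\mc{M}$) by exploiting that $\{x,y\}$ was actually edited — this is where the triple/cotree characterization of Theorem \ref{thm:cograph-characterize} and the non-module argument from the converse direction of Theorem \ref{thm:Edit=Merge} do the work. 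A clean way to organize this is to reuse verbatim the construction in the proof of Theorem \ref{thm:Edit=Merge}: there, starting from a pair $\{x,z\}$ or $\{y,z\}$ in $F$ witnessing non-preservation, one extracts exactly such an $\Ms$; here the same machinery, applied to our edit $\{x,y\}\in F$ together with a suitable third vertex chosen so that the resulting triple in $\mc{R}(H)$ separates the endpoints, produces the $\Ms\in\mc{M}$ with $\{x,y\}\in F_{H[P_{\Ms}]}(\merge_{M_i\in\mc{C}(\Ms)}M_i\to\Ms)$, completing the forward inclusion.
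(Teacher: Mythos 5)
There is a genuine gap in the forward inclusion, and it sits exactly where you predicted the difficulty would be. Your plan is to attribute each edit $\{x,y\}\in F$ to the maximal strong module $\Ms$ of $H$ containing $x$ but not $y$ (the child of $\lca_T(x,y)$ on the $x$-side — note this is the \emph{largest}, not the smallest, such module), and then to argue that this $\Ms$ cannot be a module of $G$. Your argument for that last claim — ``a $G$-module inside $M_x$ separating $x$ from $y$ would not be edited, so $\{x,y\}\notin F$'' — is false. Module-preservation only forces an all-or-none editing of the bipartite connection between two disjoint $G$-modules; it in no way prevents $\{x,y\}$ from being edited when $x$ lies in a $G$-module $\Ms\subseteq M_x$ and $y\in M_y$. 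So the case in which the maximal strong $H$-module containing $x$ but not $y$ is contained in $M_x$, and symmetrically for $y$, is entirely consistent with module-preservation and is not excluded by your reasoning; in that case $\{x,y\}$ is not captured by any merge edit set and your inclusion $F\subseteq\Fs$ does not follow. Appealing to the converse direction of Theorem \ref{thm:Edit=Merge} does not rescue this, since that argument starts from a $G$-module that is \emph{destroyed} in $H$, whereas here $F$ is module-preserving by hypothesis.

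The missing idea is that this residual case must be killed by \emph{optimality}, not by module-preservation. The paper's proof splits on whether $M_x\subsetneq M'_x$ or $M'_x\subseteq M_x$ (with $M'_x$ the maximal strong $H$-module containing $x$ but not $y$). In the first case $M'_x$ is a new strong module and the edit lands in $\Fs$, as you argue. In the second case one takes $\Ms=\lca$-module of $x,y$ in $H$, observes $M'_x,M'_y\in\Pmax(H[\Ms])$, uses module-preservation to conclude that the \emph{entire} set $F'=\{\{x',y'\}\mid x'\in M'_x,\ y'\in M'_y\}$ lies in $F$, and then notes that since $\Ms$ is series or parallel in $H$, reverting $F'$ turns the join between $H[M'_x]$ and $H[M'_y]$ into a disjoint union or vice versa — still a cograph — so $G[\Ms]\DELTA(F[\Ms]\setminus F')$ is a cograph with strictly fewer edits, contradicting Lemma \ref{lem:edit-in-prime-modules}. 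You need this undo-and-count argument (or an equivalent use of optimality) to close the proof; without it the forward inclusion is unproven.
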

\begin{proof}
  We set
  $\Fs = \bigcup_{\Ms\in \mc{M}} \left(F_{H[P_{\Ms}]}(\merge_{M_i\in
      \mc{C}(\Ms)} M_i \to \Ms)\right)$.  Clearly, it holds that
  $\Fs \subseteq F$.  It remains to show that, $F \subseteq \Fs$.  First,
  observe, that every edit $\{x,y\} \in F$ is between distinct children
  $M_x,M_y \in \Pmax(G[P_{\Ms}])$ of a prime module $P_{\Ms}$ of $G$.  To
  see this, let $P_{\Ms}$ be a strong module of $G$ such that $x$ and $y$
  are in distinct children $M_x,M_y \in \Pmax(G[P_{\Ms}])$ and assume for
  contradiction that $P_{\Ms}$ is non-prime in $G$.  Let
  $F' \coloneqq \bigcup_{M_i \in \Pmax(G[P_{\Ms}])} F[M_i]$.  Since
  $P_{\Ms}$ is non-prime in $G$ it follows that $F'$ is an edit set for
  $G[P_{\Ms}]$, that is, $G[P_{\Ms}]\Delta F'$ is a cograph.  But
  $|F'|<|F[P_{\Ms}]|$; contradicting Lemma \ref{lem:edit-in-prime-modules}.
  Thus, every edit $\{x,y\} \in F$ is between distinct children
  $M_x,M_y \in \Pmax(G[P_{\Ms}])$ of a prime module $P_{\Ms}$ of $G$.
  
  Assume that $\{x,y\} \in F$, but $\{x,y\} \notin \Fs$.  By the latter
  arguments, there is a prime module $P_{\Ms}$ of $G$ with $x \in M_x$ and
  $y \in M_y$ and $M_x,M_y \in \Pmax(G[P_{\Ms}])$.  Now let $M'_x$ be the
  strong module of $H$ that contains $x$ but not $y$ and that is maximal
  w.r.t.\ inclusion.  Since $F$ is module-preserving, $M_x$ is a module in
  $H$. Moreover, since $M'_x$ is a strong module of $H$, the modules $M'_x$
  and $M_x$ do not overlap in $H$. Therefore, either $M_x \subsetneq M'_x$
  or $M'_x \subseteq M_x$.  We show first that the case
  $M_x \subsetneq M'_x$ is not possible.  Assume for contradiction, that
  $M_x \subsetneq M'_x$. Thus, there is a vertex $z \in M'_x\setminus M_x$.
  Since $P_{\Ms}$ is prime in $G$ and $M_x \in \Pmax(G[P_{\Ms}])$, we can
  apply Theorem \ref{thm:all} (T2) and conclude that there is no other
  module than $M_x$ in $G$ that entirely contains $M_x$ but not $y$.  Since
  $M_x \subsetneq M'_x \subsetneq P_{\Ms}$ it follows that $M'_x$ is a new
  strong module of $H$ and therefore, by Theorem \ref{thm:Edit=Merge0},
  obtained by merging modules
  $M_1,\ldots,M_k \in \mc{C}(M'_x) \subsetneq \Pmax(G[P_{\Ms}])$.  But then
  $\{x,y\} \in F_{H[P_{\Ms}]}(\merge_{M_i\in \mc{C}(M'_x)} M_i \to M'_x)
  \subseteq \Fs$; contradicting that $\{x,y\} \notin \Fs$.  Hence,
  $M'_x \subseteq M_x$.  Similarly, $M'_y \subseteq M_y$ for the strong
  module $M'_y$ of $H$ that contains $y$ but not $x$ and that is maximal
  w.r.t.\ inclusion.
  
  Consider now the strong module $\Ms$ of $H$ that is identified with the
  lowest common ancestor of the modules $\{x\}$ and $\{y\}$ within the
  cotree of $H$.  Then, there are distinct children in $\Pmax(H[\Ms])$,
  containing $x$ and $y$, respectively. Since $M'_x$ is the strong module
  of $H$ that contains $x$ but not $y$ and that is maximal w.r.t.\
  inclusion, we have $M'_x \in \Pmax(H[\Ms])$.  Analogously,
  $M'_y \in \Pmax(H[\Ms])$.
  
  Both, $M_x$ as well as $M_y$ are modules in $H$ and $G$. 
  Since $F$ is module-preserving, either all or none of the 
  edges between $M_x$ and $M_y$ are edited. 
  Since $\{x,y\} \in F$ we have, therefore, 
  $\{x',y'\} \in F$ for all $x'\in M'_x \subseteq M_x$ and 
  $y'\in M'_y \subseteq M_y$. 
  Let $F'\coloneqq \{\{x',y'\} \mid x' \in M'_x, y' \in M'_y\}$.
  By the latter argument $F' \neq \emptyset$ and $F' \subseteq F$.

  Note, the subgraphs $H[M'_x]$ and $H[M'_y]$ are cographs.  Since $\Ms$ is
  either a parallel or a series module in $H$, we have either (i)
  $H[M'_x \cup M'_y] = H[M'_x] \cupdot H[M'_y]$ or (ii)
  $H[M'_x \cup M'_y] = H[M'_x] \oplus H[M'_y]$, respectively.  Since $F'$
  comprises the edits $\{x',y'\}$ between \emph{all} vertices $x'\in M'_x$
  and $y'\in M'_y$, the graph $H[M'_x \cup M'_y] \DELTA F'$ is in case (i)
  the graph $H[M'_x] \oplus H[M'_y]$ and in case (ii)
  $H[M'_x] \cupdot H[M'_y]$.  By definition, in both cases
  $H[M'_x \cup M'_y] \DELTA F'$ is a cograph.  Note that $F'$ did not
  change the \out{M'_x \cup M'_y}-neighborhood and thus, the graph
  $H[\Ms] \DELTA F' = G[\Ms] \DELTA (F[\Ms] \setminus F')$ is a cograph as
  well.  Since $\{x,y\} \in F'\cap F[\Ms]$ it holds that
  $|F[\Ms] \setminus F'| < |F[\Ms]|$.  But then, $F[\Ms]$ is not optimal,
  and therefore, by Lemma \ref{lem:edit-in-prime-modules} the set $F$ is
  not optimal; a contradiction.
    
  In summary, there exists no edit $\{x,y\} \in F$ with $\{x,y\} \notin \Fs$.
  Hence, $F \subseteq \Fs$ and the statement follows.
\end{proof}

From an algorithmic perspective, Theorem \ref{thm:Fmerge} implies that it is
sufficient to correctly determine the set of strong modules of a resulting cograph $H$
that are no modules of the given graph $G$. Afterwards,  
the module-preserving edit set $F$ is obtained by taking all the edits needed 
for the corresponding module merge operations.
On the other hand, by Theorem \ref{thm:Edit=Merge2} it is ensured
that such a closest cograph $H$ that contains all modules of $G$ always exists.

\section{Pairwise Module Merge and Algorithmic Issues}
\label{sec:algo}

So far, we have shown that for an arbitrary graph $G=(V,E)$ there is an
optimal module-preserving edit set $F$ that transforms $G$ into the cograph
$H = (V, E \DELTA F)$ (cf.\ Theorem \ref{thm:Edit=Merge2}).  Moreover, this
edit set $F$ can be expressed in terms of edits derived by module merge
operations on the strong modules of $H$ that are no modules of $G$ (cf.\
Theorem \ref{thm:Fmerge}).  In what follows, we show that there is an
explicit order in which these individual merge operations can be
consecutively applied to $G$ such that all intermediate edit-steps result
in graphs that contain all modules of $G$, and, moreover, all new strong
modules produced in this edit-step are preserved in any further step.  In
Section \ref{sub:pmm}, we show that an optimal edit set can always be
obtained by a series of ``ordered'' pairwise merge operations.  In Section
\ref{subsec:algo}, we show that the latter ``order''-condition can even be
relaxed and that particular modules can be pairwisely merged in an
arbitrary order to obtain an optimal edited graph.

The next Lemma shows that the number of edits in an optimal edit set $F$
can be expressed as the sum of individual edits based on the
$\merge$-operator to obtain the strong modules in a cograph $H=G\DELTA F$
that are no modules in $G$.

\begin{lemma}
  Let $G=(V,E)$ be a graph, $F$ an optimal module-preserving cograph
  edit-set, and $H=(V,E\DELTA F)$ the resulting cograph.  Let
  $\mc{M} = \{\Ms_1,\dots,\Ms_n\}$ be the set of all strong modules of $H$
  that are no modules of $G$ and assume that the elements in $\mc{M}$ are
  partially ordered w.r.t.\ inclusion, i.e., $\Ms_i\subseteq \Ms_j$ implies
  $i\le j$.

  Let $\Ms\in \mc{M}$.  We set
  $F_{\Ms} \coloneqq \{\{x,v\}\in F\ \mid\ x\in \Ms, v\in P_{\Ms}\setminus
  \Ms\}$, that is, the set $F_{\Ms}\subseteq F$ comprises all edits in $F$
  that are used to obtain the module $\Ms$ within $G[P_{\Ms}]$.

  Furthermore, we set  $\sigma_{\Ms_1} = F_{\Ms_1}$ and    
  $\sigma_{\Ms_i} = F_{\Ms_i} \setminus (\bigcup_{j=1}^{i-1} F_{\Ms_j})$,
  $2\leq i\leq n$. Then 
  \[F=\bigcupdot_{i=1}^n \sigma_{\Ms_i} 
    \text{ and, thus, } |F| = \sum_{i=1}^n |\sigma_{\Ms_i}|\,.\]

  Moreover, for each intermediate graph
  $G_j = G \DELTA \left(\bigcup_{i=1}^j \sigma_{\Ms_i}\right)$ and any
  $\Ms_i \in \mc{M}$ with $ i-1 \le j$ we have
  \[G_{j}[\Ms_i]=H[\Ms_i]\,.\]

  In each step $j$ the induced subgraphs $G_{j}[\Ms_i]$ are already
  cographs for all sets $\Ms_i$ with $ i-1 \le j$ and hence
  $F[\Ms_i]\setminus \bigcup_{k=1}^j \sigma_{\Ms_k} = \emptyset$, for all
  $i-1 \le j$.
  \label{lem:merge-order}
\end{lemma}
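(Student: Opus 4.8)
The plan is to establish the three claims of the lemma in sequence, each building on the previous: first the disjoint-union decomposition $F = \bigcupdot_i \sigma_{\Ms_i}$, then the identity $G_j[\Ms_i] = H[\Ms_i]$ for $i-1 \le j$, and finally the emptiness of the residual edit set inside each already-resolved module. Throughout I would use Theorem~\ref{thm:Fmerge} as the starting point, which already tells us that $F$ is the union of the module-merge edit sets $F_{H[P_{\Ms}]}(\merge_{M_i\in\mc C(\Ms)} M_i \to \Ms)$ over all $\Ms\in\mc M$; the technical content here is to reorganize this union into a \emph{disjoint} union indexed in inclusion-compatible order, and to track the intermediate graphs.

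For the first claim, I would first observe that $F_{\Ms}$ as defined (edits between $\Ms$ and $P_{\Ms}\setminus\Ms$) coincides with $F_{H[P_{\Ms}]}(\merge_{M_i\in\mc C(\Ms)} M_i\to\Ms)$, since by Theorem~\ref{thm:Edit=Merge0} $\Ms$ is obtained by merging the children in $\mc C(\Ms)$, and any edit in $F$ incident to a vertex of $\Ms$ that leaves $\Ms$ must, by the proof of Theorem~\ref{thm:Fmerge} (every edit is between distinct children of a prime module), actually stay inside $P_{\Ms}$. Hence $\bigcup_{\Ms\in\mc M} F_{\Ms} = F$ by Theorem~\ref{thm:Fmerge}. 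The passage to $\sigma_{\Ms_i}$ is the standard ``disjointification'' of a union: $\sigma_{\Ms_1},\dots,\sigma_{\Ms_n}$ are pairwise disjoint by construction and their union equals $\bigcup_i F_{\Ms_i} = F$, giving $F = \bigcupdot_i\sigma_{\Ms_i}$ and the cardinality identity immediately.

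The heart of the argument is the second claim, $G_j[\Ms_i] = H[\Ms_i]$ whenever $i-1\le j$, which I would prove by induction on $j$ (equivalently, by induction along the inclusion order on $\mc M$). The key point is that once we have applied $\sigma_{\Ms_1},\dots,\sigma_{\Ms_j}$, every edit in $F$ that lies \emph{inside} $\Ms_i$ for some $i\le j+1$ has already been performed: an edit $\{x,y\}\subseteq\Ms_i$ is, again by the ``edits live between children of prime modules'' fact, an edit between two children $M_x,M_y\in\Pmax(G[P])$ of a prime module $P\subseteq\Ms_i$; the strong module $\Ms'$ of $H$ at $\lca_{\text{cotree}(H)}(x,y)$ then satisfies $\Ms'\subseteq\Ms_i$ (because $\Ms_i$ is a module of $H$ containing both $x$ and $y$) and $\Ms'\notin\M(G)$ (it separates $x$ from some vertex of $P$), so $\Ms'=\Ms_k$ for some $k$ with $\Ms_k\subseteq\Ms_i$, hence $k\le i\le j+1$; and $\{x,y\}\in F_{\Ms'}\subseteq\bigcup_{m\le k}F_{\Ms_m}$, so it appears in one of the first $k\le j+1$ batches. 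To get the cleaner bound $i-1\le j$, note that the \emph{last} such batch that can contain an edit inside $\Ms_i$ is $\sigma_{\Ms_i}$ itself (index $i$), so after step $j\ge i-1$ the only possibly-missing edits inside $\Ms_i$ are those in $\sigma_{\Ms_i}$, and one checks that $\sigma_{\Ms_i}$ contains no edit with both endpoints in $\Ms_i$ — any such edit would, by the argument above, belong to $F_{\Ms_k}$ for some $\Ms_k\subsetneq\Ms_i$, hence to $\bigcup_{m<i}F_{\Ms_m}$, hence be removed from $\sigma_{\Ms_i}$. Therefore $F[\Ms_i]\subseteq\bigcup_{m\le i-1}\sigma_{\Ms_m}\subseteq\bigcup_{m\le j}\sigma_{\Ms_m}$, and since $H[\Ms_i] = G[\Ms_i]\DELTA F[\Ms_i]$ and the batches applied so far restricted to pairs inside $\Ms_i$ are exactly $F[\Ms_i]$, we get $G_j[\Ms_i] = G[\Ms_i]\DELTA F[\Ms_i] = H[\Ms_i]$. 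The third claim is then immediate: $G_j[\Ms_i] = H[\Ms_i]$ is a cograph (induced subgraph of the cograph $H$), and $F[\Ms_i]\setminus\bigcup_{k\le j}\sigma_{\Ms_k} = \emptyset$ because $F[\Ms_i]\subseteq\bigcup_{m\le i-1}\sigma_{\Ms_m}$ and $i-1\le j$.

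The main obstacle I anticipate is the careful bookkeeping in the second claim — specifically, pinning down which batch $\sigma_{\Ms_k}$ an edit inside $\Ms_i$ lands in and verifying that $\sigma_{\Ms_i}$ genuinely contains only edits that \emph{cross out of} $\Ms_i$ (into $P_{\Ms_i}\setminus\Ms_i$) and none internal to $\Ms_i$. This hinges on combining the ``every edit is between children of a prime module'' fact from the proof of Theorem~\ref{thm:Fmerge} with the minimality of $P_{\Ms}$ and the inclusion-compatible indexing, and it is the only place where the specific definition of $\sigma_{\Ms_i}$ (rather than $F_{\Ms_i}$) is essential. Everything else — the disjointification and the cograph conclusion — is routine once this structural fact is in place.
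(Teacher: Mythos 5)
Your overall architecture --- disjointify $F=\bigcup_{\Ms\in\mc M}F_{\Ms}$ via Theorem~\ref{thm:Fmerge}, then show that every edit of $F$ with both endpoints in $\Ms_i$ already lies in $\bigcup_{m\le i-1}\sigma_{\Ms_m}$ --- is exactly the paper's. However, the step that pins down \emph{which} batch contains such an internal edit fails as written. You take $\Ms'$ to be the strong module of $H$ at $\lca_{\mathrm{cotree}(H)}(x,y)$ and assert $\{x,y\}\in F_{\Ms'}$. By definition $F_{\Ms'}=\{\{u,v\}\in F\mid u\in\Ms',\ v\in P_{\Ms'}\setminus\Ms'\}$ contains only edits with exactly one endpoint in $\Ms'$, whereas the lca module contains \emph{both} $x$ and $y$; hence $\{x,y\}\notin F_{\Ms'}$ and the chain $\{x,y\}\in F_{\Ms'}\subseteq\bigcup_{m\le k}F_{\Ms_m}$ collapses. (The parenthetical claim that $\Ms'\notin\M(G)$ --- which you need even for $F_{\Ms'}$ to be defined, since $F_{\Ms}$ is only defined for $\Ms\in\mc M$ --- is also not substantiated: the lca of $x$ and $y$ in the cotree of $H$ may well correspond to a module of $G$.)

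The repair is short and is precisely the paper's argument: since $F=\bigcup_\ell F_{\Ms_\ell}$ by Theorem~\ref{thm:Fmerge}, the edit $\{x,y\}$ lies in some $F_{\Ms_\ell}$, and by the very definition of $F_{\Ms_\ell}$ exactly one endpoint, say $x$, lies in $\Ms_\ell$ while $y\in P_{\Ms_\ell}\setminus\Ms_\ell$. Because $\Ms_\ell$ and $\Ms_i$ are strong modules of $H$ they do not overlap; $x\in\Ms_\ell\cap\Ms_i$ and $y\in\Ms_i\setminus\Ms_\ell$ force $\Ms_\ell\subsetneq\Ms_i$, whence $\ell<i$ by the inclusion-compatible indexing and $\{x,y\}\in F_{\Ms_\ell}\subseteq\bigcup_{m\le\ell}\sigma_{\Ms_m}\subseteq\bigcup_{m\le i-1}\sigma_{\Ms_m}$. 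With this substitution the rest of your proof goes through; note also that your separate check that $\sigma_{\Ms_i}$ contains no edit internal to $\Ms_i$ needs no appeal to the ``edits live between children of prime modules'' fact --- it is immediate from $\sigma_{\Ms_i}\subseteq F_{\Ms_i}$ and the definition of $F_{\Ms_i}$, whose elements all have one endpoint outside $\Ms_i$.
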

\begin{proof}
  By Theorem \ref{thm:Edit=Merge0}, for each $\Ms\in \mc{M}$ there is an
  inclusion-minimal prime module ${P}_{\Ms}$ in $G$ and a set of children
  $\mc{C}(\Ms) \subseteq \Pmax(G[{P}_{\Ms}])$ such that
  $\merge_{M_i\in \mc{C}(\Ms)} M_i \to \Ms$. Thus, ${P}_{\Ms}$ and
  $\mc{C}(\Ms)$ exists and $\mc{C}(\Ms)$ is not empty.
  
  Now, we show that $|F|$ can be expressed by the sum of the size of the
  edits in $\sigma_{\Ms_i}$ To this end, observe that by Theorem
  \ref{thm:Fmerge},
  $F = \bigcup_{\Ms\in \mc{M}} \left( F_{H[P_{\Ms}]}(\merge_{M_i\in
      \mc{C}(\Ms)} M_i \to \Ms)\right)$.  Thus,
  $F = \bigcup_{\Ms\in \mc{M}} F_{\Ms}$.  By construction of
  $\sigma_{\Ms_i}$ it holds first that
  $\bigcup_{i=1}^n \sigma_{\Ms_i} = \bigcup_{i=1}^n F_{\Ms_i}$ and second
  that $\sigma_{\Ms_i} \cap \sigma_{\Ms_j} = \emptyset$ for all $i \neq
  j$. Hence, $F = \bigcupdot_{i=1}^n \sigma_{\Ms_i}$ and thus,
  $|F| = \sum_{i=1}^n |\sigma_{\Ms_i}|$.

  By construction, $\mc{M}$ is partially ordered w.r.t.\ inclusion. We want
  to show that $G_{j}[\Ms_i] = H[\Ms_i]$ for all $i-1 \le j$. To this end,
  we show that
  $F[\Ms_i]\setminus \bigcup_{k=1}^{j} \sigma_{\Ms_k} = \emptyset$, in
  which case after each step $j$ there are no more edits left to modify an
  edge between vertices within $\Ms_i$.  We show first that the latter is
  satisfied for all $1 \le i \le n$ and a fixed $j=i-1$.  Assume for
  contradiction that
  $\{x,y\}\in F[\Ms_i]\setminus \bigcup_{k=1}^{i-1} \sigma_{\Ms_k}$ and
  thus, $x,y\in \Ms_i$.  Since $\{x,y\} \in F = \bigcup_{k=1}^n F_{\Ms_k}$,
  there must be a module $\Ms_{\ell}\in \mc{M}$ such that
  $\{x,y\}\in F_{\Ms_{\ell}}$. By construction, $F_{\Ms_{\ell}}$ contains
  only the edits that affect the \out{\Ms_{\ell}}-neighborhood.  Thus,
  w.l.o.g.\ we can assume that $x\in \Ms_{\ell}$ and $y\not\in
  \Ms_{\ell}$. Since $\Ms_{\ell}$ and $\Ms_i$ are strong modules, they do
  not overlap, and therefore, $\Ms_{\ell}\subsetneq \Ms_i$. However, since
  $\mc{M}$ is partially ordered, we can conclude that ${\ell} < i$ and
  therefore, $\{x,y\} \in \bigcup_{k=1}^{i-1} \sigma_{\Ms_k}$. Hence,
  $\{x,y\}\notin F[\Ms_i]\setminus \bigcup_{k=1}^{i-1} \sigma_{\Ms_k}$; a
  contradiction.  Thus,
  $F[\Ms_i]\setminus \bigcup_{k=1}^{i-1} \sigma_{\Ms_k} = \emptyset$ for
  all $1 \le i \le n$.  But then, clearly
  $F[\Ms_i]\setminus \bigcup_{k=1}^{j} \sigma_{\Ms_k} = \emptyset$ holds
  for any $j \ge i-1$.  Thus, $G_{j}[\Ms_i] = H[\Ms_i]$ for all
  $i-1 \le j$.
\end{proof}

The following Lemma shows that, given the explicit order
$\mc{M} = \{\Ms_1,\dots,\Ms_n\}$ from Lemma \ref{lem:merge-order}, in which
the edits are applied to the graph $G$, the intermediate graphs $G_i$
retain all modules of $G$ and also all new modules $\Ms_j$, $j\le i$.

\begin{lemma}
  Let $G=(V,E)$ be an arbitrary graph, $F$ an optimal module-preserving
  cograph edit set, and $H=(V,E \DELTA F)$ the resulting cograph.
  Moreover, let $\mc{M} = \{\Ms_1,\dots,\Ms_n\}$ be the partially ordered
  (w.r.t. inclusion) set of all strong modules of $H$ that are no modules
  of $G_0\coloneqq G$, and choose $\sigma_{\Ms_i}$, $F_{\Ms_i}$ and the
  intermediate graphs $G_i$, $1\leq i\leq n$ as in Lemma
  \ref{lem:merge-order}.
	
  Then, any module $M'$ of $G$ is a module of $G_i$ and the set $\Ms_j$ is
  a module of $G_i$ for $1\leq i\leq n$ and any $j \le i$.
  \label{lem:merge-order-2}
\end{lemma}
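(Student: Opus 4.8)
The plan is to proceed by induction on $i$. For the base case $i=0$ we have $G_0 = G$, so every module $M'$ of $G$ is trivially a module of $G_0$, and there is no $\Ms_j$ with $j \le 0$, so the statement holds vacuously. For the inductive step, assume the statement holds for $G_{i-1}$, i.e.\ every module $M'$ of $G$ is a module of $G_{i-1}$ and every $\Ms_j$ with $j \le i-1$ is a module of $G_{i-1}$. We must show the same for $G_i = G_{i-1} \DELTA \sigma_{\Ms_i}$. The key observation is that by construction $\sigma_{\Ms_i} \subseteq F_{\Ms_i} = \{\{x,v\} \mid x \in \Ms_i,\ v \in P_{\Ms_i}\setminus \Ms_i\}$; thus every edit applied in step $i$ lies strictly between $\Ms_i$ and the rest of the inclusion-minimal prime module $P_{\Ms_i}$, and in particular no edit of $\sigma_{\Ms_i}$ has an endpoint outside $P_{\Ms_i}$ and none is contained within $\Ms_i$.

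First I would argue that every module $M'$ of $G$ remains a module of $G_i$. By the inductive hypothesis $M'$ is a module of $G_{i-1}$. We distinguish the usual cases for how $M'$ sits relative to the prime module $P_{\Ms_i}$. If $M' \cap P_{\Ms_i} = \emptyset$ or $P_{\Ms_i} \subseteq M'$, then since all edits of $\sigma_{\Ms_i}$ have both endpoints inside $P_{\Ms_i}$, they do not touch any pair with exactly one endpoint in $M'$, so $M'$ stays a module. If $M' \subseteq P_{\Ms_i}$, then since $M'$ is a module of $G$ and $P_{\Ms_i}$ is a prime (hence strong) module of $G$, Lemma~\ref{lem:module-subg} together with Theorem~\ref{thm:all}(T2) forces $M'$ to be contained in a single child $M_x \in \Pmax(G[P_{\Ms_i}])$ (any union of $\geq 2$ children is not a module of $G$). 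Now $\Ms_i = \bigcup_{M_k \in \mc{C}(\Ms_i)} M_k$ is a union of children, so either $M_x \in \mc{C}(\Ms_i)$, in which case $M' \subseteq \Ms_i$ and $\sigma_{\Ms_i}$ edits no pair with an endpoint in $\Ms_i$ on both sides — wait, more carefully: edits of $\sigma_{\Ms_i}$ go between $\Ms_i$ and $P_{\Ms_i}\setminus\Ms_i$, and since $M' \subseteq \Ms_i$, every such edit has at most one endpoint in $M'$; but because $M_x$ is a module of $G$ that is preserved in $G_{i-1}$ and $F$ is module-preserving, either all or none of the edges between $M' \subseteq M_x$ and each child on the other side are edited, so $M'$ remains a module. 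Or $M_x \notin \mc{C}(\Ms_i)$, in which case $M' \cap \Ms_i = \emptyset$ and again edits of $\sigma_{\Ms_i}$ have at most one endpoint in $M'$, and module-preservation of $F$ across the pair $(M_x, \cdot)$ shows $M'$ stays a module. The remaining possibility, that $M'$ overlaps $P_{\Ms_i}$, is impossible since $P_{\Ms_i}$ is strong in $G$.

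Next I would show that each $\Ms_j$ with $j \le i$ is a module of $G_i$. For $j \le i-1$, $\Ms_j$ is a module of $G_{i-1}$ by the inductive hypothesis; the same case analysis as above applies, using that $\Ms_j$ is a strong module of $H$ and hence does not overlap $P_{\Ms_i}$ (which is a module of $H$ since $F$ is module-preserving), so $\Ms_j$ is either disjoint from $P_{\Ms_i}$, contains $P_{\Ms_i}$, or is contained in $P_{\Ms_i}$; in the contained case one checks as before that $\Ms_j$ is a union of children of $P_{\Ms_i}$ (or lies inside one child) and module-preservation of $F$ does the rest. For $j = i$ itself, the point is precisely the content of Lemma~\ref{lem:merge-order}: after step $j = i$ we have $G_i[\Ms_i] = H[\Ms_i]$, and moreover by construction $\sigma_{\Ms_i}$ contains exactly the edits that equalize the \out{\Ms_i}-neighborhoods of the vertices of $\Ms_i$ within $P_{\Ms_i}$; combined with the fact that $\Ms_i$ already had a uniform out-$P_{\Ms_i}$-neighborhood in $G$ (since the children $M_k \in \mc{C}(\Ms_i)$ are modules of $G$ with a common out-$P_{\Ms_i}$-neighborhood, as children of $P_{\Ms_i}$ all of which are modules), this makes $\Ms_i$ a module of $G_i$.

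The main obstacle I anticipate is the bookkeeping in the case $M' \subseteq P_{\Ms_i}$ (and analogously $\Ms_j \subseteq P_{\Ms_i}$): one must carefully combine (a) Theorem~\ref{thm:all}(T2), to pin down that a $G$-module inside a prime module lives in one child or is a union of children forced by non-primality — but $P_{\Ms_i}$ is prime, so it must live in a single child; (b) the fact that $\mc{C}(\Ms_i)$ is a set of children and $\Ms_i$ their union; and (c) the module-preservation of $F$ to conclude that $\sigma_{\Ms_i}$, which edits all-or-nothing between the children it touches and the rest, cannot destroy modularity of $M'$. I would also take care that "partially ordered w.r.t.\ inclusion, $\Ms_i \subseteq \Ms_j \Rightarrow i \le j$" is exactly what licenses treating step $i$ after all steps $\le i-1$ whose modules could be nested inside $\Ms_i$; this is needed so that by the time we reach step $i$, the internal structure of $\Ms_i$ (the edits $F[\Ms_i]$) has already been fully applied, which is what Lemma~\ref{lem:merge-order} guarantees and what makes $G_i[\Ms_i] = H[\Ms_i]$ available.
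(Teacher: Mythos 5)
Your induction scheme and case split (position of $M'$ relative to $P_{\Ms_i}$, then relative to $\Ms_i$) coincide with the paper's, and your use of Theorem~\ref{thm:all}(T2) to force a $G$-module $M'\subsetneq P_{\Ms_i}$ into a single child of $P_{\Ms_i}$ is a valid, slightly more direct way to arrive at the two subcases $M'\subseteq \Ms_i$ and $M'\cap\Ms_i=\emptyset$. The gap is in the step that actually carries the lemma: verifying that $M'$ survives the application of $\sigma_{\Ms_i}$. You justify this by ``$F$ is module-preserving, so all or none of the edges between $M'$ and each child on the other side are edited.'' That is a statement about the \emph{full} edit set $F$; what must be controlled is the \emph{partial} set $\sigma_{\Ms_i}$ applied at step $i$. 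A priori, for a fixed $y\notin M'$ and $x,x'\in M'$, the pairs $\{x,y\}$ and $\{x',y\}$ could both lie in $F$ yet be assigned to different steps $\sigma_{\Ms_k}$, in which case some intermediate graph would see exactly one of them flipped and $M'$ would fail to be a module there even though it is a module of both $G$ and $H$. Ruling this out is the entire content of the paper's Cases~(1)/(2) and (A)/(B): in the contained case one first proves that $\Ms_{i+1}$ is a module of $G_{i+1}$, observes $G_{i+1}[\Ms_{i+1}]=G_i[\Ms_{i+1}]$ because $\sigma_{\Ms_{i+1}}$ touches no pair inside $\Ms_{i+1}$, and lifts $M'$ via Lemma~\ref{lem:module-subg}; in the disjoint case one shows that \emph{every} edit of $F$ between $M'$ and $\Ms_{i+1}$ lies in $F_{\Ms_{i+1}}\subseteq\bigcup_{k\le i+1}\sigma_{\Ms_k}$, so that $G_{i+1}$ already agrees with $H$ on all such pairs and modularity of $M'$ in $H$ transfers. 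Neither argument appears in your proposal; ``module-preservation of $F$ does the rest'' does not bridge this.

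A second, smaller problem: you assert that the children in $\mc{C}(\Ms_i)$ ``are modules of $G$ with a common out-$P_{\Ms_i}$-neighborhood.'' Their neighborhoods \emph{outside} $P_{\Ms_i}$ do agree (they are subsets of the module $P_{\Ms_i}$), but their out-neighborhoods \emph{within} $P_{\Ms_i}$ necessarily differ --- otherwise $\Ms_i$ would already be a module of $G$, contradicting $\Ms_i\in\mc{M}$. For the claim that $\Ms_i$ is a module of $G_i$ you additionally need that $P_{\Ms_i}$ is still a module of $G_i$ (which follows from the first half of the statement applied to the $G$-module $P_{\Ms_i}$, as the paper notes explicitly) and that all of $F_{\Ms_i}$ has been applied by the end of step $i$ with no later step touching those pairs again; both points should be argued rather than absorbed into ``this makes $\Ms_i$ a module of $G_i$.''
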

\begin{proof}
  First note that $\sigma_{\Ms_i}$ affects only modules that are entirely
  contained in $P_{\Ms_{i}}$ and only their out-neighbors within
  $P_{\Ms_{i}}$.  Moreover $\Ms_j\subseteq \Ms_i$ implies that
  $P_{\Ms_{j}}\subseteq P_{\Ms_{i}}$.  The partial ordering of the
  elements in $\mc{M}$ implies that $P_{\Ms_{i}}$ remains a module in
  $G_i$.

  Before we prove the main statement, we show first that the following
    statement is satisfied:
  \begin{description}
    \item[Claim 1:] \emph{For every $M'$ with
        $\Ms_{i} \subsetneq M' \subsetneq P_{\Ms_{i}}$ we have
        $M'\neq\Ms_j \in \mc{M}$, $j \le i$ and $M'$ cannot be a module of
        $G$.}

    Let $M'$ be an arbitrary set with
    $\Ms_{i} \subsetneq M' \subsetneq P_{\Ms_{i}}$.  By the partial order
    of the elements in $\mc{M}$ we immediately observe that
    $M'\neq\Ms_j \in \mc{M}$ for any $j \le i$.  Now assume for
      contradiction that $M'$ is a module of $G$.  Note, all elements in
    $\Pmax(G[P_{\Ms_{i}}])$ are strong modules of $G$, and thus, do not
    overlap the module $M'$. Moreover, since $P_{\Ms_{i}}$ is prime in $G$,
    we can apply Theorem \ref{thm:all}(T2) and conclude that the union of
    elements of any proper subset $\P' \subsetneq \Pmax(G[P_{\Ms_{i}}])$
    with $|\P'|>1$ is not a module of $G$.  Taken the latter arguments
    together and because $M' \subsetneq P_{\Ms_{i}}$, we have
    $M' \subseteq M_{\ell}\in\Pmax(G[P_{\Ms_{i}}])$ for some $\ell$.
    Hence, $\Ms_{i} \subsetneq M' \subseteq M_{\ell}$.  However, since
    $\Ms_{i}$ is the union of some children
    $\P' \subseteq \Pmax(G[P_{\Ms_{i}}])$ of $P_{\Ms_{i}}$ it follows that
    $M_{\ell} \subseteq \Ms_{i}$; a contradiction.  This proves Claim
      1. \smallskip
  \end{description}

  We continue with proving the main statement by induction over $i$.
  Since $G_0=G$, the statement is satisfied for $G_0$.  We continue to show
  that the statement is satisfied for $G_{i+1}$ under the assumption that
  it is satisfied for $G_i$.

  For further reference, we note that $P_{\Ms_{i+1}}$ is a module of $G_i$,
  since $P_{\Ms_{i+1}}$ is a module of $G$ and by induction assumption.
  Moreover, $P_{\Ms_{i+1}}$ remains a module of $G_{i+1}$, since
  $G_{i+1}=G_i\DELTA \sigma_{\Ms_{i+1}}$ and $\sigma_{\Ms_{i+1}}$ does not
  affect the \out{P_{\Ms_{i+1}}}-neighborhood.  Furthermore, $\Ms_{i+1}$ is
  a module of $H$ and thus, of $H[P_{\Ms_{i+1}}]$.  Since
  $\sigma_{\Ms_{i+1}}$ contains all such edits to adjust $\Ms_{i+1}$ to a
  module in $H[P_{\Ms_{i+1}}]$, we can conclude that $\Ms_{i+1}$ is a
  module in $G_{i+1}[P_{\Ms_{i+1}}]$.  Therefore, Lemma
  \ref{lem:module-subg} implies that $\Ms_{i+1}$ is a module of $G_{i+1}$.
	
  Now, let $M'$ be an arbitrary module of $G$.  We proceed to show that
  $M'$ is a module of $G_{i+1}$.  By induction assumption, each module $M'$
  of $G$ is a module of $G_i$.  Since $F$ is module-preserving, $M'$ is
  also a module of $H$.  Hence, $M'\in \M(G)\cap \M(G_i)\cap \M(H)$.
  Moreover, by Claim 1 the case
  $\Ms_{i+1} \subsetneq M' \subsetneq P_{\Ms_{i+1}}$ cannot occur for any
  module $M'$ of $G$.
	
  Note, the module $M'$ cannot overlap $P_{\Ms_{i+1}}$, since
  $P_{\Ms_{i+1}}$ is strong in $G$.  Hence, for $M'$ one of the following
  three cases can occur: either $P_{\Ms_{i+1}} \subseteq M'$,
  $P_{\Ms_{i+1}} \cap M' = \emptyset$, or $M' \subsetneq P_{\Ms_{i+1}}$.
  In the first two cases, $M'$ remains a module of $G_{i+1}$, since
  $\sigma_{\Ms_{i+1}}$ contains only edits between vertices within
  $P_{\Ms_{i+1}}$, and thus, the \out{M'}-neighborhood is not affected.
  Therefore, assume that $M' \subsetneq P_{\Ms_{i+1}}$.  The module $M'$
  cannot overlap $\Ms_{i+1}$, since $\Ms_{i+1}$ is strong in $H$.  As shown
  above, the case $\Ms_{i+1} \subsetneq M' \subsetneq P_{\Ms_{i+1}}$ cannot
  occur, and thus we have either (1) $M' \subseteq \Ms_{i+1}$, or (2)
  $\Ms_{i+1} \cap M' = \emptyset$.
  \begin{description}
  \item[\textnormal{Case (1)}] Since $\sigma_{\Ms_{i+1}}$ affects only the
    \out{\Ms_{i+1}}-neighborhood, there is no edit between vertices in $M'$
    and $\Ms_{i+1}\setminus M'$ and, moreover,
    $G_{i+1}[\Ms_{i+1}] = G_i[\Ms_{i+1}]$.  By assumption, $M'$ is a module
    of $G_i$.  Thus, $M'$ is a module in any induced subgraph of $G_i$ that
    contains $M'$ and hence, in particular in $G_i[\Ms_{i+1}]$.  Hence,
    $M'$ is a module of $G_{i+1}[\Ms_{i+1}]$.  Now, we can apply Lemma
    \ref{lem:module-subg} and conclude that $M'$ is also a module of
    $G_{i+1}$.

  \item[\textnormal{Case (2)}] Assume for contradiction that $M'$ is no
    module of $G_{i+1}$.  Thus, there must be an edge $xy \in E(G_{i+1})$,
    $x\in M', y \in V \setminus M'$ such that for some other vertex
    $x'\in M'$ we have $x'y \notin E(G_{i+1})$.  Since $M'$ is a module of
    $G_i$ it must hold that $\{x,y\} \in \sigma_{\Ms_{i+1}}$ or
    $\{x',y\} \in \sigma_{\Ms_{i+1}}$.  Since $x,x'\notin \Ms_{i+1}$ and
    each edit in $\sigma_{\Ms_{i+1}}$ affects a vertex within $\Ms_{i+1}$,
    we can conclude that $y \in \Ms_{i+1}$.  Now, by construction of
    $F_{\Ms_{i+1}}$ and since $M' \subsetneq P_{\Ms_{i+1}}$, all edits
    between vertices of $\Ms_{i+1}$ and $M'$ are entirely contained in
    $F_{\Ms_{i+1}}$.  But this implies that none of the sets
    $\sigma_{\Ms_{\ell}}$ with $\ell>i+1$ contains $\{x,y\}$ or $\{x',y\}$.
    Hence, it holds that $xy \in E(H)$ and $x'y \notin E(H)$, which implies
    that $M'$ is no module of $H$; a contradiction.
  \end{description}
  \noindent
  Therefore, each module $M'$ of $G$ is a module of $G_{i+1}$.  \smallskip

%%%%%%%%%%%%%%%%%%%%%%%%%%%%%%%%%%%%%555%%%%%%%%%%%%%%%%%%%%%%%%%%%%%%%%%%%%%555

  We proceed to show that $\Ms_j \in \mathcal{M}$ is a module of $G_{i+1}$
  for all $j\leq i+1$.  As we have already shown this for $j=i+1$, we
  proceed with $j< i+1$.  By induction assumption, each module $\Ms_j$ is a
  module of $G_i$ for all $j<i+1$.  Note, the module $\Ms_j$ cannot overlap
  $P_{\Ms_{i+1}}$, since $\Ms_j$ is strong in $H$ and $P_{\Ms_{i+1}}$ is a
  module of $H$, because $F$ is module-preserving.  Hence, for $\Ms_j$ one
  of the following three cases can occur: either
  $P_{\Ms_{i+1}} \subseteq \Ms_j$, $P_{\Ms_{i+1}} \cap \Ms_j = \emptyset$,
  or $\Ms_j \subsetneq P_{\Ms_{i+1}}$.  In the first two cases, $\Ms_j$
  remains a module of $G_{i+1}$, since $\sigma_{\Ms_{i+1}}$ contains only
  edits between vertices within $P_{\Ms_{i+1}}$, and thus, the
  \out{\Ms_j}-neighborhood is not affected.  Therefore, assume that
  $\Ms_j \subsetneq P_{\Ms_{i+1}}$.  The module $\Ms_j$ cannot overlap
  $\Ms_{i+1}$, since both are strong in $H$.  Due to the partial ordering
  of the elements in $\mc{M}$, the case $\Ms_{i+1} \subsetneq \Ms_j$ cannot
  occur. Hence there are two cases, either (A) $\Ms_j \subseteq \Ms_{i+1}$,
  or (B) $\Ms_{i+1} \cap \Ms_j = \emptyset$.
  \begin{description}
  \item[\textnormal{Case (A)}] Since $\sigma_{\Ms_{i+1}}$ affects only the
    \out{\Ms_{i+1}}-neighborhood, there is no edit between vertices in
    $\Ms_j$ and $\Ms_{i+1}\setminus \Ms_j$.  By analogous arguments as in
    Case (1), we can conclude that $\Ms_j$ remains a module of
    $G_{i+1}[\Ms_{i+1}]$.  Lemma \ref{lem:module-subg} implies that $\Ms_j$
    is also a module of $G_{i+1}$.

  \item[\textnormal{Case (B)}] Assume for contradiction that $\Ms_j$ is no
    module of $G_{i+1}$.  Thus, there must be an edge $xy \in E(G_{i+1})$,
    $x\in \Ms_j, y \in V \setminus \Ms_j$ such that for some other vertex
    $x'\in \Ms_j$ we have $x'y \notin E(G_{i+1})$.  Since $\Ms_j$ is a
    module of $G_i$ it must hold that $\{x,y\} \in \sigma_{\Ms_{i+1}}$ or
    $\{x',y\} \in \sigma_{\Ms_{i+1}}$. Now, we can argue analogously as in
    Case (2) and conclude that $xy \in E(H)$ and $x'y \notin E(H)$, which
    implies that $\Ms_j$ is no module of $H$; a contradiction.
  \end{description}
  \noindent
  Therefore, each module $\Ms_j$, $j\leq i+1$ is a module of $G_{i+1}$.
\end{proof}

The latter two Lemmata show that there exists an explicit order, in which
all new modules $\Ms_i$ of $H$ can be constructed such that whenever a
module $\Ms_i$ is produced step $i$ the induced subgraph $G_{i-1}[\Ms_i]$
is already a cograph and, moreover, is not edited any further in subsequent
steps.

\subsection{Pairwise Module-merge} \label{sub:pmm}
Regarding Lemma \ref{lem:merge-order}, each module $\Ms_i$ is created by
applying the remaining edits $\sigma_{\Ms_i} \subseteq F_{\Ms_i}$ of the module
merge $\merge_{M'\in \mc{C}(\Ms_i)} M'\to \Ms_i$ to the previous intermediate
graph $G_{i-1}$. Now, there might be linear many modules in $\mc{C}(\Ms_i)$
which have to be merged at once to create $\Ms_i$. However, from an algorithmic
point of view the module $\Ms_i$ is not known in advance. Hence, in each step,
for a given prime module $M$ of $G$ an editing algorithm has to choose one of
the exponentially many sets from the power set $\mathcal P(\Pmax{G[M]})$ to
determine which new module $\Ms_i$ have to be created. For an algorithmic
approach, however, it would be more convenient to only merge modules in a
pairwise manner, since then only quadratic many combinations of choosing two
elements of $\Pmax{G[M]}$ have to be considered in each step.

The aim of this section is to show that for each of the $n$ steps of
creating one of the new strong modules $\mc{M} = \{\Ms_1,\dots,\Ms_n\}$ of
$H$ it is possible to replace the merge operation
$\merge_{M'\in \mc{C}(\Ms_i)} M'\to \Ms_i$ with a series of pairwise merge
operations.

Before we can state this result 
we have to define the following partition of strong modules of a resulting
cograph $H$ that are no modules of a given graph $G$.

\begin{definition}
  Let $G=(V,E)$ be an arbitrary graph, $F$ a module-preserving cograph edit
  set, and $H=(V,E \DELTA F)$ the resulting cograph.  Moreover, let
  $\Ms \in \mc{M}$ be a strong module of $H$ that is no module of $G$ and
  consider the partitions
  $\Pmax(H[\Ms]) = \{\widetilde{M}_1, \ldots, \widetilde{M}_k\}$ and
  $\mc{C}(\Ms) = \{\widehat{M}_1,\dots, \widehat{M}_l\}$.  We define with
  $\mc{X}(\Ms)=\{M_0,\ldots,M_n\}$ the set of modules that contains the
  maximal (w.r.t.\ inclusion) modules of
  $\Pmax(H[\Ms_i]) \cup \mc{C}(\Ms_i)$ as follows
\begin{multline*}
  \mc{X}(\Ms) \coloneqq
  \{ \widetilde{M}_i \in \Pmax(H[\Ms]) \mid \exists \widehat{M}_j \in \mc{C}(\Ms) \text{ s.t. } \widehat{M}_j \subseteq \widetilde{M}_i \} \\
  \cup \{ \widehat{M}_j \in \mc{C}(\Ms) \mid \exists \widetilde{M}_i \in
  \Pmax(H[\Ms]) \text{ s.t. } \widetilde{M}_i \subseteq \widehat{M}_j \}.
\end{multline*}

Note that for technical reasons the index of the elements in $\mc{X}$
starts with 0.

Furthermore, assume that $\mc{M} = \{\Ms_1,\dots,\Ms_n\}$ is a partially
ordered (w.r.t. inclusion) set of all strong modules of $H$ that are no
modules of $G$.  For each $\Ms_i \in \mc{M}$ let
$\mc{X}(\Ms_i) =
\{M_{i,0},\ldots,M_{i,l_i}\}$ %as in Def. \ref{def:merge-partition}
and set $\Ms_i(j) = \bigcup_{k=0}^j M_{i,k}$ for all $1 \le i \le n$ and
$1 \le j \le l_i$.  Then, we denote with
\begin{equation*}
  \mc{N}(\mc{M})= \{\Ns_1=\Ms_1(1),\ldots,\Ns_m=\Ms_n(l_n)\}
\end{equation*}
the set of all such $\Ms_i(j)$.  In particular, we assume that
$\mc{N}(\mc{M})$ is ordered as follows: if $\Ns_k=\Ms_i(j)$ and
$\Ns_l=\Ms_{i'}(j')$, then $k<l$ if and only if either $i<i'$, or $i=i'$
and $j<j'$, i.e., within $\mc{N}(\mc{M})$ the elements $\Ms_i(j)$ are
ordered first w.r.t. $i$, and second w.r.t. $j$.
\label{def:merge-partition}
\end{definition}

Although, we have already shown by Theorem \ref{thm:Edit=Merge} that any
new strong module $\Ms \in \mc{M}$ of $H$ can be obtained by merging the
modules from $\mc{C}(\Ms)$, we will see in the following that $\Ms$ can
also be obtained by merging the modules form $\mc{X}(\Ms)$. In particular,
we will see that if all elements in $\mc{X}(\Ms)$ are already modules of
the intermediate graph $\Gs$, then we can use any order of the elements
within $\mc{X}(\Ms)$ and successively merge them in a pairwise manner to
construct $\Ms$. As a consequence of doing pairwise module merges we obtain
in each step an intermediate module $\Ns \in \mc{N}(\mc{M})$.

To see the intention to use the partition $\mc{X}(\Ms)$ instead of
$\mc{C}(\Ms)$ observe the following.  Due to the order of the elements in
$\mc{M}$, the modules $\Ms_1,\dots,\Ms_n$ are constructed from bottom to
top, i.e., when module $\Ms$ is processed then all child modules from
$\Pmax(H[\Ms])$ are already constructed.  So, instead of obtaining $\Ms$ by
merging $\mc{C}(\Ms)$ we can indeed obtain $\Ms$ also by merging
$\Pmax(H[\Ms])$.  However, it might be the case that a non-trivial subset
$\bigcup_{i\in I} \widetilde{M}_i = \widehat{M}_j$ for some $j$, e.g., if
$\widehat{M}_j$ is a (strong) prime module of $G$ but not a strong module
of $H$.  But also in this case, we have to assure that $\widehat{M}_j$
remains a module of $H$.  In particular, we do not want to destroy
$\widehat{M}_j$ by merging the elements from $\Pmax(H[\Ms])$ in the
incorrect order.  Thus, we choose $\widehat{M}_j \in \mc{X}(\Ms)$ and do
not include the individual $\widetilde{M}_i, i \in I$ into $\mc{X}(\Ms)$.

Before we can continue, we have to 
show that $\mc{X}(\Ms)$ as given in Definition \ref{def:merge-partition} is indeed a partition of $\Ms$.

\begin{proposition}
  Let $G=(V,E)$ be an arbitrary graph, $F$ a module-preserving cograph edit
  set, and $H=(V,E \DELTA F)$ the resulting cograph.  Moreover, let $\Ms$
  be a strong module of $H$ that is no module of $G$ and consider the
  partitions $\Pmax(H[\Ms]) = \{\widetilde{M}_1, \ldots, \widetilde{M}_k\}$
  and $\mc{C}(\Ms) = \{\widehat{M}_1,\dots, \widehat{M}_l\}$.  Then
  $\mc{X}(\Ms)$ is a partition of $\Ms$.  As a consequence, for each
  $M \in \mc{X}(\Ms)$ there are index sets $I \subseteq \{1,\ldots,k\}$ and
  $J \subseteq \{1,\ldots,l\}$ such that
  $M = \bigcup_{i \in I} \widetilde{M}_i$ and
  $M = \bigcup_{j \in J} \widehat{M}_j$.
	\label{prop:merge-partition}
\end{proposition}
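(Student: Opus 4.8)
The plan is to prove two things: that the sets in $\mc{X}(\Ms)$ are pairwise disjoint, and that their union is all of $\Ms$. Once that is established, the ``as a consequence'' part follows almost immediately: both $\Pmax(H[\Ms])$ and $\mc{C}(\Ms)$ are partitions of $\Ms$, so each $M \in \mc{X}(\Ms)$ — being a union of sets from one partition that equals a union of sets from the other — automatically picks out the corresponding index sets $I$ and $J$. The only real content is the partition claim, and the key tool is that any two strong modules of $H$ (here the $\widetilde{M}_i$) and any module of $H$ containing them (here the $\widehat{M}_j$, which are children of a prime module $P_{\Ms}$ and hence modules of $H$ since $F$ is module-preserving) cannot overlap; so for each pair $(\widetilde{M}_i, \widehat{M}_j)$ exactly one of $\widetilde{M}_i \subseteq \widehat{M}_j$, $\widehat{M}_j \subseteq \widetilde{M}_i$, or $\widetilde{M}_i \cap \widehat{M}_j = \emptyset$ holds.

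First I would set up the containment relation between the two partitions. Fix a $\widetilde{M}_i \in \Pmax(H[\Ms])$. Since $\mc{C}(\Ms)$ covers $\Ms \supseteq \widetilde{M}_i$, some $\widehat{M}_j$ meets $\widetilde{M}_i$; by the non-overlap observation either $\widetilde{M}_i \subseteq \widehat{M}_j$ or $\widehat{M}_j \subsetneq \widetilde{M}_i$. So each $\widetilde{M}_i$ either sits inside some $\widehat{M}_j$, or strictly contains one or more of them (and by the same argument, if it meets several $\widehat{M}_j$'s it contains each such one). I would argue that in the latter case $\widetilde{M}_i$ is in fact the union of exactly those $\widehat{M}_j$ it contains: since $\mc{C}(\Ms)$ is a partition of $\Ms$ and $\widetilde{M}_i \subseteq \Ms$, every vertex of $\widetilde{M}_i$ lies in some $\widehat{M}_j$, and that $\widehat{M}_j$ meets $\widetilde{M}_i$ hence is contained in it. Symmetrically, each $\widehat{M}_j$ either sits inside some $\widetilde{M}_i$ or is the union of exactly those $\widetilde{M}_i$ it contains. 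So the two partitions refine a common coarsest partition, which is precisely $\mc{X}(\Ms)$ by its definition: an element of $\mc{X}(\Ms)$ is either a $\widetilde{M}_i$ that contains (hence is the union of) some collection of $\widehat{M}_j$'s, or a $\widehat{M}_j$ that contains (hence is the union of) some collection of $\widetilde{M}_i$'s.

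The main thing to verify carefully is that $\mc{X}(\Ms)$ is genuinely a partition and not, say, that some element is counted twice or some vertex is missed. For disjointness: take $M, M' \in \mc{X}(\Ms)$ distinct. If both are of the $\widetilde{M}$-type or both of the $\widehat{M}$-type, disjointness is immediate since $\Pmax(H[\Ms])$ and $\mc{C}(\Ms)$ are partitions. If $M = \widetilde{M}_i$ (containing some $\widehat{M}_j$) and $M' = \widehat{M}_{j'}$ (containing some $\widetilde{M}_{i'}$), suppose they intersect; then by non-overlap one contains the other. If $\widetilde{M}_i \subseteq \widehat{M}_{j'}$, then $\widehat{M}_j \subseteq \widetilde{M}_i \subseteq \widehat{M}_{j'}$ forces $\widehat{M}_j = \widehat{M}_{j'}$ (distinct elements of a partition can't be nested), so $\widetilde{M}_i \subseteq \widehat{M}_j$, but $\widehat{M}_j \subseteq \widetilde{M}_i$ too, so $\widetilde{M}_i = \widehat{M}_j$ — then $\widetilde{M}_i$ does not strictly contain any $\widehat{M}$ and $\widehat{M}_{j'} = \widetilde{M}_i$ does not strictly contain any $\widetilde{M}$, contradicting that both were put into $\mc{X}(\Ms)$ as distinct elements via the stated defining conditions. (I would need to double-check the edge case where $\widetilde{M}_i = \widehat{M}_{j'}$ as sets: the definition of $\mc{X}(\Ms)$ uses non-strict $\subseteq$, so such a common set would be listed once under each clause; I would note that in that case it is the same element of $\mc{X}(\Ms)$, i.e. the definition implicitly identifies equal sets, so no double-counting occurs.) The case $\widehat{M}_{j'} \subseteq \widetilde{M}_i$ is symmetric. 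For covering: every vertex $v \in \Ms$ lies in a unique $\widetilde{M}_i$ and a unique $\widehat{M}_j$; one of these two is contained in the other by non-overlap, and the larger one is — by the refinement structure above — exactly an element of $\mc{X}(\Ms)$ containing $v$. Hence $\bigcup \mc{X}(\Ms) = \Ms$. The consequence about index sets $I, J$ then follows since each $M \in \mc{X}(\Ms)$ is by construction a union of members of $\Pmax(H[\Ms])$ and, simultaneously, a union of members of $\mc{C}(\Ms)$.
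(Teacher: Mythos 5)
Your argument is correct and follows essentially the same route as the paper's proof: both use the fact that $F$ is module-preserving to conclude that the strong modules $\widetilde{M}_i$ of $H$ cannot overlap the modules $\widehat{M}_j$ (which are modules of $H$), and then reduce the partition claim to a containment case analysis between the unique $\widetilde{M}_i$ and $\widehat{M}_j$ containing a given vertex. Your version is merely more explicit about the disjointness check and the edge case $\widetilde{M}_i=\widehat{M}_j$, which the paper handles by identifying the two as a single element of $\mc{X}(\Ms)$.
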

\begin{proof}
  First note that all $\widetilde{M}_i \in \Pmax(H[\Ms])$ are strong
  modules of $H$.  Moreover, all $\widehat{M}_j \in \mc{C}(\Ms)$ are strong
  modules of $G$.  Since $F$ is module-preserving it follows that none of
  the elements $\widetilde{M}_i \in \Pmax(H[\Ms])$ overlap any
  $\widehat{M}_j \in \mc{C}(\Ms)$, and vice versa.  Hence, for each
  $\widetilde{M}_i \in \Pmax(H[\Ms])$ there are three distinct cases:
  Either $\widetilde{M}_i\subseteq \widehat{M}_j$, or
  $\widehat{M}_j \subsetneq \widetilde{M}_i$, or
  $\widetilde{M}_i \cap \widehat{M}_j = \emptyset$ for all
  $\widehat{M}_j \in \mc{C}(\Ms)$.  Now, since $\Pmax(H[\Ms])$ and
  $\mc{C}(\Ms)$ are partitions of $\Ms$ it follows for each $x \in \Ms$
  that $x$ is contained in exactly one $\widetilde{M}_i \in \Pmax(H[\Ms])$
  and exactly one $\widehat{M}_j \in \mc{C}(\Ms)$ and either
  $\widetilde{M}_i\subseteq \widehat{M}_j$ or
  $\widehat{M}_j \subsetneq \widetilde{M}_i$.  By construction of
  $\mc{X}(\Ms)$ then either
  $\widetilde{M}_i= \widehat{M}_j\in \mc{X}(\Ms)$; or
  $\widetilde{M}_i\in \mc{X}(\Ms)$ and $\widehat{M}_j \not\in \mc{X}(\Ms)$;
  or $\widetilde{M}_i\not\in \mc{X}(\Ms)$ and
  $\widehat{M}_j \in \mc{X}(\Ms)$.  Thus, $\mc{X}(\Ms)$ is a partition of
  $\Ms$.
\end{proof}

Using the partitions $\mc{X}(\Ms), \Ms \in \mc{M}$ we now show that there
is a sequence of pairwise module merge operations that construct the
intermediate modules $\Ns_j \in \mc{N}(\mc{M})$ while keeping all modules
from $G$ as well as all previous modules $\Ns_i, i<j$.

\begin{lemma}
  Let $G=(V,E)$ be an arbitrary graph, $F$ an optimal module-preserving
  cograph edit set, $H=(V,E \DELTA F)$ the resulting cograph and
  $\mc{M} = \{\Ms_1,\dots,\Ms_n\}$ be the partially ordered
  (w.r.t. inclusion) set of all strong modules of $H$ that are no modules
  of $G$.

  For each $\Ms_i \in \mc{M}$ let
  $\mc{X}(\Ms_i) = \{M_{i,0},\ldots,M_{i,l_i}\}$ and assume that
  $\mc{N}\coloneqq \mc{N}(\mc{M})= \{\Ns_1,\ldots,\Ns_m\}$.  Note, each
  $\Ns_{l}$ coincides with some $\Ms_i(j)=\bigcup_{k=0}^j M_{i,k}$.  We
  define $F_{\Ms_i(j)} \subseteq F$ as the set
  \begin{equation*}
    F_{\Ms_i(j)} \coloneqq \{\{x,v\} \in F \mid x \in \Ms_i(j), v \in
    P_{\Ms_i} \setminus \Ms_i(j) \}.
  \end{equation*}

  Furthermore, set $G'_0=G$ and for each $1 \le l \le m$ define
  $G'_l = G'_{l-1} \DELTA \theta_l$ with
  \begin{equation*}
    \theta_l = \begin{cases}
      \ \emptyset &\mbox{,\ if } \Ns_l \text{ is a module of } G'_{l-1}\\
      \ F_{\Ns_l} \setminus \bigcup_{k=1}^{l-1} \theta_k &\mbox{,\ otherwise.}
    \end{cases}
  \end{equation*}
  If $\Ns_l$ is no module of $G'_{l-1}$, then $\theta_l$ contains exactly
  those edits that affect the out-neighborhood of $\Ns_l=\Ms_i(j)$ within
  $G[P_{\Ms_i}]$ that have not been used so far.
	
  The following statements are true for the intermediate graphs $G'_l$,
  $1 \le l \le m$:
  \begin{enumerate}
  \item Any set $\Ns_k$ is a module of $G'_l$ for all $k \le l$.
  \item Any module $M'$ of $G$ is a module of $G'_l$, i.e.,
    $\bigcup_{k=1}^{l} \theta_k$ is module-preserving.
  \item Either $G'_{l-1} \simeq G'_{l}$, or there are two modules
    $M_1,M_2 \in G'_{l-1}$ such that $M_1 \merge M_2 \to \Ns_l$ is a
    pairwise module merge w.r.t. $G'_l$.
  \end{enumerate}
  \label{lem:merge-order-3}
\end{lemma}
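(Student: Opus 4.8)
The plan is to prove Statements 1--3 simultaneously by induction on $l$. The base case $l=0$ is trivial, since $G'_0=G$ and Statements 1 and 3 are vacuous. For the inductive step fix $l\ge 1$, write $\Ns_l=\Ms_i(j)=\bigcup_{k=0}^{j}M_{i,k}$ and $P:=P_{\Ms_i}$, and assume all three statements for $G'_0,\dots,G'_{l-1}$. If $\Ns_l$ is already a module of $G'_{l-1}$, then $\theta_l=\emptyset$, so $G'_l=G'_{l-1}$ and all three statements hold at once (Statement 1 with $k=l$ because $\Ns_l$ is a module of $G'_{l-1}=G'_l$ by assumption, and Statement 3 via the alternative $G'_{l-1}\simeq G'_l$). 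So assume $\theta_l\ne\emptyset$; then $\theta_l=F_{\Ns_l}\setminus\bigcup_{k<l}\theta_k$, the sets $\theta_1,\dots,\theta_l\subseteq F$ are pairwise disjoint, and consequently $F_{\Ns_l}\subseteq\bigcup_{k\le l}\theta_k\subseteq F$. Since $\theta_l$ modifies only edges lying inside $P$ (between $\Ms_i(j)$ and $P\setminus\Ms_i(j)$), and $P$ is a strong module of $G$, hence a module of $G'_{l-1}$ by the inductive Statement 2, $P$ remains a module of $G'_l$.

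Next I would collect the structural facts. By Proposition \ref{prop:merge-partition} and Theorem \ref{thm:all}(T2), each $M_{i,k}\in\mc{X}(\Ms_i)$ is \emph{either} a single child of $P$ in $G$ (hence a strong module of $G$) \emph{or} a union of at least two children of $P$ that is a block of $\Pmax(H[\Ms_i])$; in the second case it is a strong module of $H$ but not a module of $G$, so it lies in $\mc{M}$, and being a proper subset of $\Ms_i$ it equals some $\Ms_{i'}$ with $i'<i$. In both cases each $M_{i,k}$, and hence each $\Ms_i(j')=\bigcup_{k\le j'}M_{i,k}$, is a module of $G'_{l-1}$: in the first case by the inductive Statement 2, in the second because $\Ms_{i'}$ appears as an earlier $\Ns_{k^\ast}$ (with $k^\ast<l$, since the $i'$-block precedes the $i$-block in $\mc{N}$) and one invokes the inductive Statement 1. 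Moreover every $\Ms_i(j')$ is a union of blocks of $\Pmax(H[\Ms_i])$ while $\Ms_i$ is non-prime in the cograph $H$, so by Theorem \ref{thm:all}(T3) every $\Ms_i(j')$ is a module of $H$. Finally, a case analysis on the relation between $\Ms_{i'}$ and $P$ (splitting into $i'=i$, $\Ms_{i'}\subseteq P$, $\Ms_{i'}\cap P=\emptyset$, and $P\subsetneq\Ms_{i'}$), using the partial order on $\mc{M}$, Lemma \ref{lem:module-subg}, and the definition of $\mc{X}$, shows that each $\Ns_k$ with $k<l$, each module $M'$ of $G$, and each $M_{i,k}$ and $\Ms_i(j')$ is nested with, or disjoint from, both $P$ and $\Ns_l=\Ms_i(j)$.

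The technical heart is then a single ``merge-step'' claim: if $X$ is a module of both $G'_{l-1}$ and $H$ that is nested with or disjoint from $P$ and from $\Ms_i(j)$, then $X$ is a module of $G'_l$. To see this, fix $w\notin X$. If $w\notin P$, or $P\subseteq X$, or $X\cap P=\emptyset$, then $\theta_l$ changes no edge between $w$ and $X$, so $X$'s adjacency to $w$ is unchanged and still uniform. Otherwise $X\subsetneq P$ and $w\in P$; let $\alpha,\beta$ be the (uniform) adjacencies between $w$ and $X$ in $G'_{l-1}$ and in $H$. Comparing through $H=G\DELTA F$ and using $\bigcup_{k<l}\theta_k\subseteq F$, one gets that $\{w,x\}\in F\setminus\bigcup_{k<l}\theta_k$ holds for either all $x\in X$ or no $x\in X$ (according to whether $\alpha\ne\beta$ or $\alpha=\beta$). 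In the ``no'' case $\theta_l$ changes nothing between $w$ and $X$; in the ``all'' case, if some but not all of the edges $\{w,x\}$, $x\in X$, lay in $\theta_l$, then a missing one $\{w,x_2\}$ would lie in $F\setminus\bigcup_{k\le l}\theta_k$ while also lying in $F_{\Ns_l}$ — because the nesting of $X$ with $P$ and $\Ms_i(j)$ forces $\{w,x_2\}$ to run between $\Ms_i(j)$ and $P\setminus\Ms_i(j)$ — hence also in $\bigcup_{k\le l}\theta_k$, a contradiction. So either all or none of the $\{w,x\}$ are edited, and $X$ stays a module of $G'_l$.

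With these in hand the three statements follow. For Statement 1 with $k=l$: $\Ms_i(j)$ is a module of $H$, and the edits of $\bigcup_{k\le l}\theta_k$ affecting edges between $\Ms_i(j)$ and $P\setminus\Ms_i(j)$ are exactly those of $F_{\Ns_l}$ — which are also exactly the edits of $F$ affecting those edges — so this adjacency agrees in $G'_l$ and in $H$, whence $\Ms_i(j)$ is a module of $G'_l[P]$ and, by Lemma \ref{lem:module-subg}, of $G'_l$. For Statement 1 with $k<l$ and for Statement 2, apply the merge-step claim to $X=\Ns_k$ (a module of $G'_{l-1}$ by induction and of $H$ since it is a union of blocks of $\Pmax(H[\Ms_{i'}])$) and to $X=M'$ (a module of $G'_{l-1}$ by induction and of $H$ since $F$ is module-preserving), the nesting being supplied by the structural facts. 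For Statement 3, take $M_1=\Ms_i(j-1)$ (which is $\Ns_{l-1}$ if $j\ge 2$ and $M_{i,0}$ if $j=1$, in either case a module of $G'_{l-1}$) and $M_2=M_{i,j}$: their union is $\Ns_l$, $\Ns_l\notin\M(G'_{l-1})$ by the case assumption, and the merge-step claim gives $M_1,M_2\in\M(G'_l)$, so with Statement 1 we obtain the pairwise merge $M_1\merge M_2\to\Ns_l$ w.r.t.\ $G'_l$. I expect the main obstacle to be precisely the bookkeeping behind the merge-step claim — verifying that every relevant module is nested with or disjoint from both $P_{\Ms_i}$ and $\Ns_l$, the awkward sub-case being $P_{\Ms_i}\subsetneq\Ms_{i'}$ for an earlier $\Ms_{i'}$, where $P_{\Ms_i}$ must be traced through a child of $P_{\Ms_{i'}}$ and then a block of $\mc{X}(\Ms_{i'})$ — together with the ``some but not all'' contradiction, which rests on $F_{\Ns_l}\subseteq\bigcup_{k\le l}\theta_k$ and the definition of $F_{\Ns_l}$.
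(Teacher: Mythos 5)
Your plan follows the paper's proof in all essentials: the same induction on $l$, the same observation that $F_{\Ns_l}\subseteq\bigcup_{k\le l}\theta_k$ so that the adjacency between $\Ns_l$ and $P_{\Ms_i}\setminus\Ns_l$ in $G'_l$ already agrees with $H$, the same classification of the elements of $\mc{X}(\Ms_i)$ as children of $P_{\Ms_i}$ in $G$ or earlier elements of $\mc{M}$, and the same witnesses $\Ms_i(j-1)$ (resp.\ $M_{i,0}$) and $M_{i,j}$ for the pairwise merge in Statement~3. Packaging the preservation of old modules into a single ``merge-step claim'' rather than running the $\Ns_k$ and $M'$ cases separately is a purely organizational difference.

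There is, however, one step that fails as literally stated. Your merge-step claim assumes only that $X$ is \emph{nested with} or disjoint from $\Ns_l=\Ms_i(j)$, which admits the configuration $\Ms_i(j)\subsetneq X\subsetneq P_{\Ms_i}$. In that configuration the claim is false: for $w\in P_{\Ms_i}\setminus X$ in the ``all'' case, the edits of $\theta_l$ flip $w$'s adjacency to $X\cap\Ms_i(j)$ but not to $X\setminus\Ms_i(j)$ (the latter pairs are not in $F_{\Ns_l}$), so $X$ would cease to be a module of $G'_l$; your ``forces $\{w,x_2\}$ to run between $\Ms_i(j)$ and $P_{\Ms_i}\setminus\Ms_i(j)$'' is exactly what breaks. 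The proof survives only because this configuration never occurs for the modules you apply the claim to, and that exclusion must be argued explicitly, as the paper does: for a module $M'$ of $G$ with $M'\subsetneq P_{\Ms_i}$, note that $\Ns_l$ is a union of at least two elements of $\Pmax(G[P_{\Ms_i}])$, so Theorem~\ref{thm:all}(T2) together with the strongness of the children forbids $\Ns_l\subsetneq M'\subsetneq P_{\Ms_i}$; for an earlier $\Ns_k=\Ms_{i'}(j')$, the partial order on $\mc{M}$ and the fact that strong modules of $H$ do not overlap force $\Ns_k\subseteq\Ns_l$ or $\Ns_k\cap\Ns_l=\emptyset$ (tracing $\Ms_{i'}$ into a single block of $\mc{X}(\Ms_i)$ when $\Ms_{i'}\subsetneq\Ms_i$). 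Strengthen your ``structural facts'' from ``nested with or disjoint from $\Ns_l$'' to ``contained in or disjoint from $\Ns_l$, unless it contains $P_{\Ms_i}$,'' and the argument closes; the sub-case you flagged as awkward, $P_{\Ms_i}\subsetneq\Ms_{i'}$ with $i'<i$, is in fact vacuous since it would force $\Ms_i\subsetneq\Ms_{i'}$ against the ordering.
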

\begin{proof}
  Before we start to prove the statements, we will first show
  \begin{description}
    \item[Claim 1:] \emph{For each $1 \le l \le m$ it holds that
        $\Ns_l$ is a module of $H$.}

    By construction $\Ns_l = \Ms_i(j) = \bigcup_{k=0}^{j} M_{i,k}$ for some
    $1 \le i \le n$ and $1 \le j \le l_i$ with $M_{i,k} \in \mc{X}(\Ms_i)$.
    Moreover, for each $M_{i,k}$ it holds either that
    $M_{i,k} \in \Pmax{H[\Ms_i]}$ or $M_{i,k}$ is a union of elements in
    $\Pmax{H[\Ms_i]}$.  Therefore, $\Ns_l$ is a union of elements in
    $\Pmax{H[\Ms_i]}$.  Since $\Ms_i$ is a strong non-prime module of $H$,
    Theorem \ref{thm:all}(T3) implies that each union of elements in
    $\Pmax{H[\Ms_i]}$ is a module of $H$ and therefore, $\Ns_l$ is a module
    of $H$, which proves Claim 1.
  \end{description}

  We proceed to prove Statements 1 and 2 for each intermediate graph $G'_l$
  by induction over $l$.  Since $G'_0=G$, the Statements 1 and 2 are
  satisfied for $G'_0$.  We continue to show that Statements 1 and 2 are
  satisfied for $G'_{l+1}$ under the assumption that they are satisfied for
  $G_l$.

  We start to prove Statement 1.  First assume that $\Ns_{l+1}$ is
  already a module of $G'_l$.  Then, by construction it holds that
  $\theta_{l+1} = \emptyset$ and therefore, $G'_{l} = G'_{l+1}$.  Now, by
  induction assumption, it holds that all modules of $G$ and all modules
  $\Ns_k \in \mc{N}$, $k \le l$ are modules of $G'_{l} = G'_{l+1}$. Hence,
  all modules $\Ns_k \in \mc{N}$, $k \le l+1$ are modules of
  $G'_{l+1}$. Hence, if $\Ns_{l+1}$ is already a module of $G'_l$,
    then Statement 1 is satisfied for $G'_{l+1}$.
	
  Now assume that $\Ns_{l+1}$ is not a module of $G'_l$.  For the proof of
  Statement 1, we show first
  \begin{description}
    \item[Claim 2:] \emph{$\Ns_{l+1}$ is a module of $G'_{l+1}$.}
	
    By construction it holds that $\Ns_{l+1}=\Ms_i(j)$ for some
    $1 \le i \le n$ and $1 \le j \le l_i$.  Note that $P_{\Ms_i}$ is a
    module of $G$ and therefore, by induction assumption it is a module of
    $G'_l$.  Since $\theta_{l+1} \subseteq F_{\Ms_i(j)}$ did only affect
    the \out{\Ms_i(j)}-neighborhood within the prime module $P_{\Ms_i}$ of
    $G$ it follows that $P_{\Ms_i}$ is a module of $G'_{l+1}$.  Moreover,
    it holds that $F_{\Ms_i(j)} \subseteq \bigcup_{k=1}^{l+1} \theta_k$.
    Note that $F_{\Ms_i(j)}$ contains all those edits that affect the
    \out{\Ms_i(j)}-neighborhood within the prime module $P_{\Ms_i}$ of $G$.
    Hence, for all $x \in \Ms_i(j)$ and all
    $y \in P_{\Ms_i} \setminus \Ms_i(j)$ it holds that $xy \in E(H)$ if and
    only if $xy \in E(G'_{l+1})$.  The latter arguments then imply that
    $\Ms_i(j)$ is a module of $G'_{l+1}$ and therefore, $\Ns_{l+1}$ is a
    module of $G'_{l+1}$. This proves Claim 2.
  \end{description}

  Now, we proceed with showing
  \begin{description}
    \item[Claim 3:] \emph{$\Ns_k$, $k \le l$ is a module of
        $G'_{l+1}$.}

    Let $\Ns_k = \Ms_{i'}(j')$ and $\Ns_{l+1}=\Ms_i(j)$.  By induction
    assumption it holds that $\Ns_k$ is a module of $G'_l$.  By the
    ordering of elements in $\mc{N}$ it holds that $i' \le i$ and by the
    ordering of elements in $\mc{M}$ it then follows that
    $P_{\Ms_{i'}} \subseteq P_{\Ms_i}$ or
    $P_{\Ms_{i'}} \cap P_{\Ms_i} = \emptyset$.
		
    If $P_{\Ms_{i'}} \cap P_{\Ms_i} = \emptyset$ then $\Ns_k$ is not
    affected by the edits in $\theta_{l+1}$ since they are all within
    $P_{\Ms_i}$ and thus, $\Ns_{k}$ remains a module of $G'_{l+1}$.

    Now consider the case $P_{\Ms_{i'}} \subseteq P_{\Ms_i}$.  For later
    reference, we show
    \begin{description}
      \item[Claim 3':] \emph{$\Ns_k \subseteq \Ns_{l+1}$ or
          $\Ns_{k} \cap \Ns_{l+1} = \emptyset$.}

      If $i'=i$, then $j' < j$ and by construction,
      $\Ms_{i'}(j') \subseteq \Ms_i(j)$ which implies that
      $\Ns_{k} \subseteq \Ns_{l+1}$.  Assume now that $i'<i$ and thus,
      $\Ns_k = \Ms_{i'}(j') \subseteq \Ms_{i'}$.  Since $\Ms_i$ and
      $\Ms_{i'}$ are strong modules of $H$ they cannot overlap.  Therefore,
      and due to the ordering of the elements in $\mc{M}$ it follows that
      either $\Ms_{i'} \subset \Ms_{i}$ or
      $\Ms_{i'} \cap \Ms_{i} = \emptyset$.  If
      $\Ms_{i'} \cap \Ms_{i} = \emptyset$, then
      $\Ns_{k} \cap \Ns_{l+1} = \emptyset$.  If $\Ms_{i'} \subset \Ms_{i}$,
      then there is a module $M' \in \Pmax(H[\Ms_i])$ such that
      $\Ms_{i'} \in M'$, since $\Ms_i$ and $\Ms_{i'}$ are strong modules of
      $H$.  Furthermore, the set $\Ms_i(j)$ is a union of elements in
      $\mc{X}(\Ms_i)$ and for each $M_{i,h} \in \mc{X}(\Ms_i)$ it holds
      that either $M_{i,h} \in \Pmax(H[\Ms_i])$ or $M_{i,h}$ is the union
      of elements in $\Pmax(H[\Ms_i])$.  Hence, it follows that either
      $M' \subseteq \Ms_i(j)$ or $M' \cap \Ms_i(j) = \emptyset$.  If
      $M' \cap \Ms_i(j) = \emptyset$, then
      $\Ms_{i'}(j') \cap \Ms_i(j) = \emptyset$ and hence,
      $\Ns_{k} \cap \Ns_{l+1} = \emptyset$.  If, on the other hand,
      $M' \subseteq \Ms_i(j)$, then $\Ms_{i'}(j') \subseteq \Ms_i(j)$ and
      thus, $\Ns_{k} \subseteq \Ns_{l+1}$.  Therefore, in all cases we have
      either $\Ns_{k} \subseteq \Ns_{l+1}$ or
      $\Ns_{k} \cap \Ns_{l+1} = \emptyset$, which proves Claim 3'.
    \end{description}

    By Claim 3', we are left with the following two cases.
    \begin{description}
    \item[\textnormal{Case $\Ns_{k} \subseteq \Ns_{l+1}$.}]  Since
      $\theta_{l+1}$ did not effect edges within $\Ns_{l+1}$ it holds that
      $G'_l[\Ns_{l+1}] \simeq G'_{l+1}[\Ns_{l+1}]$.  By induction
      assumption, $\Ns_{k}$ is a module of $G'_l$ and hence, of
      $G'_l[\Ns_{l+1}] = G'_l[\Ms_i(j)]$.  Thus, $\Ns_{k}$ is a module of
      $G'_{l+1}[\Ms_i(j)]$.  Now, since $\Ns_{l+1}$ is a module of
      $G'_{l+1}$ and by Lemma \ref{lem:module-subg} it follows that
      $\Ns_{k}$ is a module of $G'_{l+1}$.
    \item[\textnormal{Case $\Ns_{k} \cap \Ns_{l+1} = \emptyset$.}]
      Recall that $\Ns_{k}=\Ms_{i'}(j')$ and $\Ns_{l+1}=\Ms_i(j)$ by
        the fact that $i'\le i$.  Moreover, as shown in the proof of
        Claim 2, we have
      $F_{\Ms_i(j)} \subseteq \bigcup_{k=1}^{l+1} \theta_k$.  Therefore,
      for all $x \in \Ms_i(j)$ and all $y \in \Ms_{i'}(j')$ it holds that
      $xy \in E(H)$ if and only if $xy \in E(G'_{l+1})$.  Now let
      $y,y' \in \Ms_{i'}(j')$ and $x \not\in \setminus \Ms_{i'}(j')$.
      Since $\Ms_{i'}(j')$ is a module of $H$, $xy$ as well as $xy'$ are
      either both edges $H$ or both are non-edges in $H$.

      If $x \in \Ms_i(j)$, then there are no further edits
      $F \setminus F_{\Ms_i(j)}$ that may affect any of these edges, since
      $F_{\Ms_i(j)} \subseteq \bigcup_{k=1}^{l+1} \theta_k$.  Thus,
      $xy \in E(G'_{l+1})$ if and only if $xy' \in E(G'_{l+1})$.

      If $x \not\in \Ms_i(j)$, then $xy$ as well as $xy'$ are not affected
      by $\theta_{l+1}$.  Hence, $xy' \in E(G'_{l+1})$ if and only if
      $xy' \in E(G'_l)$.  By induction assumption, $\Ms_{i'}(j')$ is a
      module of $G'_l$ and hence, $xy \in E(G'_l)$ if and only if
      $xy' \in E(G'_l)$ and therefore, $xy \in E(G'_{l+1})$ if and only if
      $xy' \in E(G'_{l+1})$.  Hence, $\Ns_{k}=\Ms_{i'}(j')$ is a module of
      $G'_{l+1}$, which proves Claim 3.
    \end{description}
  \end{description}

  By Claim 1, 2 and 3, Statement 1 is satisfied for $G'_{l+1}$.  We
  continue to prove Statement 2 and assume that $M'$ is a module of $G$ and
  by induction assumption $M'$ is a module of $G'_l$.
	
  Again, let $\Ns_{l+1}=\Ms_i(j)$ and consider the module $P_{\Ms_i}$ of
  $G$.  Since $P_{\Ms_i}$ is strong in $G$, it cannot overlap $M'$.  Thus,
  either $M' \cap P_{\Ms_i} = \emptyset$, or $P_{\Ms_i} \subseteq M'$, or
  $M' \subset P_{\Ms_i}$.
	
  If $M' \cap P_{\Ms_i} = \emptyset$ or $P_{\Ms_i} \subseteq M'$ then $M'$
  is not affected by the edits in $\theta_{l+1}$ since they are all within
  $P_{\Ms_i}$ and thus, $M'$ remains a module of $G'_{l+1}$.
	
  Hence, we only have to consider the case $M' \subset P_{\Ms_i}$. We
    show
  \begin{description}
    \item[Claim 4:] \emph{Either $M' \subseteq \Ns_{l+1}$ or
        $M' \cap \Ns_{l+1} = \emptyset$.}

    Note again, that the set $\Ms_i(j)$ is a union of elements in
    $\mc{X}(\Ms_i)$ and for each $M_{i,h} \in \mc{X}(\Ms_i)$ it holds that
    either $M_{i,h} \in \Pmax(G[P_{\Ms_i}])$ or $M_{i,h}$ is the union of
    elements in $\Pmax(G[P_{\Ms_i}])$.  Hence, $\Ms_i(j)$ is a union of
    elements in $\Pmax(G[P_{\Ms_i}])$.  Theorem \ref{thm:all}(T2) implies
    that no union of elements in $\Pmax(G[P_{\Ms_i}])$ of the prime module
    $P_{\Ms_i}$ is a module of $G$ and thus, $\Ms_i(j)$ cannot be a proper
    subset of $M'$.  Therefore, either $M' \subseteq \Ms_i(j)$ or
    $M' \cap \Ms_i(j) = \emptyset$ or $M'$ and $\Ms_i(j)$ overlap.
    However, the latter case cannot occur, since then $M'$ would either
    overlap one of the strong modules in $\Pmax(G[P_{\Ms_i}])$ or be a
    union of elements in $\Pmax(G[P_{\Ms_i}])$.  Thus, in all cases either
    $M' \subseteq \Ns_{l+1}$ or $M' \cap \Ns_{l+1} = \emptyset$, which
      proves Claim 4.
  \end{description}

  Now the same argumentation that was used to show Statement 1 can be used
  to show Statement 2.  Thus, Statement 2 is satisfied for $G'_{l+1}$.
	
  Finally, we prove Statement 3. To this end, assume that
  $G'_{l} \not\simeq G'_{l+1}$ and that $\Ns_{l+1}$ is no module of $G'_l$.
  We show that there are modules $M_1,M_2 \in G'_l$ with
  $M_1 \merge M_2 \to \Ns_{l+1}$ being a pairwise module merge
  w.r.t. $G'_{l+1}$.  Clearly, Items (ii) and (iii) of
  Def. \ref{def:module-merge} are satisfied, since $\Ns_{l+1}$ is a module
  of $G'_{l+1}$ but no module of $G'_l$.  It remains to show that there are
  two modules $M_1,M_2 \in G'_l$ with $M_1 \cup M_2 = \Ns_{l+1}$ and
  $M_1,M_2 \in G'_{l+1}$, i.e., Item (i) of Def. \ref{def:module-merge} is
  satisfied.  Note, $\Ns_{l+1} = \Ms_i(j)$ for some $i$ and $j\ge
  1$. Assume first that $j=1$.  Then, $\Ms_i(1)=M_{i,0} \cup M_{i,1}$ with
  $M_{i,0},M_{i,1} \in \mc{X}(\Ms_i)$.  For each $M_{i,h}$ it holds that
  $M_{i,h} \in \Pmax(H[P_{\Ms_i}])$ or $M_{i,h} \in \Pmax(G[P_{\Ms_i}])$.
  If $M_{i,h} \in \Pmax(G[P_{\Ms_i}])$ then $M_{i,h}$ is a module of $G$
  and by Statement 2, a module of $G'_l$ and $G'_{l+1}$.  If $M_{i,h}$ is
  no module of $G$, then $M_{i,h} \in \Pmax(H[P_{\Ms_i}])$ is a new strong
  module of $H$.  Therefore, there exists a $k<i$ such that
  $M_{i,h}=\Ms_k$.  Since $\Ms_k = \Ms_k(l_k)$ and by the ordering of
  elements in $\mc{N}$ it holds that $\Ms_k(l_k) = \Ns_{k'}$ for some
  $k' \le l$.  Thus, by Statement 1, all $M_{i,h}$ and therefore, $M_{i,0}$
  and $M_{i,1}$ are modules of $G'_l$ and $G'_{l+1}$.
	
  Now, assume that $\Ns_{l+1} = \Ms_i(j)$ with $j>1$.  Then,
  $\Ms_i(j) = \Ms_i(j-1) \cup M_{i,j}$.  By the same argumentation as
  before, it holds that $M_{i,j}$ is a module of $G'_l$ and $G'_{l+1}$.
  Moreover, by Statement 1, $\Ms_i(j-1)=\Ns_{l}$ is a module of $G'_l$ and
  $G'_{l+1}$.
	
  Thus, there are modules $M_1,M_2$ of $G'_l$ and $G'_{l+1}$ with
  $M_1 \cup M_2 = \Ns_{l+1}$.  Moreover, since for all
  $\{x,y\} \in \theta_{l+1}$ it holds that either $x \in \Ns_{l+1}$ and
  $y \in P_{\Ms_i}\setminus \Ns_{l+1}$, or vice versa, it follows that
  there are no additional edits contained in $\theta_{l+1}$ besides the
  edits of the module merge $M_1 \merge M_2 \to \Ns_{l+1}$ that transforms
  $G'_l$ into $G'_{l+1}$.
\end{proof}

We are now in the position to derive the main result of this section that
shows that optimal pairwise module-merge is always possible.

\begin{theorem} [Pairwise Module-Merge]
  For an arbitrary graph $G=(V,E)$ and an optimal module-preserving cograph
  edit set $F$ with $H=(V,E \DELTA F)$ being the resulting cograph there
  exists a sequence of pairwise module merge operations that transforms $G$
  into $H$.
  \label{thm:pair-mod}
\end{theorem}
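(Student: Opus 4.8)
The plan is to read the desired sequence off Lemma~\ref{lem:merge-order-3}. Fix a linear extension of the inclusion order on $\mc{M}=\{\Ms_1,\dots,\Ms_n\}$, form the partitions $\mc{X}(\Ms_i)$ and the ordered family $\mc{N}(\mc{M})=\{\Ns_1,\dots,\Ns_m\}$ of Definition~\ref{def:merge-partition}, and build the chain $G=G'_0,G'_1,\dots,G'_m$ with the edit-blocks $\theta_1,\dots,\theta_m$ exactly as in Lemma~\ref{lem:merge-order-3}. By Statement~3 of that lemma, for each $l$ either $G'_{l-1}\simeq G'_l$ (precisely when $\theta_l=\emptyset$), or there are modules $M_1,M_2$ of $G'_{l-1}$ with $M_1\merge M_2\to\Ns_l$ a pairwise module merge with respect to $G'_l$. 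Dropping the identity steps therefore yields a finite sequence of pairwise module merges applied to $G$ whose outcome is $G'_m$, and the theorem follows once we show $G'_m=H$.

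Since $G'_m=G\DELTA E'$ with $E'\coloneqq\bigcup_{l=1}^{m}\theta_l$, and each $\theta_l\subseteq F_{\Ns_l}\subseteq F$ (the set $F_{\Ns_l}$ being a subset of $F$ by definition), we have $E'\subseteq F$; so it is enough to prove $F\subseteq E'$, whence $E'=F$ and $G'_m=G\DELTA F=H$. I would first record that $F=\bigcup_{l}F_{\Ns_l}$: Theorem~\ref{thm:Fmerge} gives $F=\bigcup_{i=1}^{n}F_{\Ms_i}$, and since $\mc{X}(\Ms_i)$ partitions $\Ms_i$ we have $\Ms_i(l_i)=\Ms_i$, so $F_{\Ms_i}=F_{\Ms_i(l_i)}$ is one of the $F_{\Ns_l}$; together with $F_{\Ns_l}\subseteq F$ this yields the claimed equality. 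It therefore suffices to show that every edit appearing in some $F_{\Ns_l}$ ultimately enters some $\theta_k$.

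So take $\{x,w\}\in F$ and let $l$ be minimal with $\{x,w\}\in F_{\Ns_l}$, say $\Ns_l=\Ms_i(j)$; renaming $x,w$ if necessary, $x\in\Ms_i(j)$ and $w\in P_{\Ms_i}\setminus\Ms_i(j)$. Minimality of $l$ forces $x$ to lie in the block added last, i.e.\ $x\in M_{i,j}$ (otherwise $\{x,w\}$ would already lie in $F_{\Ms_i(h)}$ for the earlier block index $h<j$ of $x$). Now follow the chain. Whenever a step $k$ with $\Ns_k=\Ms_i(j')$ has $\theta_k\neq\emptyset$, the construction of $\theta_k$ gives $F_{\Ns_k}\subseteq\bigcup_{k'\le k}\theta_{k'}$, so $\{x,w\}\in E'$ as soon as $\{x,w\}\in F_{\Ms_i(j')}$ for such a $k$. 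But $\{x,w\}\in F_{\Ms_i(j')}$ for \emph{every} $j'$ with $j\le j'\le l_i$ and $w\notin\Ms_i(j')$ (in particular for $j'=l_i$ when $w\notin\Ms_i$), and the sets $\Ms_i(1),\dots,\Ms_i(l_i)$ occupy consecutive positions in $\mc{N}(\mc{M})$; so it remains to see that at least one of the steps producing $\Ms_i(1),\dots,\Ms_i(l_i)$ carries a non-empty $\theta$. This is forced by Statement~1 of Lemma~\ref{lem:merge-order-3}: $\Ms_i=\Ms_i(l_i)$ is a module of $G'_m$, whereas $\Ms_i\notin\M(G)$; since $\Ms_i\subsetneq P_{\Ms_i}$ and $P_{\Ms_i}$ stays a module along the chain, the failure of $\Ms_i$ to be a module of $G$ is witnessed within $P_{\Ms_i}$, and reaching the behaviour of $H$ requires editing, inside $P_{\Ms_i}$, at least one edge between some $\Ms_i(j')$ and $P_{\Ms_i}\setminus\Ms_i(j')$, i.e.\ some $\theta_k\neq\emptyset$ with $\Ns_k=\Ms_i(j')$. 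Applying this recursively (finitely many modules are involved) gives $\{x,w\}\in E'$, hence $F\subseteq E'$ and $G'_m=H$.

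The point requiring the most care is the last paragraph: one must rule out that an edit of $F$ is \emph{permanently} skipped because at the step nominally responsible for it the module $\Ns_l$ happens already to be a module of the running graph (so $\theta_l=\emptyset$). I would make this rigorous by carrying, alongside the induction of Lemma~\ref{lem:merge-order-3}, the extra invariant that every completed set $\Ns_k=\Ms_i(j)$ has, in each later $G'_l$, exactly the same adjacencies to $P_{\Ms_i}\setminus\Ms_i(j)$ as in $H$; this both pins down the eventual correctness of each $\Ns_l$ and allows a skipped edit $\{x,w\}$ to be ``charged'' to the --- necessarily existing --- step at which the enclosing new strong module $\Ms_i$ is turned into a module of the running graph.
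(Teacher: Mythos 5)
Your construction of the merge sequence is the same as the paper's: you invoke Lemma~\ref{lem:merge-order-3} to obtain the chain $G=G'_0,\dots,G'_m$ and read off a pairwise merge at every non-identity step. The divergence --- and the gap --- lies in how you certify that the chain actually terminates in $H$, i.e.\ that $\bigcup_k\theta_k=F$. You reduce this to showing that every $\{x,w\}\in F_{\Ms_i(j)}$ is eventually swept into some $\theta_k$, and you correctly identify the danger: a step $l$ with $\Ns_l=\Ms_i(j')$ may be skipped ($\theta_l=\emptyset$) because $\Ns_l$ happens \emph{already} to be a module of $G'_{l-1}$ --- possibly with an out-neighbourhood inside $P_{\Ms_i}$ that does \emph{not} agree with $H$ --- in which case the not-yet-applied edits of $F_{\Ns_l}$ are never applied at that step. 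Your proposed repair, the invariant that every completed $\Ms_i(j)$ already carries its $H$-adjacencies to $P_{\Ms_i}\setminus\Ms_i(j)$, is precisely the assertion that this bad case cannot occur, and it is not proved; establishing it at the first skipped step (say $\Ms_i(1)=M_{i,0}\cup M_{i,1}$ accidentally a module of the running graph after edits performed for unrelated $\Ms_{i'}$, $i'<i$) is essentially the whole difficulty, so the argument is circular as it stands. A second, smaller problem: even granting that \emph{some} step $\Ns_k=\Ms_i(j')$ has $\theta_k\neq\emptyset$, this captures $\{x,w\}$ only if $j\le j'$ and $w\notin\Ms_i(j')$; a non-empty $\theta$ occurring before $x$'s block is added, or after $w$'s block has been absorbed into $\Ms_i(j')$, does not contain $\{x,w\}$, so your reduction to ``at least one non-empty $\theta$ among the $\Ms_i$-steps'' is not sufficient.

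The paper closes this gap by an entirely different mechanism that avoids tracking individual edits. Set $F'=\bigcup_k\theta_k\subseteq F$ and suppose $F'\subsetneq F$. Since $F$ is optimal, $H'=G\DELTA F'$ cannot be a cograph, so $H'$ contains an inclusion-minimal prime module $M$. Using Statements 1 and 2 of Lemma~\ref{lem:merge-order-3}, every strong module of $H$ is a module of $H'$; from this one deduces that $M$ is a module of $H$, that each child of $M$ in $H'$ is a union of children of $M$ in $H$ and hence a module of $H$, and that two such children $M'_k,M'_l$ have all their mutual (non)edges flipped relative to $H$. Removing the non-empty set $F''=\{\{x,y\}\mid x\in M'_k,\,y\in M'_l\}$ from $F$ still yields a cograph on $G[M]$, contradicting the optimality of $F[M]$ via Lemma~\ref{lem:edit-in-prime-modules}. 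If you wish to keep your direct charging argument you would have to prove your invariant from scratch; otherwise this optimality-based contradiction is the step your proof is missing.
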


\begin{proof}
  Set $\mc{M} = \{\Ms_1, \ldots, \Ms_n\}$,
  $\mc{N} = \{\Ns_1, \ldots, \Ns_m\}$,
  $\mc{X}(\Ms_i) = \{M_{i,0}, \ldots, M_{i,l_i}\}$, as well as $\theta_k$
  and $G'_k$ for all $1 \le k \le m$ as in Lemma \ref{lem:merge-order-3}.
  Again, we set $G_0 \coloneqq G$ and $H' \coloneqq G_m$.  By Lemma
  \ref{lem:merge-order-3} for each $1 \le k \le m$ there is a pairwise
  module merge $M_1 \merge M_2 \to \Ns_k$ that transforms $G_{k-1}$ to
  $G_k$.  Thus, there exists a sequence of module merge operations that
  transforms $G$ to some graph $H'$.
	
  In what follows, we will show that $\bigcupdot_{k=1}^{m} \theta_k = F$
  and therefore $H' \simeq H$, from which we can conclude the statement.
  For simplicity, we put $F' \coloneqq \bigcup_{k=1}^{m} \theta_k$.

  We start with showing 
  \begin{description}
    \item[Claim 1:] \emph{$F' \subseteq F$.}

    Note first that by construction it holds that
    $\theta_k \cap \theta_l = \emptyset$ for all $k \neq l$ and therefore,
    $F'=\bigcup_{k=1}^{m} \theta_k = \bigcupdot_{k=1}^{m} \theta_k$.  By
    construction of $\theta$ it holds that $\theta_k \subseteq F$ for all
    $1 \le k \le m$.  Hence, $F' \subseteq F$.
  \end{description}
	
  Before we show that $F = F'$, we will prove
  \begin{description}
    \item[Claim 2:] \emph{All strong modules of $H$ are modules of
        $H'$.}

    Lemma \ref{lem:merge-order-3}(1) implies that all modules $M'$ of $G$
    are modules of $H'$.  Moreover, Lemma \ref{lem:merge-order-3}(2)
    implies that all $\Ns_k \in \mc{N}$ are modules of $H'$.  Since for all
    $\Ms_i \in \mc{M}$ it holds that $\Ms_i = \Ms_i(l_i) = \Ns_k$ for some
    $1 \le k \le m$, the set $\Ms_i$ is a module of $H'$.  Since each
    strong module of $H$ is either a module of $G$ or a new module
    $\Ms_i \in \mc{M}$, all strong modules of $H$ are modules of $H'$.
  \end{description}

  We continue to show
  \begin{description}
    \item[Claim 3:] \emph{$F' \subsetneq F$ is not possible.}

    By Claim 1, $F' \subseteq F$. Thus assume for contradiction that
      $F'\neq F$. Since $F$ is an optimal edit set and $F' \subsetneq F$
    it follows that $H'$ is not a cograph.  Thus, there exist a prime
    module $M$ in $H'$ that contains no other prime module.

    We will now show that $M$ is a module of $H$ and that all
    $M_i \in \Pmax(H[M])$ are modules of $H'$.  Therefore, consider the
    strong module $P_M$ of $H$ that entirely contains $M$ and that is
    minimal w.r.t. inclusion.  Since $P_M$ is strong in $H$ it is, by
      Claim 2, also a module of $H'$.  Moreover, each module
    $M_i \in \Pmax(H[P_M])$ is strong in $H$ and, again by Claim 2, a
    module of $H'$ as well.  If $P_M=M$, then $M$ is a module of $H$ and we
    are done.  Assume now that $M \subsetneq P_M$.  Note that since $M$ and
    all $M_i \in \Pmax(H[P_M])$ are modules of $H'$ and $M$ is strong in
    $H'$ it holds that $M$ does not overlap any $M_i \in \Pmax(H[P_M])$.
    Moreover, $M \not\subseteq M_i$ since otherwise $M_i$ would have been
    chosen instead of $P_M$.  Thus, $M = \bigcup_{i \in I} M_i$ is the
    union of some elements $M_i$ in $\Pmax(H[P_M])$.  Since $P_M$ is a
    non-prime module of $H$ it follows by Theorem \ref{thm:all}(T3) that
    $M$ is a module of $H$.  Since $H$ is a cograph, the children
    $M_i \in \Pmax(H[P_M])$ of the non-prime module $P_M$ are the connected
    components of either $H[P_M]$ (if $P_M$ is parallel) or its complement
    $\overline{H[P_M]}$ (if $P_M$ is series).  Since
    $M = \bigcup_{i \in I} M_i$ is the union of some elements in
    $\Pmax(H[P_M])$ and $H[M] \subseteq H[P_M]$, we can conclude that
    $H[M]$, resp.\, its complement $\overline{H[M]}$, has as its connected
    components $M_i$, $i \in I$.  Thus,
    $\Pmax(H[M]) \subset \Pmax(H[P_M])$.  Hence, all $M_i$, $i\in I$ are
    strong modules in $H$ and, by the discussion above, all $M_i$ are
    modules of $H'$.
	
    Since all $M_i \in \Pmax(H[M])$ are modules of $H'$ and all
    $M'_j \in \Pmax(H'[M])$ are strong in $H'$, it holds that no
    $M_i \in \Pmax(H[M])$ can overlap any $M'_j \in \Pmax(H'[M])$.
    Therefore, if $M_i \cap M'_j \neq \emptyset$ then either
    $M'_j \subsetneq M_i$ or $M_i \subseteq M'_j$ for any $i$ and $j$.  If
    $M'_j \subsetneq M_i$ then $M_i$ must be the union of some elements in
    $\Pmax(H'[M])$.  However, since $M$ is prime in $H'$ no union of
    elements in $\Pmax(H'[M])$, besides $M$ itself, is a module of $H'$
    (cf. Theorem \ref{thm:all}(T2)).  Thus, $M_i$ cannot be a module of
    $H'$; a contradiction.  Hence, $M_i \subseteq M'_j$ and therefore, each
    $M'_j$ is the union of some elements in $\Pmax(H[M])$.  Note that this
    holds for any $M'_j \in \Pmax(H'[M])$, i.e., there are distinct sets
    $I_1,\ldots,I_{|\Pmax(H'[M])|}$ with
    $I_j \subsetneq \{1,\ldots,|\Pmax(H[M])|\}$ such that
    $M'_j = \bigcup_{i \in I_j} M_i$.  Hence, all $M'_j$ are modules of
    $H$.
	
    Since, $M$ is prime in $H'$ and $M$ did not contain any other prime
    module, it holds that all $H'[M'_j]$ are cographs.  Moreover, since all
    $M'_j$ are modules in $H$ and $M$ is prime in $H'$ it holds that there
    are at least two distinct $M'_k, M'_l \in \Pmax(H'[M])$ with
    $xy \in E(H')$ if and only if $xy \not\in E(H)$.  Thus,
    $F'' = \{\{x,y\} \mid x \in M'_k, y \in M'_l \} \subseteq F$.  Now,
    since all $H'[M'_j]$ are cographs it holds that $H'[M'_k \cup M'_l]$ is
    a cograph.
	
    Now, consider the graph $H'' = G \DELTA F \setminus F''$, and in
    particular the subgraph $H''[M] = G[M] \DELTA F[M] \setminus F''$.
    Again, since all $H'[M'_j]$ with $M'_j \in \Pmax(H'[M])$ are cographs
    it holds that $H[M'_j] \simeq H'[M'_j] \simeq H''[M'_j]$.  By
    construction of $F''$ for the previously chosen $M'_k$ and $M'_l$ it
    holds that $H'[M'_k \cup M'_l] \simeq H''[M'_k \cup M'_l]$ as well as
    $H[M \setminus (M'_k \cup M'_l)] \simeq H''[M \setminus (M'_k \cup
    M'_l)]$ is a cograph.  Moreover, since for all $x \in M'_k \cup M'_l$
    and all $y \in M \setminus (M'_k \cup M'_l)$ we have $xy \in E(H)$ if
    and only if $xy \in E(H'')$ it holds that $H''[M]$ is a cograph as
    well.  Note that $F'' \subseteq F[M]$ and $F'' \neq \emptyset$ and
    therefore, $|F[M] \setminus F''| < |F[M]|$.  But then, since
    $F[M] \setminus F''$ is an edit set for $G[M]$ and by Lemma
    \ref{lem:edit-in-prime-modules} the set $F$ is not optimal; a
    contradiction.  Thus, $F'$ cannot be a proper subset of $F$, which
      proves Claim 3.
  \end{description}

  Claim 1 and 3 immediately imply that $F=F'$.  In particular, we
  have
  $F'=\bigcupdot_{i=1}^{n}\bigcupdot_{j=1}^{l_i} \theta'_{\Ms_i(j)} =
  \bigcupdot_{i=1}^{n}\bigcupdot_{j=1}^{l_i} \theta_{\Ms_i(j)}= F$.
\end{proof}

It can easily be seen by the latter results that each of the modules in
$\mc{N}(\mc{M}) = \{\Ns_1, \ldots, \Ns_m\}$ that is created by a pairwise
module merge is either already a module of $G$, or a union of elements from
$\Pmax(G[M])$ of some prime module $M$ of $G$.

\subsection{A modular-decomposition-based Heuristic for Cograph Editing}
\label{subsec:algo}

\begin{algorithm}[htbp]
\caption{Pairwise Module Merge}
\label{alg:MM}
\begin{algorithmic}[1]\footnotesize
\STATE \textbf{INPUT:} A graph $G=(V,E)$.
\STATE $\Gs \gets G$;
\STATE $\Fs\gets \emptyset$;
\STATE $\MD(G) \gets$ \texttt{compute-modular-decomposition($G$)}.
\STATE $P_1,\dots, P_m$ be the prime modules of $G$ that are partially
ordered w.r.t. inclusion, i.e., $P_i \subseteq P_j$
implies $i \le j$.
\FOR{$p=1,\dots,m$}
\STATE $\mc{P}_p \gets \Pmax(G[P_p])$
\WHILE{$\Gs[P_p]$ is not a cograph}
\STATE $M_i,M_j\gets$\texttt{get-module-pair($\mc{P}_p$)}.
\COMMENT{according to Theorem \ref{thm:pair-mod}}
\IF{$M_i \cup M_j$ is no module of $\Gs$}
\STATE $\theta\gets$ \texttt{get-module-pair-edit($M_i\merge M_j \to N
\textnormal{ w.r.t. } G[P_p]$)}
\COMMENT{according to $\theta_l$ in Lemma \ref{lem:merge-order-3}}
\STATE $\Gs \gets \Gs \Delta\ \theta$
\ENDIF
\STATE $\mc{P}_p \gets \mc{P}_p \setminus \{M_i,M_j\} \cup \{N\}$
\ENDWHILE
\ENDFOR
\STATE \textbf{OUTPUT:} $H = \Gs$;
\end{algorithmic}
\end{algorithm}

Although the (decision version of the) optimal cograph-editing problem is
NP-complete \cite{Liu:11,Liu:12}, it is fixed-parameter tractable (FPT)
\cite{Protti:09,Cai:96,Liu:12}.  However, the best-known run-time for an
FPT-algorithm is $\mc{O}(4.612^k + |V|^{4.5})$, where the parameter $k$
denotes the number of edits.  These results are of little use for practical
applications, because the parameter $k$ can become quite large.  An exact
algorithm that runs in $O(3^{|V|}|V|)$-time is introduced in
\cite{TWLS:17}. Moreover, approximation algorithms are described in
\cite{NSS:01,DMZ:17}.  In the following we provide an alternative exact
algorithm for the cograph-editing problem based on pairwise module-merge.
The virtue of this algorithm is that it can be adopted very easily to
design a cograph-editing heuristic.

\begin{figure}[tbp]
\begin{center}
\includegraphics[viewport= 65 613 514 782,clip,scale=.8]{./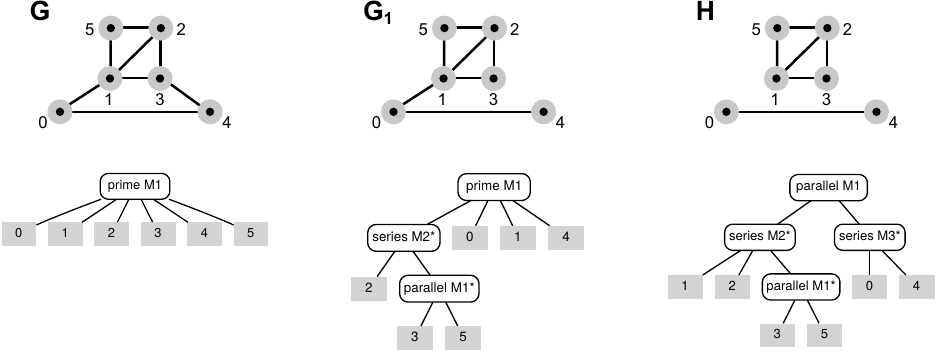}
\caption{
Illustration of Lemma \ref{lem:merge-order}-\ref{lem:merge-order-3},
Thm.\ \ref{thm:pair-mod} and the exact algorithm. Consider the
non-cograph $G$, the cograph $H=G\DELTA F$ and the optimal
module-preserving edit set $F=\{\{0,1\}, \{3,4\}\}$.
The modular decomposition trees are depicted below the respective graphs.
\newline
Let $\mathcal{M} =\{\Ms_1,\Ms_2,\Ms_3\}$ be the inclusion-ordered
set of strong modules of $H$ that are no modules of $G$.
For all modules $\Ms_i\in \mathcal{M}$ the inclusion-minimal module
$P_{\Ms_i}$ is the prime module
$M_1$ in $G$
\newline
In compliance with Lemma \ref{lem:merge-order-2} we start with
constructing the module $\Ms_1$.
By definition $F_{\Ms_1} = \{\{3,4\}\} = \sigma_{\Ms_1}$.
and we obtain $G_1 = G \DELTA \sigma_{\Ms_1}$.
Thus, $\{3\}\protect\merge\{5\}\to\Ms_1$ w.r.t.\ $G_1$.
Next, we continue with $\Ms_2$.
By construction, $F_{\Ms_2} = \{\{0,1\}, \{3,4\}\}$ and $\sigma_{\Ms_2}
= F_{\Ms_2} \setminus F_{\Ms_1} = \{\{0,1\}\}$.
We then obtain $G_2 = G_1 \DELTA \sigma_{\Ms_2} = H$.
Thus, $\protect\merge_{M_i \in \mc{C}(\Ms_2)} M_i \to \Ms_2$ w.r.t.
$G_2=H$.
The module $\Ms_3$ is now obtained for free, since $F_{\Ms_3} =
\{\{0,1\}, \{3,4\}\}$
and $\sigma_{\Ms_3} = F_{\Ms_3} \setminus (F_{\Ms_1} \cup F_{\Ms_2}) =
\emptyset$.
\newline
In compliance with Lemma \ref{lem:merge-order-3}, i.e., when
considering pairwise module merge only,
we start with constructing the module $\Ms_1(1)$.
Here, $\mc{X}(\Ms_1) = \{M_0 = \{3\}, M_1=\{5\}\}$ and
$\Ms_1(1)=\{3,5\}=\Ms_1$.
By definition, $F_{\Ms_1(1)} = \{\{3,4\}\} = \theta_{\Ms_1(1)}$
and we obtain $G_{1,1} = G_1 = G \DELTA \theta_{\Ms_1(1)}$.
Thus, $\{3\}\protect\merge\{5\}\to\Ms_1$ w.r.t.\ $G_{1,1} = G_1$.
Next, we continue with $\Ms_2(1)$ and $\Ms_2(2)$.
Here, $\mc{X}(\Ms_2) = \{M_0=\{1\},M_1=\{2\},M_2 = \Ms_1\}$ and
$\Ms_2(1)=\{1\}\cup\{2\}$
and $\Ms_2(2)=\{1,2,3,5\} = \Ms_2$.
By definition $\theta_{\Ms_2(1)} = F_{\Ms_2(1)} \setminus F_{\Ms_1(1)}
= \{\{0,1\}\}$ comprises the edits
to obtain the new module $\{1,2\}$.
Thus, $\{1\}\protect\merge\{2\}\to\Ms_2(1)$ w.r.t.\ $G_{2,1}$.
Then, since $F_{\Ms_2(2)} = F_{\Ms_2} = \{\{0,1\}, \{3,4\}\}$, we obtain
$\theta_{\Ms_2(2)} = F_{\Ms_2(2)} \setminus (F_{\Ms_1} \cup
\theta_{\Ms_2(1)} = \emptyset$.
Thus, there are no edits left to apply in order to derive at $H$, since
$G_{2,1}=G_{2,2}=G_2=H$.
Again, the module $\Ms_3$ is now obtained for free.
In all steps, we obtained the new modules by merging pairs of existing
modules.
}
\label{fig:algo}
\end{center}
\end{figure}

Algorithm \ref{alg:MM} contains two points at which the choice of a
particular module or a particular pair of modules affects performance and
efficiency. First, the function \texttt{get-module-pair()} returns
two modules of $\mc{P}$ in the correct order of the
sequence of pairwise module merge operations that transforms $G$ into $H$
(cf.\ Theorem \ref{thm:pair-mod}).
Second, subroutine \texttt{get-module-pair-edit()} is used to compute
the edits needed to merge the modules $M_i$ and $M_j$ to a new module
such that these edits affect only the vertices within $P_p$
(cf.\ Lemma \ref{lem:merge-order-3}).

\begin{lemma}
Let $\mathcal P(G)$ be the set of all strong prime modules of $G$ and 
suppose that Algorithm \ref{alg:MM} is applied on
the graph $G$ with $n=|V(G)|$.
If \texttt{get-module-pair()} is an ``oracle'' that always returns the
correct pair $M_i$ and $M_j$ and
\texttt{get-module-pair-edit()} returns the correct edit set $\theta$,
then Alg.\ \ref{alg:MM} computes an
optimally edited cograph $H$ in $O(m\Lambda h(G))\le O(n^2 h(G))$ time,
where $m$ denotes the number of strong prime modules in $G$ and 
$\Lambda=\max_{P \in \mc P(G)}|\Pmax(G[P])|$ is the size of the largest maximal strong partition among all prime modules $P\in \mc P(G)$, and
$h(G)$ is the maximal cost for evaluating \texttt{get-module-pair()}
and \texttt{get-module-pair-edit()}.
\label{lem:algoX}
\end{lemma}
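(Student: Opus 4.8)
The plan is to establish the statement in two stages, \emph{correctness} and \emph{running time}, the first resting almost entirely on Theorem~\ref{thm:pair-mod} together with Lemmas~\ref{lem:merge-order}--\ref{lem:merge-order-3}. Since $F$ is optimal and module-preserving and $H=(V,E\DELTA F)$, Theorem~\ref{thm:pair-mod} yields a sequence of pairwise module merges transforming $G$ into $H$; by Lemmas~\ref{lem:merge-order}--\ref{lem:merge-order-3} this sequence can be grouped by the prime modules $P_1,\dots,P_m$ of $G$ (processed in an order extending inclusion), and inside each $P_p$ it pairwise-merges elements of $\Pmax(G[P_p])$ and modules produced by earlier merges, where every merge edit lies inside $P_p$ and equals the set $\theta_l$ of Lemma~\ref{lem:merge-order-3} (which is empty precisely when the union being formed is already a module). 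Under the hypotheses, \texttt{get-module-pair()} returns exactly the pair $M_i,M_j$ next in this sequence and \texttt{get-module-pair-edit()} returns the corresponding $\theta$, so Algorithm~\ref{alg:MM} performs exactly this sequence; hence $\Gs$ ends up equal to $G\DELTA F=H$, an optimally edited cograph. For termination of the inner \textbf{while}-loop I would note that $M_i\neq M_j$ are distinct blocks of the partition $\mc{P}_p$, so their union $N$ differs from both and is not already in $\mc{P}_p$; thus line~15 strictly decreases $|\mc{P}_p|$, which starts at $|\Pmax(G[P_p])|$, and by Theorem~\ref{thm:pair-mod} the subgraph $\Gs[P_p]$ has become the cograph $H[P_p]$ within at most $|\Pmax(G[P_p])|-1$ passes, at which point the test in line~8 fails.

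For the \emph{running time} I would bound the work line by line. Line~4 and the extraction and inclusion-ordering of $P_1,\dots,P_m$ from $\MDT(G)$ cost $O(n+|E|)\subseteq O(n^2)$. The \textbf{for}-loop has $m$ iterations, and by the termination bound the \textbf{while}-loop for $P_p$ makes at most $|\Pmax(G[P_p])|-1\le\Lambda-1$ passes. A single pass performs one call to \texttt{get-module-pair()} and one to \texttt{get-module-pair-edit()}, i.e.\ $O(h(n))$, plus: testing whether $\Gs[P_p]$ is a cograph (linear time, \cite{Corneil:85,BCHP:08}), testing whether $M_i\cup M_j$ is a module of $\Gs$, applying $\Gs\gets\Gs\DELTA\theta$, and updating $\mc{P}_p$ --- all $O(n^2)$ and hence subsumed by $O(h(n))$ under the benign assumption $h(n)=\Omega(n^2)$. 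Summing gives $O\bigl(\sum_{p=1}^m(|\Pmax(G[P_p])|-1)\,h(n)\bigr)\subseteq O(m\Lambda\,h(n))$, and since $G$ has at most $n$ (strong) prime modules and $\Lambda\le n$ because $\Pmax(G[P_p])$ partitions $P_p\subseteq V$, this is at most $O(n^2 h(n))$.

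The step I expect to be the real obstacle is not the arithmetic but making the correctness argument watertight: one must verify that the rigid ``one prime module at a time'' loop of Algorithm~\ref{alg:MM} actually realizes \emph{some} valid merge sequence of Theorem~\ref{thm:pair-mod}. For this I would use that $\Ms\subseteq\Ms'$ forces $P_{\Ms}\subseteq P_{\Ms'}$, so the map $\Ms\mapsto P_{\Ms}$ is monotone and the linear extension $\mc{M}=\{\Ms_1,\dots,\Ms_n\}$ underlying Lemma~\ref{lem:merge-order-3} can be chosen compatibly with the inclusion order on the prime modules, together with the fact that edits made while processing $P_p$ affect only vertices inside $P_p$ and never the edges from $P_p$ to the rest of the graph, so the relative order in which incomparable prime modules are processed is irrelevant. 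A minor loose end is charging the per-pass polynomial overhead to $h(n)$, which the assumption $h(n)=\Omega(n^2)$ settles (alternatively one carries an additive $O(m\Lambda n^2)\subseteq O(n^4)$ term throughout).
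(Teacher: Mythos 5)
Your proposal is correct and follows essentially the same route as the paper: correctness is delegated to Theorem~\ref{thm:pair-mod} and Lemma~\ref{lem:merge-order-3}, and the running time is obtained by charging $h(n)$ to each of the at most $\Lambda$ merge passes over each of the $m$ prime modules, plus the linear-time modular decomposition. Your treatment is in fact more careful than the paper's (explicit termination of the \textbf{while}-loop and the observation that the per-pass overhead must be absorbed into $h(n)$ or carried as an additive $O(m\Lambda n^2)$ term), but these are refinements of the same argument rather than a different one.
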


\begin{proof}
The correctness of Algorithm \ref{alg:MM} follows directly from
Lemma \ref{lem:merge-order-3} and Theorem \ref{thm:pair-mod}.

The modular decomposition tree of a graph $G=(V,E)$ can be computed in
linear-time, i.e., $O(|V|+|E|) \le O(n^2)$ with $n=|V(G)|$, see \cite{CH:94,DGC:01,CS94,CS:99, TCHP:08}.  
It yields
  the partial order $P_1,\dots, P_m$ of the prime modules of $G$ (line 5)
  in time $O(n)$ by depth first search.  Then, we have to resolve
each of the $m$ prime modules and in each step in the worst case all
modules have to be merged stepwisely, resulting in an effort of
$O(|\Pmax(G[P_p])|)$ merging steps in each iteration.  Since $m\leq n$ and
$\Lambda\leq n$ we obtain $O(n^2 h(G))$ as an upper bound.
\end{proof}

In practice, the exact computation of the optimal editing requires
exponential effort. To be more precise, we show now the complexity
  $h(G)$ as in Lemma \ref{lem:algoX} using a naive brute-force method.
  Given a prime module $P$ with $\lambda=|\Pmax(G[P])|$ child modules there
  are $\lambda \choose 2$ possibilities for selecting the first module pair
  that has to be merged. After merging those two modules there are at most
  $\lambda-1$ modules left from which possibly two more have to be merged.
  In general in the $i$-th merging step there are at most
  $\lambda-i \choose 2$ possible merge pairs left.  This process have to
  repeat at most $(\lambda-4)$ times, since any module with less than four
  child modules cannot be prime.  In the worst case this adds up to
  $\prod_{i=4}^{\lambda} {i \choose 2} = \prod_{i=4}^{\lambda}
  {\frac{i!}{2!(i-2)!}} = \prod_{i=4}^{\lambda} {\frac{i \cdot (i-1)}{2}}$
  merge sequences per prime module of $G$ which gives $O((\lambda!)^2)$
  executions of \texttt{get-module-pair()} per prime module in $G$.
  Finding the optimal edit set for one merge operation of two modules
  $M_1,M_2 \in \Pmax(G[P])$ requires checking the $2^{\lambda-2}$
  combinations to add or remove edges to adjust the \out{M_1}- and
  \out{M_2}-neighbors w.r.t. to the remaining $\lambda-2$ modules.
  Therefore, for each of the remaining modules
  $M \in \Pmax(G[P])\setminus \{M_1,M_2\}$ there are either only edges or
  only non-edges between the vertices from $M$ and $M_1 \cup M_2$.
	In summary, for a given prime module $P$ the graph $G[P]$ can be optimally edited to a cograph in 
	$O((\lambda!)^22^{\lambda})$ time. Therefore, 
  with $\Lambda=\max_{P}|\Pmax(G[P])|$ being the size of the largest maximal strong partition among all prime modules $P$ of $G$,
  it follows that $h(G)\in O((\Lambda!)^22^{\Lambda})$. 
	We note in passing that $\Lambda$ is always less than or equal to the maximum degree in the modular decomposition tree, 
			which is also known as \emph{modular-width} \cite{GLO:13,ALM+17}. 
	Hence, the latter findings together with Lemma \ref{lem:algoX}  imply the following 
			\begin{Obs}
				The optimal cograph editing problem parameterized by the
				modular-width $k$ can be solved in $O((k!)^22^{k} |V|^2)$ time
				and thus, it is in FPT. 
			\end{Obs}

	Practical heuristics for
  \texttt{get-module-pair()} and \texttt{get-module-pair-edit()} can be
  implemented to run in polynomial time. In particular, as a main result,
  we can observe that it is always possible to find an optimal edit set by
  stepwisely merging only \emph{pairs} of modules.  
  Based on this, we
  provide in the following several strategies to improve the runtime of
  these heuristics.  

A simple greedy strategy yields a heuristic with $O(|V|^3)$ time complexity
as follows: In each call of \texttt{get-module-pair()} select the pair
$(M_i,M_j)$ in $\mc{P}$ where the edit set that adjusts the \out{M_i}- and
\out{M_j}-neighbors so that the \out{M_i \cup M_j}-neighborhood becomes
identical in $\Gs[P_p]$ has minimum cardinality.  This minimum edit set can
be obtained from \texttt{get-module-pair-edit()} by adjusting only the
out-neighbors of the smaller module to be identical to the out-neighbors of
the larger module.  The pseudocode for this heuristic is given in Algorithm
\ref{alg:heuristic} which is, in fact, a natural extension of the exact
Algorithm \ref{alg:MM}.  A detailed numerical evaluation will be discussed
elsewhere.

\begin{algorithm}[htbp]
\newcommand{\LINECOMMENT}[1]{\STATE \(\vartriangleright\)#1}
\caption{Pairwise Module Merge Heuristic}
\label{alg:heuristic}
\begin{algorithmic}[1]\footnotesize
\STATE \textbf{INPUT:} A graph $G=(V,E)$.
\STATE $\Gs \gets G$;
\STATE $\MD(G) \gets$ \texttt{compute-modular-decomposition($G$)}.
\STATE $P_1,\dots, P_m$ be the prime modules of $G$ that are partially
				ordered w.r.t. inclusion, i.e., $P_i\subseteq P_j$
				implies $i \le j$.
\STATE $A \gets$ zero initialized $|\MD(G)| \times |\MD(G)|$ matrix
\STATE $B \gets$ zero initialized $|\MD(G)| \times |\MD(G)| \times |\MD(G)|$ matrix
\LINECOMMENT Lines 8 to 15: Initialize $A$ where the entries $A_{ij}$ store the number $|V \setminus \{M_i \cup M_j\}|$  of vertices 
						 that need to be adjusted to merge the modules $M_i$ and $M_j$. 
             Initialize $B$ s.t.\ $B_{ijk}=1$ iff $M_i$ and $M_j$ have different out-neighborhoods w.r.t.\ $M_k$
\FOR{ each $\{M_i,M_j,M_k\} \in \binom{\MD(G)}{3}$ with $M_i,M_j,M_k$ being children of one and the same prime module $P$}
	\STATE $\textbf{if } $\out{M_i}$ \cap M_k \neq $\out{M_j}$ \cap M_k \textbf{ then } B_{ijk},B_{jik} \gets 1 \textbf{ end if}$
 	\STATE $\textbf{if } $\out{M_i}$ \cap M_j \neq $\out{M_k}$ \cap M_j \textbf{ then } B_{ikj},B_{kij} \gets 1 \textbf{ end if}$
 	\STATE $\textbf{if } $\out{M_j}$ \cap M_i \neq $\out{M_k}$ \cap M_i \textbf{ then } B_{jki},B_{kji} \gets 1 \textbf{ end if}$
	\STATE $A_{ij},A_{ji} \gets A_{ij} + |M_k| \cdot B_{ijk}$
 	\STATE $A_{ik},A_{ki} \gets A_{ik} + |M_j| \cdot B_{ikj}$
 	\STATE $A_{jk},A_{kj} \gets A_{jk} + |M_i| \cdot B_{jki}$
\ENDFOR
\FOR{$p=1,\dots,m$}
	\STATE $\mc{P} \gets \Pmax(G[P_p])$
	\WHILE{$|\mc{P}|>1$}
		\STATE $\theta \gets \emptyset$
		\COMMENT {$\theta$ denotes the set of (non)edges that will be edited}
		\STATE select two distinct modules $M_i$ and $M_j$ from $\mc{P}$ with $|M_i| \geq |M_j|$ that have a minimum value of $A_{ij}*|M_j|$.
		\LINECOMMENT Line 22 to 26: Compute the edits for adjusting the \out{M_i \cup M_j}-neighborhood s.t.\ $M_j$ has the same out-neighborhood as $M_i$
								 within $G[P_p]$. Note, since $P_p$ is a module of $G$, $M_j$ and $M_i$ have the same out-neighbors in $G$ after editing. 
		\IF{$A_{ij} \neq 0$, i.e., $M_i \cup M_j$ is no module of $\Gs$}
			\FOR {each $M_k \in \mc{P} \setminus \{M_i,M_j\}$}
				\STATE $\textbf{if } B_{ijk}=1 \textbf{ then } \theta \gets \theta \cup \{ xy \mid x \in M_j, y \in M_k\} \textbf{ end if}$
			\ENDFOR
		\ENDIF
		\LINECOMMENT Line 28 to 30: Adjust in $A$ the number of edits needed for merging the new module $M_i \cup M_j$ with some $M_k$
		\FOR{each $M_k \in \mc{P}\setminus \{M_i,M_j\}$}
			\STATE $A_{ik},A_{ki} \gets A_{ik}-|M_j| \cdot B_{ikj}$
		\ENDFOR
		\LINECOMMENT Line 32 to 34: Adjust in $A$ the number of edits needed for merging two modules $M_k$ and $M_l$
		\FOR{each $\{M_k,M_l\} \in \binom{\mc{P}\setminus \{M_i,M_j\}}{2}$}
			\STATE $A_{kl},A_{lk} \gets A_{kl}+|M_j| \cdot B_{kli}-|M_j| \cdot B_{klj}$
		\ENDFOR
		\STATE remove the $j$-th row and column $A$
		\STATE remove the $j$-th layer in all 3 dimensions of $B$
		\STATE in $\mc{P}$ replace $M_i$ with $M_i \cup M_j$
		\STATE $\mc{P}\gets \mc{P}\setminus \{M_j\}$
		\STATE $\Gs \gets \Gs \Delta\ \theta$
	\ENDWHILE
\ENDFOR
\STATE \textbf{OUTPUT:} $H = \Gs$;
\end{algorithmic}
\end{algorithm}

\begin{lemma}
  Algorithm \ref{alg:heuristic} outputs a cograph and has a time
  complexity of $O(|V|^3)$.
\end{lemma}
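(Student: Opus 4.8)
The plan is to argue the two assertions --- that $H$ is a cograph, and the $O(|V|^3)$ bound --- separately.

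\emph{Output is a cograph.} I would prove, by induction along the modular decomposition tree $\MDT(G)$ from the leaves upward, that at termination the induced subgraph $H[L_v]$ is a cograph for every node $v$ of $\MDT(G)$; applied to the root this gives that $H=H[V]$ is a cograph. The base case (leaves) is trivial, and the key structural observation is that \emph{every} edit the algorithm performs lies inside some prime module $P_p$ and, more precisely, runs between two blocks of the current partition $\mc{P}$ of $P_p$. Since $P_p$ and every child of a prime module is a strong module of $G$ (Lemma~\ref{lem:module-subg}), such an edit never changes an edge inside a child of $P_p$, and in particular never changes an edge between two children of a non-prime node. Hence for a parallel, resp.\ series, node $v$ the graph $H[L_v]$ is the disjoint union, resp.\ the join, of the $H[M]$, $M\in\Pmax(G[L_v])$, each a cograph by the inductive hypothesis, so $H[L_v]$ is a cograph. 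For a prime node $v=P_p$ I would analyse the inner \texttt{while}-loop: using the inclusion order $P_1,\dots,P_m$ one first checks that when the outer-loop iteration for $P_p$ starts, the current graph already induces a cograph on each child $M$ of $P_p$ --- every prime module contained in $M$ is a proper subset of $P_p$, hence has smaller index, and by the observation above its induced subgraph is not touched afterwards. Then I would maintain the loop invariant that $\mc{P}$ is a partition of $P_p$ into modules of the current graph $\Gs$ with $\Gs[M]$ a cograph for every $M\in\mc{P}$: after editing all (non-)edges between $M_j$ and exactly the blocks $M_k$ on which $M_i$ and $M_j$ have different out-$M_k$-neighbourhoods (this is precisely $\theta$, as recorded by $B$), one verifies that $M_i$, $M_j$, $M_i\cup M_j$ and every surviving $M_k$ are again modules and that $\Gs[M_i\cup M_j]$ equals $\Gs[M_i]\cupdot\Gs[M_j]$ or $\Gs[M_i]\oplus\Gs[M_j]$ according to whether $M_i,M_j$ are non-adjacent or adjacent (the edit does not touch edges between $M_i$ and $M_j$) --- a cograph in both cases by the invariant. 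When $|\mc{P}|=1$ its unique element is $P_p$, so $\Gs[P_p]=H[P_p]$ is a cograph. This is the same merge mechanics as in Lemma~\ref{lem:merge-order-3} and Theorem~\ref{thm:pair-mod}, but since no optimality is asserted here the argument is self-contained and short.

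\emph{Running time.} Recall $|\MD(G)|\le 2|V|$, and that $\MDT(G)$ together with the quotient graphs is computable in $O(|V|+|E|)=O(|V|^2)$ time, so that adjacency of two modules is available in $O(1)$ and each out-neighbourhood test in lines~8--15 costs $O(1)$. The matrix $A$ has $O(|V|^2)$ entries and $B$ has $O(|V|^3)$ entries, so their allocation and zero-initialisation, and the loop over the at most $\binom{|\MD(G)|}{3}=O(|V|^3)$ relevant triples of modules, together cost $O(|V|^3)$. For the main loop the decisive counting fact is that each module of $G$ is a child of at most one prime module, hence $\sum_{p=1}^{m}|\Pmax(G[P_p])|\le|\MD(G)|=O(|V|)$; since the \texttt{while}-loop for $P_p$ runs $|\Pmax(G[P_p])|-1$ times, the \emph{total} number of merge iterations over all prime modules is $O(|V|)$. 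Each iteration costs $O(|V|^2)$: a minimum search over $O(|\mc{P}|^2)=O(|V|^2)$ pairs; constructing and applying $\theta$, which contains at most $\binom{|P_p|}{2}=O(|V|^2)$ pairs; updating $O(|\mc{P}|^2)=O(|V|^2)$ entries of $A$, each by an $O(1)$ lookup in $B$; and, using lazy deletion of the $j$-th index, $O(|V|)$ for dropping the $j$-th row and column of $A$ and the $j$-th layer of $B$. Hence the main loop runs in $O(|V|)\cdot O(|V|^2)=O(|V|^3)$, which together with the initialisation yields $O(|V|^3)$.

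The one genuinely delicate point --- and the main obstacle --- is verifying the bookkeeping: that after a merge step the updated matrices still satisfy ``$B_{k\ell h}=1$ iff the blocks $M_k,M_\ell$ disagree on $M_h$'' and ``$A_{k\ell}=\sum_h|M_h|\,B_{k\ell h}$'', and that ``$A_{ij}=0$'' indeed certifies that $M_i\cup M_j$ is already a module of $\Gs$. What makes this routine is that the edit touches only (non-)edges incident to $M_j$, so the adjacency of the merged block $M_i\cup M_j$ to any third block agrees with that of $M_i$; consequently no entry of $B$ changes except for the deleted $j$-layer, and the corrections to $A$ in lines~28--34 are exactly the telescoping terms $+|M_j|B_{\cdot\cdot i}-|M_j|B_{\cdot\cdot j}$. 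This has to be checked with care, but it is a direct computation.
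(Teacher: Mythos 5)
Your proof is correct and follows essentially the same route as the paper's: an induction along the inclusion order of the prime modules showing each child block is already a cograph module when its parent prime module is processed, the disjoint-union/join observation for the merged pair, an $O(|V|)$ bound on the total number of merge iterations, $O(|V|^2)$ work per iteration, and $O(|V|^3)$ for initializing $A$ and $B$. Your treatment is in fact slightly more complete in two spots the paper glosses over -- the explicit handling of non-prime nodes of $\MDT(G)$ (needed when the root is not prime) and the cleaner counting $\sum_p|\Pmax(G[P_p])|\le|\MD(G)|=O(|V|)$ -- but these are refinements of the same argument, not a different one.
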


\begin{proof}
  First we show that Algorithm \ref{alg:heuristic} constructs a cograph.
  To this end we show that in each iteration of the main \emph{for}-loop
  (Lines 16 to 41) the corresponding prime module $P_p$ is edited such that
  the resulting subgraph $\Gs[P_p]$ is a cograph and $P_p$ is still a
  module of $\Gs$.

  Due to the processing order of the prime modules $P_1$, \dots, $P_m$
  constructed in Line 4, we may assume that, upon processing a prime module
  $P_p$, the induced subgraphs $\Gs[M], M \in \Pmax(G[P_p])$ are already
  cographs and all $M$ are modules of $\Gs$.  This holds in particular for
  the prime modules that do not contain any other prime module in the input
  graph $G$ and which, therefore, are processed first.  Hence, it suffices
  to show that if all $\Gs[M]$, $M \in \Pmax(G[P_p])$, are already cographs
  and all $M$ are modules in $\Gs$, then executing the $p-th$ iteration of
  the \emph{for}-loop results in an updated intermediate graph $G'$ with
  $G'[P_p]$ being a cograph and $P_p$ as well as all modules
  $M \in \Pmax(G[P_p])$ remain modules of $G'$.

  In Line 17, we define $\mc{P} = \Pmax(G[P_p])$ and therefore, by
  assumption, all $\Gs[M]$, $M \in \mc{P}$ are cographs and all $M$ are
  modules of $\Gs$.  In particular, the two sets $M_i$ and $M_j$ that are
  chosen first (in Line 20) are already cographs.  Moreover, since $M_i$
  and $M_j$ are modules of $\Gs$ if follows that $\Gs[M_i \cup M_j]$ is
  either the disjoint union $\Gs[M_i] \cupdot \Gs[M_j]$ or the join
  $\Gs[M_i] \oplus \Gs[M_j]$ of $\Gs[M_i]$ and $\Gs[M_j]$.  Thus,
  $\Gs[M_i \cup M_j]$ is already a cograph and none of the edges within
  $M_i \cup M_j$ is edited further.  It remains to show that applying the
  edits constructed in Line 24 result in the (new) merged module 
	$M_i \cup M_j$ of $\Gs \Delta \theta$. Note, if $M_i \cup M_j$ is already a module of
  $\Gs$ then Lines 22 to 26 are not executed and therefore,
  $\theta = \emptyset$, which implies that $M_i \cup M_j$ remains a module of
  $\Gs \Delta \theta$.  On the other hand, if $M_i \cup M_j$ is no module
  of $\Gs$ then the \emph{for}-loop in Lines 12 to 26 iterates over all
  modules $M_k$ in $\mc{P} \setminus \{M_i,M_j\}$ and adjusts the edges
  between $M_j$ and $M_k$ to be in accordance to the edges between $M_i$
  and $M_k$. Note that all those edits are within $P_p$. In particular, the
  \out{M_i \cup M_j}-neighborhood was adjusted only between vertices from $M_j$
  and vertices from $P_p \setminus (M_i \cup M_j)$.  After applying
  these edits, $M_i \cup M_j$ is therefore a module in
  $\Gs[P_p] \Delta \theta$. In particular, the \out{P_p}-neighborhood has not
  changed and $P_p$ is therefore a module of $\Gs$ as well as of
  $\Gs \Delta \theta$. Then, it follows by Lemma \ref{lem:module-subg}
  that $M_i \cup M_j$ is a module in $\Gs \Delta \theta$.
  To see that also all $M_k \in \mc{P} \setminus \{M_i,M_j\}$ remain modules in
  $\Gs \Delta \theta$ note first that $\mc{P}$ is a partition of $P_p$ and second,
  that only edges between $M_j$ and $M_k$ are edited for some $M_k \in \mc{P} \setminus \{M_i,M_j\}$.
  Moreover, if a (non)edge between $M_j$ and $M_k$ is edited, then all (non)edges
  $\{xy \mid x \in M_j, y \in M_k\}$ between $M_j$ and $M_k$ are edited.
  Thus all $M_k \in \mc{P} \setminus \{M_i,M_j\}$ remain modules of $\Gs[P_p] \Delta \theta$
  and therefore modules $\Gs \Delta \theta$.

  Now consider the prime module $P_{p+1}$ that is processed in the next
  iteration of the main \emph{for}-loop.  It can be easily seen that for
  $P_{p+1}$ we also have: 
	$\Gs[M], M \in \Pmax(G[P_{p+1}])$  is 
  a cograph and all $M$ are modules of $\Gs$, since all prime
  modules of $G$ that are subsets of $P_{p+1}$ are already processed, and
  therefore, are all those $M$ are non-prime modules of $\Gs$ and form
  cographs $\Gs[M]$.  Hence, by the same argumentation as before,
  $\Gs[P_{p+1}]$ is edited to a cograph by the next execution of the main
  \emph{for}-loop.  Thus, after processing all prime modules of $G$ the
  final graph $H$ is a cograph.

  Next, we show that Algorithm \ref{alg:heuristic} has a time complexity of
  $O(|V|^3)$.  Creating the modular decomposition in Line 3 can be done in
  linear time by the algorithms presented in, e.g.,
  \cite{DGC:01,CS:99,TCHP:08}.  Note that ``linear'' in this context means
	linear in the number of edges, i.e., 
  $O(|V|+|E|) \in O(|V|^2)$.  Initializing the matrices $A$ and $B$
  (Lines 8 to 15) requires time $O(|V|^3)$ since the corresponding
  \emph{for}-loop iterates over every ordered set of 3 strong modules of
  $G$ and there are at most $O(|V|)$ such modules.
  Moreover, checking if the out-neighborhoods of two modules $M_i$ and $M_j$
  w.r.t.\ a third module $M_k$ are identical 
  (the \emph{if}-statements in Lines 9 to 11)  can be done in
  constant time by checking the adjacencies between three arbitrary vertices,
  exactly 	one from each of the three modules. 
  For the remaining Lines 16 to 41 we can consider how often the inner \emph{while}-loop
  (Lines 18 to 40) is executed.  Therefore, note that within each execution always
  two modules are merged and there are $O(n)$ of those merge operations at
  most. This can most easily be seen by considering the matrix $A$ which has
  $\MD(G)$ rows and columns at first with $|\MD(G)| < |V|$.  Each row,
  respectively each column, of $A$ represents a module that is possibly
  selected for merging.  Moreover, within each iteration of the
  \emph{while}-loop, the matrix $A$ is reduced by one row, respectively
  one column.  This leads to no more than $|V|$ many executions of the
  \emph{while}-loop.
  Selecting the two modules $M_i$ and $M_j$ in Line 20
  requires $O(|V^2|)$ time.  Although, the \emph{for}-loop in Lines 23 to
  25 is executed $O(|V|)$ times and each partial edit set that is computed
  in Line 24 might contain more than $O(|V|)$ many edits, the whole edit
  set $\theta$ (constructed within Lines 23 to 25) contains no more
  than $O(|V|^2)$ edits.  Thus, executing Lines 12 to 26 requires
  $O(|V|^2)$ time at most.  Adjusting the matrix $A$ is done in two steps.
  Lines 28 to 30 iterates over $O(|V|)$ many modules $M_k$ and Lines 32 to
  34 iterates over $O(|V|^2)$ many pairs of modules $(M_k,M_l)$. Shrinking
  the matrices $A$ and $B$ in Lines 35 and 36 can technically be done in
  time $O(|V|)$ if we use a labeling function $l \colon \mathbb N \times \mathbb N$
  to index the values within the matrices,
  i.e., instead of reading $A_{ij}$ we read $A_{l(i),l(j)}$.
  Then we just have to relabel those indices,
  i.e., $l(x) \gets l(x)+1$ for all $x > j$.
  In that way we do not have to remove anything from $A$ or $B$.  
  Line 37 and 38 can also be done in $O(|V|)$ time and applying the edits in Line 39
  requires at most $O(|V|^2)$ time.  In summary, executing a single
  iteration of the main \emph{for}-loop requires $O(|V|^2)$ time, which
  yields a total time complexity of $O(|V|^3)$.
\end{proof}

The heuristic as given in Algorithm \ref{alg:heuristic} is deterministic and therefore lacks of a
randomization component which would be helpful in order to sample solutions and construct a consensus
cograph. However, randomization can be introduced easily by selecting a pair of modules $M_i$ and $M_j$
in line 20 with a probability inversely correlated with the value of $A_{ij} \cdot |M_j|$.
Moreover, with probability $p = |M_i| / (|M_i|+|M_j|)$ the edits $\{ xy \mid x \in M_j, y \in M_k\}$
can be selected in line 24 and otherwise $\{ xy \mid x \in M_i, y \in M_k\}$ with probability $1-p$.

An even simpler (but probably less accurate) heuristic with time complexity $O(|V|^2)$ can be obtained
by randomly selecting the next pair of modules $M_i$ and $M_j$ that have to be merged.
Such a procedure would not require the computation of the matrices $A$ and $B$ at all.
Nevertheless, this $O(|V|^2)$-time heuristic requires that computing the edit set $\theta$ can be done in $O(|V|)$ time.
However, this is possible if we only track the $O(|V|)$ many edits on the corresponding quotient graph $\Gs[P_p]/\Pmax(G[P_p])$
and recover the $O(|V|^2)$ many individual edits from that only once in a single post-processing step at the end.

Cograph editing heuristics based on the destruction of P4s requires $O(|V|^4)$ time merely for enumerating all P4s.
Thus, using module merges as editing operation may lead to significantly faster cograph editing heuristics.  

%\section*{Acknowledgements}

%We thank the anonymous referees for their helpful comments. 

% BibTeX users please use one of
%\bibliographystyle{spbasic}      % basic style, author-year citations
%\bibliographystyle{spmpsci}      % mathematics and physical sciences
\bibliographystyle{plain}       % APS-like style for physics
\bibliography{biblio}   % name your BibTeX data base

\end{document}